\newcommand{\RN}[1]{  \textup{\uppercase\expandafter{\romannumeral#1}}}
\newtheorem{proposition}{Proposition}
\newtheorem{theorem}{Theorem}
\newtheorem{corollary}{Corollary}
\newtheorem{lemma}{Lemma}
\newtheorem{assumption}{Assumption}
\theoremstyle{remark}
\newtheorem{example}{Example}
\theoremstyle{remark}
\newtheorem{remark}{Remark}
\newtheorem{definition}{Definition}
\@date \else {\vskip3ex \centering\footnotesize\@date\par\vskip1ex}\fi
\else \@footnotetext{\@setdate}\fi}
\numberwithin{equation}{section}
\numberwithin{definition}{section}
\numberwithin{theorem}{section}
\numberwithin{proposition}{section}
\numberwithin{lemma}{section}
\numberwithin{corollary}{section}}
\newcommand{\1}{\mathbbm{1}}
\title[Testing the Fairness-Accuracy Improvability of Algorithms]{Testing the  Fairness-Accuracy Improvability of Algorithms}
\thanks{We thank Denis Chetverikov, Ivan Canay, Joel Horowitz, Yiqi Liu, Francesca Molinari, Matthew Murphy, Azeem Shaikh, Jann Spiess, and Eisho Takatsuji for helpful comments, as well as conference participants at Bravo/SNSF, MLESC,  and ESIF. We acknowledge National Science Foundation Grants SES-1851629, SES-2149408 and SES-2149422 for financial support. We thank Juri Trifonov for excellent research assistance.}
\author[Eric Auerbach]{Eric Auerbach$^\dag$}
\thanks{$^\dag$Department of Economics, Northwestern University}
\author[Annie Liang]{Annie Liang$^\dag$}
\author[Kyohei Okumura]{Kyohei Okumura$^\dag$}
\author[Max Tabord-Meehan]{Max Tabord-Meehan$^\S$}
\thanks{$^\S$Department of Economics, University of Chicago}
\date{\today} 
\begin{document}

\begin{abstract}
Many organizations use algorithms that have a \emph{disparate impact}, i.e., the benefits or harms of the algorithm fall disproportionately on certain social groups. Addressing an algorithm’s disparate impact can be challenging, however,  because it is often unclear whether it is possible to reduce this impact without sacrificing other objectives of the organization, such as accuracy or profit. Establishing the improvability of algorithms with respect to multiple criteria is of both conceptual and practical interest: in many settings, disparate impact that would otherwise be prohibited under US federal law is permissible if it is necessary to achieve a legitimate business interest. The question is how an analyst can formally substantiate, or refute, this ``necessity’’ defense. In this paper, we provide an econometric framework for testing the hypothesis that it is possible to improve on the fairness of an algorithm without compromising on other pre-specified objectives. Our proposed test is simple to implement and can be applied under any exogenous constraint on the algorithm space. We establish the large-sample validity and consistency of our test, and microfound the test's \emph{robustness to manipulation} based on a game between a policymaker and the analyst. Finally, we apply our approach to evaluate a healthcare algorithm originally considered by \citet{Obermeyer2019-te}, and quantify the extent to which the algorithm's disparate impact can be reduced without compromising the accuracy of its predictions.

\end{abstract}
\maketitle

\section{Introduction}

As algorithms are increasingly used to guide important predictions about people (e.g., which patients to treat, or which borrowers to grant a loan to), substantial concern has emerged about their potential \emph{disparate impact}---namely, that the algorithm's benefits or harms may be born unequally across different social groups. Disparate impact has been empirically documented in a range of applications \citep[see for instance][]{ProPublica,Obermeyer2019-te,ArnoldDobbieHull,Pinkham}, but the organizations that deploy these algorithms also value other objectives such as accuracy and profit. When an algorithm has a disparate impact, is it possible to reduce that disparity without compromising the organization's other objectives? The answer to this question is legally relevant  \citep{barocas2016big}. In many settings, a policy with a disparate impact that would otherwise be prohibited under US federal law is permissible if it is necessary to achieve a legitimate business interest (see Section \ref{subsec:law} for a discussion). As a result, it is essential for both external regulators and internal stakeholders to have tools that can determine when fairness is in conflict with other objectives, and when instead there exist alternative algorithms that do better on all of the specified criteria.  In this paper, we provide an econometric framework and results for establishing, or refuting, the existence of such algorithms. 

Our paper is organized as follows. In the first part,  we formally define improvability with respect to a fairness objective and an accuracy objective (where we use ``accuracy'' as an umbrella term for any objective not directly related to inequities across groups). In the second part, we provide a procedure for testing the improvability of a status quo algorithm given data.  In the third part, we establish the theoretical foundations of our test: We prove its large-sample validity and consistency (under appropriate conditions), and additionally demonstrate that it satisfies a form of robustness to manipulation, which we define in the context of a game between a policymaker and an analyst. Finally, we apply our procedure to test the improvability of a healthcare algorithm studied in \citet{Obermeyer2019-te}.

Section \ref{sec:Framework} presents our conceptual  framework. To define improvability, we build on the algorithmic fairness literature covered in Section \ref{subsec:RelatedLit}, and more specifically the framework of \citet{liang2022algorithmic}. An algorithm is defined to be a mapping from a space of observed covariate vectors (e.g., a medical profile) into a prediction (e.g., a medical diagnosis). There are two predefined groups. Algorithms are evaluated in terms of their accuracy for each group and their fairness across groups. Accuracy for each group is defined to be the group's expected utility with respect to a pre-specified utility function, e.g., the algorithm's correct classification rate or the fraction of patients in this group that get treated. Fairness is possibly based on a different utility function (but need not be); it is measured as the disparity between the expected utilities of the two groups. This very general formulation allows us to nest most objectives that have been proposed in the preceding literature on algorithmic fairness, as well as the broad range of business objectives that could be legally relevant (see Section \ref{sec:Setup} for examples).

Our analysis allows for the analyst to pre-specify a class of permissible algorithms. We define an algorithm to be \emph{$(\Delta_r,\Delta_b,\Delta_f)$-improvable} within this class if some other permissible algorithm simultaneously achieves higher accuracy for each group (respectively by factors of $1+\Delta_r$ and $1+\Delta_b$) and lower  unfairness (by a factor of $1-\Delta_f$). This parameterized definition permits us to move beyond simply testing whether any improvement exists, and instead to examine specific magnitudes of potential gains. Such magnitude considerations are particularly important in legal and policy contexts when the practical significance of a disparate impact reduction is taken into consideration \citep{tobia2017disparate,DOJTitleVI2024}.\footnote{In the context of employment discrimination, \citet{tobia2017disparate} reports that ``The First, Third, and Tenth Circuits oppose practical significance inquiries; the Second, Fourth, Fifth, Sixth, Ninth, and Eleventh Circuits endorse them; and the D.C., Seventh, and Eighth Circuits have no clear precedent.'' In the context of discrimination by an organization receiving federal funding, the Department of Justice is less equivocal. In their Title VI legal manual, they write, ``investigating
agencies must determine whether the disparity is large enough to matter, i.e., is it sufficiently
significant to establish a legal violation. The magnitude of the disparity necessary may be
difficult to define in some cases, but guidance can be drawn both from judicial consideration of
this question and from federal agency guidelines'' \citep{DOJTitleVI2024}. See Section 1.1 and Appendix Section D of this paper for a discussion of the legal framework.}  By varying those parameters in our subsequent statistical procedure, we can test whether the data supports, or fails to reject, different sizes of fairness and accuracy-improvements. 

We suppose that there is a status quo algorithm whose disparate impact is of interest, and the analyst would like to test the null hypothesis that this algorithm is not $(\Delta_r,\Delta_b,\Delta_f)$-improvable. The analyst has access to a sample of individuals, their covariates, and a ``ground truth’’ outcome (i.e., a quantity that determines the optimal algorithm and is known ex-post). Section \ref{sec:Approach} describes our proposed procedure: First, the analyst splits the data into a training set and a test set, and uses the training set to identify a candidate algorithm that potentially improves on the status quo algorithm. Subject to certain regularity conditions, any method for identifying a candidate algorithm---which we call a \emph{selection rule}---can be plugged into this step (see examples in Section \ref{subsec:RelatedLit}). Once a candidate algorithm has been identified, the analyst uses the test set to evaluate whether the simultaneous improvements of the candidate algorithm for both criteria are statistically significant. To avoid computing standard errors case-by-case for each specification of accuracy and fairness, we propose using the bootstrap to compute the critical values of our test. 
Finally, to reduce the uncertainty introduced by sample-splitting, we recommend that the analyst repeats the above process several times and aggregates the results by reporting the median $p$-value across the set of tests. The analyst rejects the null hypothesis whenever the median $p$-value falls below half of the desired significance level.

Our proposed approach overcomes a key practical challenge, which is that algorithms are often exogenously constrained for legal, logistical, or ethical reasons. The exact nature of these constraints may vary across settings, and include (among others) capacity constraints on how many individuals are admitted or treated, shape constraints such as monotonicity of the decision with respect to a given input \citep{pmlr-v108-wang20e,Felders2000,NEURIPS2020_b139aeda}, and statistical constraints such as requiring independence of decisions with a group identity variable \citep{Mitchelletal,barocas-hardt-narayanan}. Sample splitting and the bootstrap ensure that our test procedure is valid regardless of the specific class of algorithms, constraints, or utility functions chosen, subject to the regularity conditions given in our stated assumptions.

 Sections \ref{sec:Full_test} and \ref{sec:RobustManipulation} provide theoretical justifications for our procedure. Section \ref{sec:Full_test} establishes that our test is asymptotically valid and---whenever the selection rule satisfies a condition we call \emph{improvement convergence}---it is also consistent. These results build on a bootstrap consistency lemma that we develop to handle two specific features of our setting: First, our test statistic is constructed using absolute values, and it is well-known that bootstrap consistency fails at points of non-differentiability \citep[see e.g.,][]{fang2019inference}; second, the fact that we impose very few restrictions on the selection rule means that the distribution of our test statistics may vary with the sample size and are not guaranteed to ``settle down" in the limit. Since (to the best of our knowledge) generally available bootstrap consistency results do not cover our exact setting, we develop a new result building on the work of \cite{mammen2012does} (see Appendix \ref{sec:boot_consist} for details).

Section \ref{sec:RobustManipulation} provides a game-theoretic framework to explain how our use of repeated sample-splitting leads to a test that is more robust to manipulation. We study a game between an analyst and a policymaker, where the policymaker chooses the statistical procedure, and the analyst chooses a number of times to repeat this procedure. The analyst then selects one $p$-value to report, which determines if the null hypothesis is rejected. We suppose that the analyst would like to reject the null even when it holds, and define a test to be more robust to manipulation than another if it leads to a lower probability of (incorrect) rejection under the null. We prove that aggregating $p$-values is indeed more robust to manipulation than using a single split. Although this microfoundation is particularly relevant in disparate impact testing, where the analyst may be incentivized to reject the null hypothesis even when it is true, our formal results hold for any valid test and thus apply more broadly.

Finally, to illustrate our proposed procedure, we revisit the data and setting of \citet{Obermeyer2019-te}. The status quo algorithm is one used by a hospital to identify patients for automatic enrollment into a high-intensity care management program. Using our framework and test procedure, we reject the null hypothesis that it is not possible to simultaneously improve on the accuracy and the fairness of the algorithm, and we moreover quantify the extent of possible improvement. We find that large improvements in fairness are possible without compromising accuracy, while we are unable to establish that the reverse is true.

\subsection{Legal framework} \label{subsec:law}
The econometric framework we develop in this paper is broadly applicable whenever a policymaker wants to examine whether it is possible to increase the fairness of a policy without sacrificing other objectives. However, we view our contribution as particularly relevant to regulators and other legal decision makers tasked with evaluating cases of alleged disparate impact under US federal law (see e.g., \citet{barocas2016big},  and  \cite{gillis2019big}, and \citet{yang2020equal}).\footnote{We focus on restrictions of private individuals (as opposed to government agents) from certain policies. As a general rule, discrimination is legally permissible by private actors except when specifically prohibited by law.}  To keep this paper self-contained, we provide an independent summary of the relevant points here. We additionally contribute a brief review of Title VI of the Civil Rights Act of 1964, which is relevant for programs or activities that receive federal funding and covers our empirical application in Section \ref{sec:Application}.

Although there is no singular discrimination law, a body of legislation, regulation, and court precedent prohibits discrimination against certain social groups (such as those defined by race, gender, or religion) called ``protected classes'' in settings as varied as employment, healthcare, education, and lending. US federal law generally distinguishes between two types of discrimination: disparate treatment and disparate impact. A policy has a disparate treatment if it intentionally offers different services based on an individual’s membership in a protected class. It has a disparate impact if one protected class is benefited or harmed more than another by the policy. Our paper pertains to disparate impact, which we broadly review first. Details and specific references to the US Code, the Code of Federal Regulations, and the Department of Justice's Guidelines are deferred to Appendix D. 

A prototypical process for evaluating a disparate impact case has three parts. The first part is to determine whether the benefits or harms disproportionately affect members of a protected class. This is often called a ``prima facie’’ case of discrimination. For this part of the process, there already exist well-established statistical frameworks \citep[see for instance][]{hazelwood1977, shoben1978differential, groves1991}, so we do not focus on it in our paper. 

If the prima facie case is not successful, then the process ends and concludes that there is no disparate impact. If it is successful, then the second part of the process is to determine whether the challenged policy is a business necessity, i.e.,  necessary to achieve some legitimate nondiscriminatory interest. If the business necessity defense is not successful, then the process ends and concludes that the disparate impact is not permissible. If it is successful, then the third and final part of the process is to determine whether there exists a valid alternative that is less discriminatory, but still serves the same interests as the challenged policy. The policy is not permissible if such an alternative exists. Otherwise, it is. Our econometric framework is designed to assess this final part. 

The exact implementation of this three-part process depends on the setting, and in particular on the specific statute, regulation, or court precedent prohibiting discrimination that is allegedly violated. For example, in the context of employment discrimination (which is prohibited by Title VII of the Civil Rights Act of 1964),  the Equal Employment Opportunity Commission and/or private individuals can bring a lawsuit to court to remedy a disparate impact violation. In this setting, the three-part test is adversarial; that is,  there are two opposing parties, where one party is responsible for the evidence and arguments supporting the policy, and the other party is responsible for the evidence and arguments against the policy. In contrast, in the context of discrimination by a program or activity that receives federal funding (which is prohibited by Title VI of the Civil Rights Act of 1964), investigation and enforcement instead falls on the funding agency \citep[see][]{alexander2001}. Though the funding agency’s determination of disparate impact is not typically viewed as adversarial, the Department of Justice still recommends using the three-part test outlined above.\footnote{The Department of Justice is tasked with coordinating the implementation and enforcement of Title VI across federal agencies, by Executive Order
12250, 28 C.F.R. pt. 41, app. A. Their guidelines can be found in the Title VI Legal Manual available at \url{https://www.justice.gov/crt/fcs/T6manual}.}

Twenty-six funding agencies currently enforce Title VI in a variety of contexts such as education, healthcare, and transportation. In Section \ref{sec:Application}, we apply our framework to study a potential disparate impact in access to hospital services. If the hospital receives federal funding from Medicare or Titles VI or XVI of the Public Health Service Act, this discrimination would be prohibited by Title VI of the Civil Rights Act of 1964. In this case, a likely enforcing agency would be the US Department of Health and Human Services (HHS).\footnote{Specific referrals of individual complaints to HHS can be found at \url{https://www.hhs.gov/civil-rights/for-individuals/index.html} and \url{https://www.cms.gov/about-cms/web-policies-important-links/accessibility-nondiscrimination-disabilities-notice}.} As hospitals and other healthcare providers increasingly use algorithms to allocate resources and determine patient care, we view our framework as one that could be used by HHS in order to evaluate allegations of discrimination under Title VI. 

\subsection{Related Literature} \label{subsec:RelatedLit}

Our paper builds on the algorithmic fairness literature: see \cite{chouldechova2018frontiers}, \cite{cowgill2020algorithmic}, or \citet{barocas-hardt-narayanan} for recent overviews.\footnote{Much of this literature is concerned with fair algorithm design. These papers typically build an explicit fairness constraint into the optimization problem, and ask how to find (or approximate) the optimum in a computationally efficient manner (e.g., \citet{dwork2012fairness}, \citet{agarwal2018reductions}, \citet{kusner2019making}, \citet{dwork2018decoupled}).} In particular, our paper is related to those papers that evaluate whether an algorithm with disparate impact can be justified by business necessity. For example, \cite{coston2021characterizing} formulate an optimization problem to find the smallest attainable level of disparate impact over a given set of algorithms that satisfy an additional constraint on business necessity; \cite{viviano2023fair} propose a policy learning procedure to select the fairest policy subject to a constraint on Pareto optimality of the resulting policy; \citet{BlattnerSpiess2023} use a sample-splitting approach (as we do) to find potential fairness and interpretability improvements in lending algorithms; and \cite{gillis2024operationalizing} formulate a mixed integer optimization problem to find the least discriminatory linear classification model, with applications in consumer finance. 

These papers focus on the computational, algorithmic and statistical aspects of how to \emph{ identify} an algorithm that improves on fairness, a problem that we abstract away from in our main text.\footnote{Appendix \ref{sec:Search} presents some novel formulations of optimization problems designed to find less discriminatory alternatives.} Indeed, any of the methods developed in this literature could be used within our proposed procedure as a means to search for candidate algorithms for fairness-improvement. Our focus is instead on the complementary problem of statistical inference. We view our main contribution as  a general purpose blueprint, which allows an analyst to evaluate not just whether an alternative algorithm achieves a better performance on test data, but also whether the joint improvement on fairness and accuracy is statistically meaningful.

To do this, we formalize the property of fairness-improvability as a null hypothesis. We propose a test for this null, and show that (under suitable conditions) it is valid and consistent. Our test involves sample-splitting, an approach with a long history in statistics going back at least to the work of \cite{moran1973dividing} and \cite{cox1975note}. To mitigate the uncertainty introduced by sample splitting, we further recommend that the analyst performs repeated sample-splitting (as in \citet{meinshausen2009p}, \citet{diciccio2020exact}, \citet{wasserman2020universal}, and \citet{ritzwoller2023reproducible}, among others).

In Section \ref{sec:RobustManipulation}, we provide a theoretical argument that repeated sample-splitting makes our test more robust to potential manipulation by the analyst. This section is related to papers that  draw conclusions about optimal research procedures based on models of a strategic researcher \citep{andrews2019identification,FrankelKasy,AndrewsShapiro2021,KitagawaVu2023,Spiess2024}. In particular, our focus on selective reporting of a $p$-value is similar to the concern with $p$-hacking in \citet{jagadeesan2024publication} and \citet{KasySpiess2024}. Different from these papers, we focus specifically on the question of whether manipulation is reduced when $p$-values are aggregated over repeated sample splits. That is, our model does not broadly study the optimal statistical procedure, but rather provides a theoretical justification for the common use of repeated sample-splitting, which is not always justified by more standard econometric properties (e.g., power). 

Finally, a closely related paper is the work of \cite{liu2024inference}, which directly estimates the fairness-accuracy frontier developed by \cite{liang2022algorithmic}. A full characterization of the frontier allows the analyst to answer a broad range of questions related to algorithmic fairness. In particular, once the fairness-accuracy frontier has been characterized,  testing for the fairness-improvability of a status-quo algorithm amounts to testing whether or not the algorithm belongs to the frontier. While our approach focuses on the narrower question of testing for fairness-improvability (rather than characterizing the full frontier), it is more flexible in two ways---(1) it allows for definitions of accuracy and fairness with respect to different utility functions, and (2) it can automatically accommodate (almost) any exogenous constraint on the algorithm space.\footnote{The characterization of the frontier in  \cite{liu2024inference} is developed for the class of all possible algorithms, and (to the best of our knowledge) holds only under specific restrictions on the algorithm space, which rule out global constraints such as capacity constraints and monotonicity constraints.} Both of these features are often relevant for evaluating algorithms that are used in practice. 

\section{Model} \label{sec:Framework}

\subsection{Setup} \label{sec:Setup} Consider a population of individuals, where each individual is described by a covariate vector $X$ taking values in the set $\mathcal{X} \subseteq \mathbb{R}^d$, a type $Y$ taking values in $\mathcal{Y} \subseteq \mathbb{R}$, and a group identity $G$ taking values in $\mathcal{G} =  \{r,b\}$. The covariates in $X$ are the information that the algorithm can access to make a decision, the type $Y$ is a payoff-relevant unknown that the algorithm cannot directly access, and the group identity $G$ is a special covariate (which the algorithm may or may not access\footnote{The algorithm is given only $X$ as input, but we allow for the possibility that $G$ is an element of the vector $X$, or is perfectly correlated with covariates in $X$.}) that is used to evaluate the fairness of the algorithm. We use $P$ to denote the joint distribution of $(X,Y,G)$ across individuals. An \emph{algorithm}  $a : \mathcal{X} \rightarrow \mathcal{D}$ maps covariate vectors into $\mathcal{D}$, the set of possible decisions.\footnote{The map $a$ is often referred to as an assignment rule or policy in econometrics (where the objective of $a$ is usually to make treatment assignment decisions), and as a classifier or prediction rule in machine learning (where the objective of $a$ is typically to predict an unknown outcome). We use the more general term ``algorithm'' to encompass these possibilities and to evoke our motivating applications of algorithmic fairness.} 

We evaluate algorithms from the perspectives of accuracy and fairness, which are defined with respect to an \emph{accuracy utility function}
$u_A: \mathcal{X} \times \mathcal{Y} \times \mathcal{D} \rightarrow \mathbb{R}_+$,  and a (possibly identical)  \emph{fairness utility function}
$u_F: \mathcal{X} \times \mathcal{Y} \times \mathcal{D} \rightarrow \mathbb{R}_+$. To ease exposition, we assume both utility functions are positive-valued.\footnote{This is inessential for our results, but permits an easier statement of Definition \ref{def:DeltaImprove}, where we consider ratios of utilities.} We will consider accuracy and fairness criteria that can be formulated as
\begin{equation} \label{eq:UAUF}U_A^g(a)=\frac{E_{P}[u_A(X,Y,a(X)) \mid G=g]}{E_{P}[w_A(X,Y,a(X))\mid G=g]} \quad \mbox{and} \quad U_F^g(a)=\frac{E_P[u_F(X,Y,a(X)) \mid G=g]}{E_{P}[w_F(X,Y,a(X))\mid G=g]}
\end{equation}
where $w_A$ and $w_F$ are (again positive-valued) normalization functions (see Examples \ref{ex:Classification}-\ref{ex:Profit} for possible specifications). Thus, $U_A^g$ and $U_F^g$  represent normalized expected utilities for group $g$ with respect to the accuracy and fairness utility functions, respectively. When the utility functions and normalization functions are common across accuracy and fairness, we drop the subscripts and simply write $U^g(a)$.

We say that one algorithm is more accurate than another if the value of $U_A^g$ is larger for both groups, and more fair if the value of $U^g_F$ differs less across groups. 
\begin{definition} \label{def:MoreAccurateFair} Algorithm $a_1$ is \emph{more accurate} than algorithm $a_0$ if \[U_A^r(a_1) >  U_A^r(a_0) \quad \mbox{and} \quad U_A^{b}(a_1) > U_A^{b}(a_0)\]
and 
 \emph{more fair} than algorithm $a_0$  if 
 \[ \vert U_F^r(a_1) - U_F^{b}(a_1) \vert < \vert U_F^r(a_0) - U_F^{b}(a_0) \vert.\]
 \end{definition}

We make some remarks on the formulations of accuracy and fairness below.

\begin{remark}[Alternative definitions] In some applications, the analyst may wish to consider a single accuracy measure (e.g., the expected utility in the entire population\footnote{While this is a popular  aggregated measure of accuracy (see e.g., \citet{KleinbergMullainathan}), depending on the application the analyst may prefer other weights on group utilities. For example, if the algorithm is used to make lending decisions and one group is wealthier than the other, then loans for the wealthier group may be of higher value, so that incorrectly classifying creditworthy individuals in this group is more costly. And from a welfare perspective, generalized utilitarian criteria of the form $\alpha_r U_A^r(a) + \alpha_b U_A^b(a)$ (with weights $\alpha_r,\alpha_b>0$ different from population proportions) are widely used \citep{SaezStantcheva16,CharnessRabin02,DworczakKominersAkbarpour21}. We thus focus on our more stringent test of accuracy-improvement in Definition \ref{def:MoreAccurateFair}, which requires improvements for both groups.}), or to require simultaneous fairness improvements according to several different utility functions.\footnote{For example, the analyst may require a fairness improvement not only with respect to the utility function given in Example \ref{ex:Calibration}, but also with respect to the mirrored utility function that conditions on $D=0$. Indeed, the property popularly known as calibration requires improvements according to both of these measures \citep{chouldechova,KMR}.} Our procedure and results can be straightforwardly extended in any of these cases, but we present our results for the three criteria $U_A^r$, $U_A^b$, and $\vert U_F^r-U_F^b\vert$ proposed here.

\end{remark}

\begin{remark}[Two-sided fairness measure] Our fairness measure is ``two-sided'' in that it is the absolute value of the difference in expected utilities for the two groups. This approach makes sense when discrimination is prohibited regardless of which group experiences harm.\footnote{For example, 
Title VII of the Civil Rights Act prohibits workplace discrimination for any group based on race, color, religion, sex, or national origin. The
Equal Credit Opportunity Act (ECOA) similarly requires fair lending practices for all demographic groups.
}$^,$\footnote{We view the two-sided measure as especially appropriate for our setting since algorithmic interventions may inadvertently reverse existing disparities.} In some  contexts, regulations  focus on protecting specific historically disadvantaged groups, and the analyst may modify the fairness measure to consider only scenarios where the protected group receives lower utility than the unprotected group (i.e., using the positive part function rather than the absolute value function). Our framework and results can be adapted to this case.
\end{remark}

\begin{remark}[Implications for social welfare] While Definition \ref{def:MoreAccurateFair} is motivated by legal constraints on algorithmic discrimination, it also aligns with welfare improvements under diverse social welfare criteria. These  include the utilitarian criterion $P(G=r) \cdot U_A^r(a) + P(G=b) \cdot U_A(a)$, the Rawlsian criterion $\min\{U_A^r(a), U_A^b(a)\}$, and any positive linear combination of fairness and accuracy $\alpha_r U_A^r(a) + \alpha_b U_A^b(a) + \alpha_f \vert U_F^r(a)-U_F^b(a) \vert$, among many others (see \citet{liang2022algorithmic} for details).\footnote{See \citet{LiangLu} however for an argument that this framework (and other popular frameworks in the algorithmic fairness literature) are not compatible with social welfare approaches based on ``veil of ignorance'' thought experiments.} Specifically, if one algorithm is simultaneously more fair and more accurate than another, then a social planner with any of these welfare criteria will prefer the former algorithm.
\end{remark}

How $U^g_A$ and $U^g_F$ are defined is application-specific, and we list below example utility specifications from the prior literature (see e.g. \citet{Mitchelletal}), dropping the fairness and accuracy subscripts to simplify notation. Although we refer to $U^g_A$ and $U^g_F$ as (expected) ``utility'' throughout, in many of these examples, what it quantifies has a unit, for example a probability of error or a dollar profit.\footnote{For this reason we do not view invariance to affine transformations as a necessary (or in some cases, desirable) property of our subsequent tests.}

\begin{example}[Classification Rate] The decision is a prediction of $Y$, and both are binary. Let $u(x,y,d) = \mathbbm{1}(y=d)$ be an indicator variable for whether the prediction is correct, and let the normalization function be $w(x,y,d)=1$. Then
\[
U^g(a) = P(Y=a(X) \mid G=g)
\]
is the correct classification rate of algorithm $a$ within group $g$, while  $\vert U^r(a)-U^{b}(a)\vert$ is the disparity in classification rates across groups. \label{ex:Classification}
\end{example}

\begin{example}[Calibration]
The type $Y$ takes values in $\mathbb{R}$ and the decision is binary.
Define $u(x,y,d) = y\mathbbm{1}(d=1)$ and $w(x,y,d)=\mathbbm{1}(d=1)$.
Then 
\[
U^g(a)= E[Y\mid a(X)=1,G=g]\]
is the expected type for individuals receiving decision $a(X)=1$ in group $g$, while $\vert U^r(a)-U^{b}(a)\vert$ evaluates how different these conditional expectations are across groups.
\label{ex:Calibration}
\end{example}

\begin{example}[False Positive Rate] \label{ex:FalsePositive} Both $Y$ and the decision are binary. Define 
$u(x,y,d) = \mathbbm{1}(y=0,d=1)$ and $w(x,y,d)=\mathbbm{1}(y=0)$.
Then 
\[U^g(a)=P(a(X)=1 \mid Y=0,G=g)\]
is the probability of choosing the (wrong) decision $a(X)=1$ for individuals with true type $Y=0$ in  group $g$, while $\vert U^r(a)-U^{b}(a)\vert$ compares the probability of this  error across groups.
\end{example}

\begin{example}[Profit] \label{ex:Profit} There is a good that is offered to an individual at price $d$. The individual's unknown willingness-to-pay is $Y$. The utility function is \[u(x,y,d) = \left\{\begin{array}{cc}
d & \mbox{ if } y \geq d \\
0 & \mbox{ otherwise} \end{array} \right.\]
and the normalization function is $w(x,y,d)=1$. That is, the firm receives the price if the individual's willingness to pay exceeds that price, and otherwise the firm receives nothing.
Then $U^g(a)$ is the average profit the firm receives from group $g$ consumers, and $\vert U^r(a)-U^{b}(a)\vert$ is the difference in average profit between consumers in the two groups.
\end{example}

For a fixed class of algorithms $\mathcal{A}$, the feasible accuracy and unfairness levels are those $(U_A^r, U_A^{b}, \vert U_F^r - U_F^{b}\vert)$ triples that can be achieved by some algorithm in $\mathcal{A}$. Our proposed testing procedure will be agnostic about the specific choice of $\mathcal{A}$, and in particular we will allow for $\mathcal{A}$ to be exogenously restricted in some way. We view this flexibility as important since (as discussed in the introduction) in practice there are often legal, logistical, or ethical constraints on which algorithms can be implemented.

\subsection{Accuracy/Fairness Improvability}\label{sec:af_improve}

There is a \emph{status quo algorithm} $a_0: \mathcal{X} \rightarrow \mathcal D$ representing an algorithm that is already in use, or which has been proposed for use. The status quo algorithm $a_0$ has some level of disparate impact, $|U_F^r(a_0)- U_F^{b}(a_0)|$, which we assume throughout the paper is strictly positive.

\begin{assumption}\label{assp:n} The status quo algorithm is not perfectly fair, i.e., $|U_F^r(a_0)- U_F^{b}(a_0)| > 0$. 
\end{assumption}

This assumption reflects the basic motivation of our paper: If the status quo algorithm has no disparate impact (corresponding to $|U_F^r(a_0)- U_F^{b}(a_0)| = 0$), then there is no reason to look for fairness-improving alternatives. Supposing that Assumption \ref{assp:n} holds, we ask whether it is possible to improve upon the disparate impact of the status quo algorithm without compromising on its accuracy. Formally, we extend Definition \ref{def:MoreAccurateFair} in the following way to quantify the extent of fairness- or accuracy-improvement.

\begin{definition}[Accuracy-Fairness Improvement]\label{def:DeltaImprove} Fix any $\Delta_r,\Delta_{b},\Delta_f \in \mathbb{R}$. The algorithm $a_1$ \emph{$(\Delta_r,\Delta_{b},\Delta_f)$-improves} on the algorithm $a_0$ if
\[\frac{U_A^r(a_1)}{U_A^r(a_0)} > 1 + \Delta_r, \quad \frac{U_A^{b}(a_1)}{U_A^{b}(a_0)} > 1 + \Delta_{b}, \quad \mbox{and} \quad \frac{\vert U_F^r(a_1) - U_F^{b}(a_1)\vert}{\vert U_F^r(a_0) - U_F^{b}(a_0)\vert} < 1 - \Delta_f~.\]
\end{definition}

That is, one algorithm $(\Delta_r, \Delta_b, \Delta_f)$-improves on another if its accuracy for group $r$ is at least $(1+\Delta_r)$ times as high, its accuracy for group $b$ is at least $(1+\Delta_b)$ times as high, and its disparate impact is no more than $(1-\Delta_f)$ times as high.\footnote{Analogous results to those presented in Sections \ref{sec:Approach} and \ref{sec:Full_test} hold under a modification of Definition \ref{def:DeltaImprove}, where we replace the ratios in this definition with differences.}

The special case of $\Delta_r=\Delta_b=\Delta_f=0$ corresponds to whether disparate impact can be reduced at all without compromising on accuracy, and we name it  \emph{FA-dominance}.

\begin{definition}[FA-dominance]
Fix a class of algorithms $\mathcal{A}$. The algorithm $a_0$ is \emph{FA-dominated within class $\mathcal{A}$} if there exists an algorithm $a_1 \in \mathcal{A}$ that $(0,0,0)$-improves on $a_0$, i.e.,
\[\frac{U_A^r(a_1)}{U_A^r(a_0)} > 1, \quad \frac{U_A^{b}(a_1)}{U_A^{b}(a_0)} > 1 , \quad \mbox{and} \quad \frac{\vert U_F^r(a_1) - U_F^{b}(a_1)\vert}{\vert U_F^r(a_0) - U_F^{b}(a_0)\vert} < 1~.\]
\end{definition}

Some courts consider only whether discrimination can be reduced without compromising business necessity, while others examine the magnitude of potential reductions (see \citet{tobia2017disparate} for a discussion of this topic in the context of Title VII workplace discrimination). The importance of magnitude is particularly emphasized in Title VI cases, where the Department of Justice's Title VI Legal Manual explicitly recommends consideration of whether the disparity is ``large enough to matter'' \citep{DOJTitleVI2024}. Our more general Definition \ref{def:DeltaImprove} provides a framework that addresses both approaches: its use in our subsequent test allows us to systematically evaluate, based on observed data, both the existence and extent of feasible improvements.

\section{Proposed Approach} \label{sec:Approach}

We consider an analyst who does not know the joint distribution of $(X,Y,G)$, but instead observes a sample $\{(X_i,Y_i,G_i)\}_{i=1}^n$ consisting of $n$ independent and identically distributed observations from this distribution. The analyst's objective is to test $(\Delta_r,\Delta_b,\Delta_f)$-improvability of a status quo algorithm $a_0$ given such a sample. Formally, the analyst  specifies a set of algorithms $\mathcal{A}$ and tests the null hypothesis
\begin{equation}\label{eq:H0_full}
H_0: \text{Algorithm $a_0$ is not $(\Delta_r,\Delta_b,\Delta_f)$-improvable within class $\mathcal{A}$},
\end{equation}
against the alternative that such an improvement exists. These parameters should be set to whatever values are relevant for the application (or varied over a range of relevant values, as in our subsequent application in Section \ref{sec:Application}); in a test of FA-dominance, the analyst chooses $(\Delta_r,\Delta_b,\Delta_f)=(0,0,0)$.

Section \ref{sec:Description} outlines the approach, Section \ref{sec:Test} provides details, and the subsequent Section \ref{sec:Full_test} establishes the validity and consistency of our procedure under stated assumptions.

\subsection{Description of Procedure} \label{sec:Description}

The analyst first chooses a \emph{selection rule} that maps  samples into a choice of algorithm from $\mathcal{A}$, i.e., a mapping
 \[\rho: \mathcal{S} \rightarrow \mathcal{A}\]
 where $\mathcal{S} = \bigcup_{m \geq 1} \mathcal{S}_m \equiv \bigcup_{ m \geq 1} (\mathcal{X} \times \mathcal{Y} \times \mathcal{G})^m$ is the set of all finite samples of observations. Our results apply for any such rule subject to certain regularity conditions which we introduce in Section \ref{sec:Test}.

 \smallskip

 In each round $k = 1, \dots, K$  of the procedure (where $K$ is a user-specified parameter), the analyst repeats the following steps:

\smallskip

1. \emph{Split the sample into train and test.} The training sample $S_{train}^n$ includes $m_n = \lfloor \beta \cdot n \rfloor$ observations selected uniformly at random from $\{(X_i, Y_i,G_i)\}_{i=1}^n$, where $\beta \in (0,1)$ is a user-chosen fraction (we use $\beta=1/2$ in our application). The test sample $S_{test}^n$ consists of the remaining $\ell_n = n - m_n$ observations. 

\smallskip

2. \emph{Find a candidate algorithm using the training sample.} Apply the selection rule $\rho$ to the training sample $S_{train}^n$. To simplify notation, we use $\hat{a}^{\rho}_{1n}$ to denote the (realized) algorithm $\rho(S_{train}^n)$, which serves as a candidate algorithm in the subsequent step.

\smallskip

3. \emph{Test whether $\hat{a}^{\rho}_{1n}$ constitutes a $(\Delta_r,\Delta_b,\Delta_f)$-improvement relative to $a_0$.} Given the status quo algorithm $a_0$ and the candidate algorithm $\hat{a}^{\rho}_{1n}$ obtained in the previous step, test the null hypothesis
\begin{equation} \label{null:Fair}
H_0: \text{$\hat{a}^{\rho}_{1n}$ does not constitute a $(\Delta_r,\Delta_b,\Delta_f)$-improvement on $a_0$}
\end{equation}
using the test outlined in Section \ref{sec:Test} and the sample $S_{test}^n$. 
Let $p_k$ denote the $p$-value associated with this test (details below in Section \ref{sec:Test}).

\smallskip

4. Repeat steps 1-3 $K$ times, yielding a vector of $p$-values $(p_1, \dots, p_K)$.

\smallskip

5. Define $p_{\mathrm{med}}$ to be the median\footnote{An inspection of the proof of Corollary \ref{cor:split_test} shows that the result holds for any standard definition of the median, but the formal definition of the corresponding test $\psi_n$ below fixes $p_{\rm med}$ to be the smallest such choice.} of the values $(p_1, \dots, p_K)$ and reject $H_0$ if and only if $p_{\mathrm{med}} < \alpha/2$.
Our Corollary~\ref{cor:split_test} guarantees that under certain regularity conditions, this test is of level $\alpha$. 

\begin{figure}[h] 
\begin{center}
    \includegraphics[scale=0.33]{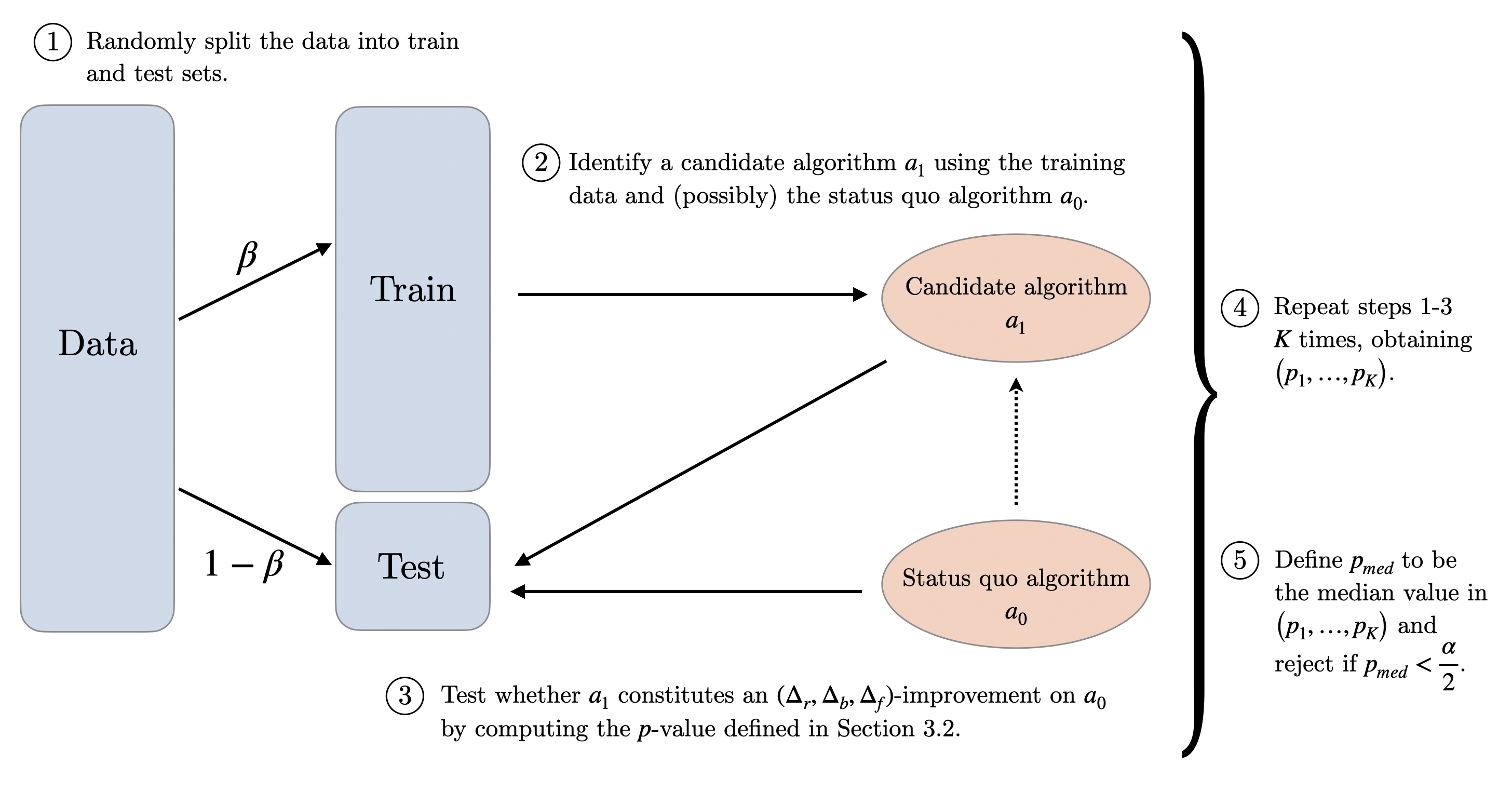}
    \end{center}
    \caption{A summary of our procedure for testing $(\Delta_r,\Delta_b,\Delta_f)$-improvability.} \label{fig:Flowchart}
\end{figure}

\smallskip

This proposed procedure is summarized in Figure \ref{fig:Flowchart}. When the null hypothesis is rejected, we conclude that the status quo algorithm $a_0$ is $(\Delta_r,\Delta_b,\Delta_f)$-improvable within the specified class of algorithms. This procedure does not identify a specific algorithm that achieves that improvement, but instead implies that the selection rule $\rho$ can find one. If the analyst has shown $(\Delta_r,\Delta_b,\Delta_f)$-improvability and requires a single algorithm to use as a substitute for the status quo algorithm, we recommend applying the selection rule $\rho$ to the entire dataset and using its output.

We now turn to supporting details and results for step 3 of our procedure.

\subsection{Testing whether $\hat{a}^{\rho}_{1n}$ constitutes a $(\Delta_r,\Delta_{b},\Delta_f)$-improvement on $a_0$.} \label{sec:Test} 

In this section we propose a hypothesis testing procedure to evaluate whether a fixed algorithm $a_1\equiv \hat{a}^{\rho}_{1n}$ constitutes an improvement relative to the status quo algorithm $a_0$.  Since $a_1$ was selected using the training sample $S_{train}$, it suffices to describe a test of $(\Delta_r, \Delta_b, \Delta_f)$-improvement for a fixed algorithm $a_1$ relative to $a_0$, using the test sample $S_{test}^n = \{(X_i,Y_i,G_i)\}_{i=1}^{\ell_n}$. Throughout we use $\ell_n = n - m_n$ to denote the size of the test sample. To that end, we test the null hypothesis
\begin{equation}\label{eq:H0_split}H_0: A_{1r} \leq A_{0r}(1 + \Delta_r) \text{ OR } A_{1b} \leq A_{0b}(1 + \Delta_{b}) \text{ OR } |F_{1r} - F_{1b}| \ge |F_{0r} - F_{0b}| (1 - \Delta_f)~,
\end{equation}
against the alternative
\[H_1:  A_{1r} > A_{0r}(1 + \Delta_r) \text{ AND } A_{1b} > A_{0b}(1 + \Delta_{b}) \text{ AND } |F_{1r} - F_{1b}| < |F_{0r} - F_{0b}| (1 - \Delta_f)~,\]
for an arbitrary vector $\Delta \equiv (\Delta_r,\Delta_{b},\Delta_{f}) \in \mathbb{R}^3$, where \[(A_{0r},A_{0b},F_{0r},F_{0b}) \equiv (U_A^{r}(a_0),U_A^{b}(a_0), U_F^r(a_0), U_F^{b}(a_0))\] denotes the accuracy and fairness levels under the status quo algorithm $a_0$ and \[(A_{1r},A_{1b},F_{1r}, F_{1b}) \equiv (U_A^r(a_1),U_A^{b}(a_1),U_F^r(a_1), U_F^{b}(a_1))\] denotes the accuracy and fairness levels under the candidate algorithm $a_1$.

Since $H_0$ is formulated as a union of three conditions, a test of $H_0$ can be constructed by first constructing tests for each of the conditions individually, and then combining the individual tests using the intersection-union method.\footnote{See \citet{{casella2021statistical}} for a textbook reference on the intersection-union method. Our Proposition \ref{prop:IUT} in the appendix extends Theorems 8.3.23-24 in \citet{{casella2021statistical}} to an asymptotic setting. }
Specifically, we will construct tests for each of the individual conditions that make up $H_0$, and reject $H_0$ if all three tests reject at once.\footnote{We consider a test that is the union of three individual tests because it leads to a transparent test that is easy to interpret and analyze. In contrast, combining the three conditions that make up $H_0$ into one test directly would require us to make ex-ante arbitrary decisions about how to weight the different conditions, where it would be relatively difficult to understand how these weighting decisions affected the statistical properties of the test.}

First, we specify test statistics for each of the individual tests that make up $H_0$. To simplify the derivations, we introduce a slight change in notation. Let $Z_i = (X_i, Y_i, G_i)$ and define
\begin{align*}
    u_{tg}^A(Z_i, \theta) & = \frac{1}{\theta^{A}_{tg}}\left(u_A(X_i, Y_i,a_t(X_i)) \cdot \mathbbm{1}(G_i=g)\right) \\ u_{tg}^F (Z_i, \theta) & = \frac{1}{\theta^F_{tg}}\left(u_F(X_i, Y_i, a_t(X_i)) \cdot \mathbbm{1}(G_i=g)\right)
\end{align*}
where 
\begin{align*}
    \theta^A_{tg} & = E\left[w_A(X_i, Y_i,a_t(X_i)) \cdot \mathbbm{1}(G_i=g)\right] \\
    \theta_{tg}^F & = E\left[w_F(X_i, Y_i, a_t(X_i)) \cdot \mathbbm{1}(G_i=g)\right]
\end{align*}
The vector $\theta\in \mathbb{R}^J$ collects all of the unknown nuisance parameters $(\theta^A_{tg}, \theta^F_{tg})$ which appear as normalizations in the utility specifications. 
This re-writing allows for the accuracy and fairness criteria $A_{tg}$ and $F_{tg}$ to be expressed as unconditional expectations of the functions $u_{tg}$, rather than as ratios of the conditional expectations of $u_A$, $u_F$, $w_A$, $w_F$ (as defined in (\ref{eq:UAUF})). That is, $A_{tg} =  E[u^A_{tg}(Z_i,\theta)]$ and $F_{tg} =  E[u^F_{tg}(Z_i,\theta)]$. 

In the results that follow, we will in fact allow $u^A_{tg}$ and $u^F_{tg}$ to depend on the nuisance parameter $\theta$ more generally than what is considered above (see specifically Assumption \ref{assp:Utilities}). Since $\theta$ is unknown, it will need to be estimated. We assume that $\theta$ can be written as $\theta = E[h_{i}]$ where $h_i =  h(Y_i, G_i, X_i, a_{1}(X_i),a_{0}(X_i))$ for some known function $h: \mathcal{Y}\times G \times \mathcal{X} \times \mathcal{D}^{2} \to \mathbb{R}^{J}$.\footnote{This imposes few restrictions, and  encompasses our four previous examples in Section \ref{sec:Framework}.} For instance, consider the following example based on the calibration utility in Example \ref{ex:Calibration}.

\begin{example}\label{ex:theta}
Suppose we wish to test the null hypothesis $H_0$ in a setting where $u_A = u_F = u$, and $u$ is  the calibration utility function defined in Example \ref{ex:Calibration}. Then $u^A_{tg} = u^F_{tg} = u_{tg}$ is given by
\[u_{tg}(Z_i,\theta) = \frac{1}{\theta_{tg}}Y_i\mathbbm{1}\{a_t(X_i) = 1\}\mathbbm{1}\{G_i = g\}  \]

where $\theta= (\theta_{0r},\theta_{1r},\theta_{0b},\theta_{1b}) = E[h(G_i,a_1(X_i),a_0(X_i)]$ and
\[h(G_i,a_1(X_i),a_0(X_i))=\left(\begin{array}{c}
\mathbbm{1}(G_i=r,a_0(X_i)=1) \\
\mathbbm{1}(G_i=r,a_1(X_i)=1) \\
\mathbbm{1}(G_i=b,a_0(X_i)=1) \\
\mathbbm{1}(G_i=b,a_1(X_i)=1)\end{array}\right)~.\]
\end{example}

We estimate $A_{tg}$, $F_{tg}$, and $\theta$ using their empirical analogs $\hat{A}_{tg} = \frac{1}{\ell_n}\sum_{i} u^A_{tg}(Z_i,\hat{\theta})$, $\hat{F}_{tg} = \frac{1}{\ell_n}\sum_{i}u^F_{tg}(Z_i,\hat{\theta})$ and $\hat{\theta} = \frac{1}{\ell_n}\sum_{i} h_i$. Our test statistics for the individual tests are then defined as follows. When testing whether $a_1$ is less accurate than $a_0$ for groups $r$ and $b$, let
\[\hat{T}_{r,n} \equiv (\hat{A}_{1r} - (1 + \Delta_r)\hat{A}_{0r})~,\] 
\[\hat{T}_{b,n} \equiv (\hat{A}_{1b} - (1 + \Delta_b)\hat{A}_{0b}) ~.\] 
When testing whether $a_1$ is less fair than $a_0$, define
\[\hat{T}_{f,n} \equiv \left(\big|\hat{F}_{1r} - \hat{F}_{1b}\big| - (1 - \Delta_f)\big|\hat{F}_{0r} - \hat{F}_{0b}\big|\right)\]

 We propose using the nonparametric bootstrap to generate the critical values of our test, since this eliminates the need to compute  standard errors on a case-by-case basis for each utility function. Specifically, if we fix $\alpha$ at some nominal level, then for the accuracy parts of the test, the rejection rule is given by $\phi_n^{(g)} = \mathbbm{1}\{\sqrt{\ell_n}\hat{T}_{g,n} > c^*_{g,1 - \alpha}\}$ for $g \in \{r,b\}$, where 
$c^*_{g,1 - \alpha} = \Psi_{g,n}^{-1}(1-\alpha)$ is the $(1 - \alpha)$-quantile of the bootstrap distribution $\Psi_{g,n}$ defined in Appendix \ref{sec:boot}. For the fairness part of the test, the rejection rule is given by $\phi_n^{(f)} = \mathbbm{1}\{\sqrt{\ell_n}\hat{T}_{f,n} < c^*_{f,\alpha}\}$ where $c^*_{f,\alpha} = \Psi_{f,n}^{-1}(\alpha)$ is the $\alpha$-quantile of the bootstrap distribution $\Psi_{f,n}$ defined in Appendix \ref{sec:boot}. Correspondingly, the $p$-value associated with the test $\phi_n^{(g)}$ is given by $1 - \Psi_{g,n}(\hat{T}_{g,n})$ and the $p$-value associated with the test $\phi_n^{(f)}$ is given by $\Psi_{f,n}(\hat{T}_{f,n})$. Our test of the null hypothesis \eqref{eq:H0_split} is finally given by \[\phi_n(a_0, a_1) = \phi_n^{(r)}\cdot\phi_n^{(b)}\cdot\phi_n^{(f)},\] so  we reject if and only if all three component tests reject. As a result, the $p$-value associated with the test $\phi_n$ is given by the maximum of the three $p$-values associated with each of the component tests.

The validity of our bootstrap procedure hinges crucially on the structure of our specific problem. Indeed, given the form of $\hat{T}_{f,n}$, it is well known that a standard nonparametric bootstrap fails to be consistent at points of non-differentiability \citep[see for instance][]{fang2019inference}. This does not affect the validity of our test for the following reason: Recall that the null is a union of three component null hypotheses. If the null holds because $A_{1g} \le A_{0g}(1+\Delta_g)$ for some $g \in \{r,b\}$, then (as we show in the proof of Theorem \ref{thm:full_test}) it is not necessary for the fairness test $\phi_n^{(f)}$ to control size, since one of the other two tests will. Thus, the only region of the null space where the size of $\phi_n^{(f)}$ is relevant is in the region where \[\vert F_{1r} - F_{1b}\vert \geq  \vert F_{0r} - F_{0b}\vert (1+\Delta_f)~.\] But Assumption \ref{assp:n} implies that the RHS is strictly positive. Thus in this region of the null space,  $\vert F_{1r} - F_{1b}\vert$ is strictly positive, and we are consequently bounded away from points of non-differentiability of $\hat{T}_{f,n}$.\footnote{Under the alternative, we prove consistency of the bootstrap under a mirrored assumption that the selection rule selects candidate algorithms which are not arbitrarily fair; we return to this point in the discussion following Theorem \ref{thm:consistent_test}.}

\section{Main Results}\label{sec:Full_test}
We now show that under certain regularity conditions the test proposed in Section \ref{sec:Approach} is asymptotically valid when the candidate algorithm $a_1$ is selected using the data in $S^{n}_{train}$. We will also show that the test is consistent under an additional assumption on the selection rule $\rho(\cdot)$ which we call \emph{improvement convergence}.

First consider $K=1$, i.e., there is only one round of sample-splitting. Let $(S_{train}^n,S_{test}^n)$ denote the realized split of training and testing samples of the data $\{(X_i, Y_i, G_i)\}_{i=1}^n$. Denote the candidate algorithm selected using rule $\rho(\cdot)$ on the training sample $S_{train}^n$ as
\[\hat{a}^\rho_{1n} \equiv \rho(S_{train}^n).\]
We show that the test $\phi_n(a_0, \hat{a}^\rho_{1n})$ (as defined in Section \ref{sec:Test}) is asymptotically valid. Our result holds under the following regularity conditions restricting the distribution $P$ and class of algorithms $\mathcal{A}$.

\begin{assumption} \label{assp:Utilities} The nuisance parameters $\theta$ belong to an open convex subset $\Theta \subset \mathbb{R}^{J}$. The functions $u_{tg}^{A}(z,\cdot)$ and $u_{tg}^{F}(z,\cdot)$ are twice continuously differentiable on $\Theta$ for every $z \in \mathcal{Z}\equiv\mathcal{X} \times \mathcal{Y} \times \mathcal{G} $, $t \in \{0,1\}$, and $g \in \{r,b\}$. Furthermore, $h(\cdot)$ is bounded, and  $u_{tg}^{A}$, $u_{tg}^{F}$, and their first and second-order partial derivatives (with respect to $\theta$) are all uniformly bounded over $(z, \theta) \in \mathcal{Z}\times \Theta$. 
\end{assumption}

\begin{assumption} \label{assp:NonDegenerate} 
Let $\mathcal{M}$ denote the set of covariance matrices of the nuisance parameters and utilities (as defined in \eqref{eq:sigma} in the appendix) for every candidate algorithm $a_1 \in \mathcal{A}$. Then we assume
\[\inf_{\Sigma \in \mathcal{M}}\lambda_{\rm min}(\Sigma) \ge \nu~,\]
for some $\nu > 0$, where $\lambda_{\rm min}(\Sigma)$ denotes the smallest eigenvalue of $\Sigma$.
\end{assumption}

Assumption \ref{assp:Utilities} allows us to appropriately linearize the utility functions as a function of $\theta$, uniformly over the parameter space $\Theta$. The uniformity of the linearization is important since the true value of the nuisance parameter can fluctuate as a function of sample size. The boundedness conditions in Assumption \ref{assp:Utilities} impose implicit constraints on the parameter space $\Theta$ and the support of the data. For instance, revisiting Example \ref{ex:theta}, we see that the support of $Y_i$ should be bounded, and that each component of $\theta$ should be bounded away from zero. Assumption \ref{assp:NonDegenerate} ensures that the limiting variances of our test statistics are non-degenerate uniformly in the space of candidate algorithms $\mathcal{A}$. Once again, this uniform non-degeneracy is important because the candidate algorithm is allowed to vary arbitrarily as a function of sample size. Low-level conditions which guarantee Assumption \ref{assp:NonDegenerate} are difficult to articulate at this level of generality, but Assumption \ref{assp:NonDegenerate} intuitively requires that the variances of the utilities be bounded away from zero uniformly in the space of candidate algorithms, and that the correlation in the utilities between the status-quo and candidate algorithms be bounded away from one. We expect these conditions to be satisfied for most selection rules used in practice.\footnote{Note that, more precisely, we in fact only require Assumption \ref{assp:NonDegenerate} to hold for the set of algorithms $\mathcal{A}' \subseteq \mathcal{A}$ which can be realized by the selection rule $\rho(\cdot)$.}

Theorem \ref{thm:full_test} establishes that the test $\phi_n$ is asymptotically of level $\alpha$ regardless of the choice of algorithms $\hat{a}^\rho_{1n}$, as long as the nuisance parameters and utilities are well-behaved over the class $\mathcal{A}$ as described in Assumptions \ref{assp:n}, \ref{assp:Utilities} and \ref{assp:NonDegenerate}.
\begin{theorem}\label{thm:full_test}
Suppose $P$ and $\mathcal{A}$ satisfy Assumptions \ref{assp:n}, \ref{assp:Utilities} and \ref{assp:NonDegenerate}, and suppose the null hypothesis given in \eqref{eq:H0_full} holds. Then \[\limsup_{n \rightarrow \infty}E_{P}[\phi_n(a_0,\hat{a}^\rho_{1n})] \le \alpha~.\]
\end{theorem}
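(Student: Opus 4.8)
Because the test sample $S^n_{test}$ is drawn independently of $S^n_{train}$, conditioning on $S^n_{train}$ makes $\hat a^\rho_{1n}=\rho(S^n_{train})$ a fixed element of $\mathcal{A}$; and under \eqref{eq:H0_full} no algorithm in $\mathcal{A}$ $(\Delta_r,\Delta_b,\Delta_f)$-improves on $a_0$, so $\hat a^\rho_{1n}$ always lies in the null set $\mathcal{A}_0=\{a_1\in\mathcal{A}: a_1 \text{ does not } (\Delta_r,\Delta_b,\Delta_f)\text{-improve on } a_0\}$. Hence $E_P[\phi_n(a_0,\hat a^\rho_{1n})]\le \sup_{a_1\in\mathcal{A}_0}E_P[\phi_n(a_0,a_1)]$, and it suffices to show $\limsup_n\sup_{a_1\in\mathcal{A}_0}E_P[\phi_n(a_0,a_1)]\le\alpha$, which by a routine extraction of an (approximately worst-case) subsequence is equivalent to proving $\limsup_n E_P[\phi_n(a_0,a_{1n})]\le\alpha$ for every sequence $a_{1n}\in\mathcal{A}_0$. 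Since $\phi_n=\phi_n^{(r)}\phi_n^{(b)}\phi_n^{(f)}$ and $a_{1n}\in\mathcal{A}_0$ forces at least one of the three conditions in \eqref{eq:H0_split} to hold at each $n$, a pigeonhole argument along any subsequence isolates one component $g\in\{r,b,f\}$ whose null condition holds infinitely often, on which $E_P[\phi_n]\le E_P[\phi_n^{(g)}]$; this is precisely the asymptotic intersection-union logic formalized in Proposition~\ref{prop:IUT}. The theorem therefore reduces to showing that each component test is asymptotically of level $\alpha$ along any sequence of candidate algorithms satisfying its own null condition.

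\textbf{The accuracy components.} Fix $g\in\{r,b\}$ and write $T_{g,n}=A_{1g,n}-(1+\Delta_g)A_{0g}$, so the relevant null is $T_{g,n}\le 0$. Using Assumption~\ref{assp:Utilities} to Taylor-expand $\hat A_{tg}=\tfrac1{\ell_n}\sum_i u^A_{tg}(Z_i,\hat\theta)$ around $\theta=E[h_i]$, the uniform bounds on $u^A_{tg}$ and its first two $\theta$-derivatives make the expansion remainder $o_P(\ell_n^{-1/2})$ uniformly over $\mathcal{A}$, so $\sqrt{\ell_n}(\hat T_{g,n}-T_{g,n})$ equals a sample average of uniformly bounded, mean-zero influence functions (depending on $a_{1n}$, hence forming a triangular array) plus negligible terms; a Lindeberg triangular-array CLT then gives $\sqrt{\ell_n}(\hat T_{g,n}-T_{g,n})\rightsquigarrow N(0,\sigma_{g,n}^2)$, with $\sigma_{g,n}^2$ bounded above by Assumption~\ref{assp:Utilities} and bounded away from $0$ by Assumption~\ref{assp:NonDegenerate} together with the explicit (non-degenerate) form of $\hat T_{g,n}$. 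The bootstrap consistency lemma of Appendix~\ref{sec:boot_consist} then shows that $\Psi_{g,n}$ consistently estimates $N(0,\sigma_{g,n}^2)$, uniformly along the sequence, so that $c^\ast_{g,1-\alpha}=\Psi_{g,n}^{-1}(1-\alpha)$ tracks $\sigma_{g,n}z_{1-\alpha}$. Since $\sqrt{\ell_n}T_{g,n}\le 0$,
\[
E_P[\phi_n^{(g)}]=P\!\left(\sqrt{\ell_n}(\hat T_{g,n}-T_{g,n})>c^\ast_{g,1-\alpha}-\sqrt{\ell_n}T_{g,n}\right)\le P\!\left(\sqrt{\ell_n}(\hat T_{g,n}-T_{g,n})>c^\ast_{g,1-\alpha}\right)\longrightarrow\alpha,
\]
with the limit attained only in the boundary case $T_{g,n}\equiv 0$, giving $\limsup_n E_P[\phi_n^{(g)}]\le\alpha$.

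\textbf{The fairness component.} Here the null is $T_{f,n}=|F_{1r,n}-F_{1b,n}|-(1-\Delta_f)|F_{0r}-F_{0b}|\ge 0$. As noted following Section~\ref{sec:Test}, Assumption~\ref{assp:n} ensures that $\delta:=(1-\Delta_f)|F_{0r}-F_{0b}|$ is a fixed strictly positive constant on the portion of the null where this component binds, so along any such sequence $|F_{1r,n}-F_{1b,n}|\ge\delta>0$ and $|F_{0r}-F_{0b}|>0$: both absolute values are evaluated a fixed distance from their kink, where $(x,y)\mapsto|x|-(1-\Delta_f)|y|$ is differentiable with a locally constant gradient, and since $\hat F_{tg}-F_{tg,n}=O_P(\ell_n^{-1/2})$ uniformly, $|\hat F_{1r}-\hat F_{1b}|\ge\delta/2$ with probability tending to one. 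The same linearization-plus-Lindeberg argument then yields $\sqrt{\ell_n}(\hat T_{f,n}-T_{f,n})\rightsquigarrow N(0,\sigma_{f,n}^2)$ with $\sigma_{f,n}^2$ bounded and bounded away from zero, and the bootstrap---which plugs in $\hat F$, itself bounded away from the kink---is consistent by the Appendix~\ref{sec:boot_consist} lemma, so $c^\ast_{f,\alpha}=\Psi_{f,n}^{-1}(\alpha)$ tracks $\sigma_{f,n}z_\alpha$. Since $\sqrt{\ell_n}T_{f,n}\ge 0$,
\[
E_P[\phi_n^{(f)}]=P\!\left(\sqrt{\ell_n}(\hat T_{f,n}-T_{f,n})<c^\ast_{f,\alpha}-\sqrt{\ell_n}T_{f,n}\right)\le P\!\left(\sqrt{\ell_n}(\hat T_{f,n}-T_{f,n})<c^\ast_{f,\alpha}\right)\longrightarrow\alpha.
\]
Combining the three component bounds through the subsequence argument of the first paragraph yields $\limsup_n E_P[\phi_n(a_0,\hat a^\rho_{1n})]\le\alpha$.

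\textbf{Main obstacle.} The core difficulty is that, because the selection rule is essentially unconstrained, the laws of $\hat T_{r,n},\hat T_{b,n},\hat T_{f,n}$ and of their bootstrap analogues need not converge, so every CLT and every bootstrap step must hold uniformly over $\mathcal{A}$ (equivalently, over arbitrary sequences $a_{1n}$); this is exactly the role of the uniform boundedness in Assumption~\ref{assp:Utilities} and the uniform non-degeneracy in Assumption~\ref{assp:NonDegenerate}, and it is why off-the-shelf bootstrap consistency results do not apply and the new lemma of Appendix~\ref{sec:boot_consist} is needed. The secondary, intertwined difficulty is reconciling this uniform framework with the non-differentiability of $\hat T_{f,n}$; it is resolved by the observation that Assumption~\ref{assp:n} keeps the fairness statistic a fixed positive distance from its kink precisely on the region of the null where $\phi_n^{(f)}$ is the binding component.
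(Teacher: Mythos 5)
Your proposal is correct and follows essentially the same route as the paper's proof: reduce to deterministic sequences of candidate algorithms (you use iterated expectations plus a supremum/subsequence extraction; the paper uses iterated expectations plus Fatou's lemma, which is an equivalent reformulation), invoke the asymptotic intersection–union logic of Proposition~\ref{prop:IUT}, and then control each component via the asymptotic linearization, triangular-array CLT, and the bootstrap consistency lemma of Appendix~\ref{sec:boot_consist}, with Assumption~\ref{assp:n} keeping the fairness statistic bounded away from the kink of the absolute value on the binding region of the null. Your explicit pigeonhole remark (that for any sequence $a_{1n}$ in the null one component's condition holds infinitely often along a subsequence) makes visible a step the paper handles somewhat tacitly through Proposition~\ref{prop:IUT}, but the substance is identical.
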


We recommend combining the test results over multiple splits of the data to reduce the unpredictability in the procedure induced by sample splitting (see Section \ref{sec:RobustManipulation} for a theoretical justification).  Formally, consider $K>1$ and let $(S_{train}^{n,k},S_{test}^{n,k})$ denote the realized split of training and testing samples in the $k$-th round of sample splitting. Further let 
\[\hat{a}^\rho_{1nk} \equiv \rho(S_{train}^{n,k})\]
denote the candidate algorithm selected using rule $\rho$ on the training sample $S_{train}^{n,k}$.
Define the test statistic $\psi_n = \mathbbm{1}\left\{\frac{1}{K}\sum_{k = 1}^K \phi_{n,k}(a_0,\hat{a}^\rho_{1nk}) \ge 1/2\right\}$ to reject if the median test statistic (across the rounds of sample splitting) rejects.
We establish the following corollary of Theorem \ref{thm:full_test}:

\begin{corollary}\label{cor:split_test}
Under the assumptions of Theorem \ref{thm:full_test},
\[\limsup_{n \rightarrow \infty}E_P[\psi_n] \le 2\alpha~.\]
\end{corollary}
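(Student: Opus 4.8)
The plan is to deduce the corollary directly from Theorem \ref{thm:full_test} via a Markov-type argument applied to the average of the per-round rejection indicators. Write $\bar{\phi}_n = \frac{1}{K}\sum_{k=1}^{K}\phi_{n,k}(a_0,\hat{a}^\rho_{1nk})$, so that $\psi_n = \mathbbm{1}\{\bar{\phi}_n \ge 1/2\}$. First I would observe that, under the null hypothesis \eqref{eq:H0_full}, each individual round is a valid test in the sense of Theorem \ref{thm:full_test}: since the $k$-th split $(S_{train}^{n,k},S_{test}^{n,k})$ is itself a uniformly random split of the full i.i.d. sample, the realized candidate $\hat{a}^\rho_{1nk}$ and test $\phi_{n,k}$ have exactly the same joint distribution as the $K=1$ objects in Theorem \ref{thm:full_test}; hence $\limsup_{n\to\infty} E_P[\phi_{n,k}(a_0,\hat{a}^\rho_{1nk})] \le \alpha$ for each fixed $k\in\{1,\dots,K\}$. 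Averaging over the $K$ rounds and using linearity of expectation gives $\limsup_{n\to\infty} E_P[\bar{\phi}_n] \le \alpha$.

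Next I would apply Markov's inequality to the nonnegative random variable $\bar{\phi}_n$: since $\psi_n = \mathbbm{1}\{\bar{\phi}_n \ge 1/2\}$, we have $E_P[\psi_n] = P(\bar{\phi}_n \ge 1/2) \le 2\, E_P[\bar{\phi}_n]$. Taking $\limsup_{n\to\infty}$ on both sides and combining with the bound from the previous step yields $\limsup_{n\to\infty} E_P[\psi_n] \le 2\alpha$, which is the claim. I would also note that this argument is insensitive to the precise tie-breaking convention in the definition of the median, since the event $\{\bar{\phi}_n \ge 1/2\}$ is exactly the event that at least half of the $K$ component tests reject, which matches the stated test $\psi_n$ (and, as remarked in the paper, coincides with $\{p_{\mathrm{med}} < \alpha/2\}$ under the smallest-median convention because the $p$-value of $\phi_{n,k}$ is below $\alpha/2$ on at least half the rounds exactly when the $\alpha/2$-level version of each component test rejects on at least half the rounds).

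The only mild subtlety—and the step I would be most careful about—is the interchange of $\limsup$ with the finite sum: because $K$ is a fixed, user-specified constant (not growing with $n$), $\limsup_n \frac{1}{K}\sum_k E_P[\phi_{n,k}] \le \frac{1}{K}\sum_k \limsup_n E_P[\phi_{n,k}] \le \alpha$ holds term by term, so no uniformity across $k$ is needed. One should only make sure that the constants in Theorem \ref{thm:full_test}'s asymptotics (governed by Assumptions \ref{assp:n}, \ref{assp:Utilities}, \ref{assp:NonDegenerate}, which are stated for the full class $\mathcal{A}$ and hence do not depend on which round we are in) apply uniformly to every round; since those assumptions are round-independent, this is immediate. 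Everything else is routine.
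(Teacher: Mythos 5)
Your proof is correct and takes essentially the same route as the paper: Markov's inequality applied to the average of the per-round rejection indicators, followed by Theorem~\ref{thm:full_test} and subadditivity of $\limsup$ over the finite sum. The paper states this in two lines; your version simply unpacks the same steps (including the observation that each round is distributionally identical to the $K=1$ case) more explicitly.
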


Note that, as a consequence of Corollary \ref{cor:split_test}, to ensure that the test $\psi_n$ is asymptotically of level $\alpha$ we require that the component tests $\phi_{n,k}$ for $k = 1, \ldots, K$ are of level $\alpha/2$. Equivalently, we reject the null if the median $p$-value (across the rounds of sample splitting) satisfies $p_{\mathrm{med}} < \alpha/2$. We would in general expect this conservativeness to result in a loss of power relative to a test performed using a single split, but the local power analysis conducted in  \cite{diciccio2020exact} suggests that this may not always be the case if the alternative is not too close to the null. In particular, their Theorem 3.1 demonstrates this for a one-sided test of the mean, which is similar to the component nulls in \eqref{eq:H0_split}.

Finally, we establish the consistency of the test $\phi_{n}(a_0, \hat{a}^\rho_{1n})$ under high-level conditions on the behavior of the selection rule $\rho$. 

\begin{definition}\label{assp:Limit} Suppose the null hypothesis given in \eqref{eq:H0_full} does not hold. We say that the selection rule $\rho$ is \emph{improvement-convergent} if, for each $g \in \{r, b\}$, $U^{g}_A(\hat{a}^\rho_{1n}) \xrightarrow{P} U^{g}_A(\gamma)$ and $U^{g}_F(\hat{a}^\rho_{1n}) \xrightarrow{P} U^{g}_F(\gamma)$ for some algorithm $\gamma \in \mathcal{A}$ that constitutes a $(\Delta_r, \Delta_b, \Delta_f)$-improvement over the algorithm $a_0$. 
\end{definition}

In words, improvement convergence says that whenever the algorithm $a_0$ is $(\Delta_r,\Delta_b,\Delta_f)$-improvable in $\mathcal{A}$, the selection rule $\rho$ is guaranteed to find such an improvement given enough data. Theorem \ref{thm:consistent_test} establishes that the test $\phi_n$ is consistent (that is, has power approaching one for any distribution in the set of alternatives) when the selection rule $\rho$ is improvement-convergent.

\begin{theorem}\label{thm:consistent_test}
Suppose $P$ and $\mathcal{A}$ satisfy Assumptions \ref{assp:n}, \ref{assp:Utilities} and \ref{assp:NonDegenerate}, and let $P$ be any distribution such that the null hypothesis given in $\eqref{eq:H0_full}$ does not hold. For any improvement-convergent selection rule $\rho$ for which the selected candidate algorithm satisfies $|U^r_F(\hat{a}^{\rho}_{1n}) - U^b_F(\hat{a}^{\rho}_{1n})| \ge \xi$ almost surely for some fixed $\xi > 0$, we have
 \[\lim_{n \rightarrow \infty}E_P\left[\phi_n\left(a_0, \hat{a}^\rho_{1n}\right)\right] = 1.~\]
\end{theorem}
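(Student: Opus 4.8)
The plan is to show that under the alternative, the candidate algorithm $\hat a^\rho_{1n}$ is—asymptotically—a strict $(\Delta_r,\Delta_b,\Delta_f)$-improvement over $a_0$, and that each of the three component tests $\phi_n^{(r)},\phi_n^{(b)},\phi_n^{(f)}$ therefore rejects with probability approaching one; since $\phi_n$ is their product, $E_P[\phi_n]\to 1$. Concretely, by improvement convergence there is a fixed $\gamma\in\mathcal A$ that $(\Delta_r,\Delta_b,\Delta_f)$-improves on $a_0$, so there exists $\delta>0$ with $A^r_{1,\gamma}-(1+\Delta_r)A_{0r}>2\delta$, $A^b_{1,\gamma}-(1+\Delta_b)A_{0b}>2\delta$, and $|F^r_{1,\gamma}-F^b_{1,\gamma}|-(1-\Delta_f)|F_{0r}-F_{0b}|<-2\delta$, where I write $A^g_{1,\gamma}=U_A^g(\gamma)$ etc. Improvement convergence gives $U_A^g(\hat a^\rho_{1n})\xrightarrow{P}U_A^g(\gamma)$ and $U_F^g(\hat a^\rho_{1n})\xrightarrow{P}U_F^g(\gamma)$, so with probability tending to one the population targets $T_{r,n}^{\mathrm{pop}}\equiv A_{1r}-(1+\Delta_r)A_{0r}$, $T_{b,n}^{\mathrm{pop}}$, and $T_{f,n}^{\mathrm{pop}}\equiv |F_{1r}-F_{1b}|-(1-\Delta_f)|F_{0r}-F_{0b}|$ satisfy $T_{r,n}^{\mathrm{pop}}>\delta$, $T_{b,n}^{\mathrm{pop}}>\delta$, $T_{f,n}^{\mathrm{pop}}<-\delta$.

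Next I would argue that the sample test statistics concentrate around these population quantities and that the bootstrap critical values are $O_P(\ell_n^{-1/2})$, so the $\sqrt{\ell_n}$-scaled statistics diverge to $\pm\infty$ and the rejection events occur w.p.a.1. For the accuracy components: $\hat T_{g,n}-T_{g,n}^{\mathrm{pop}}=o_P(1)$ by a uniform law of large numbers (using the boundedness in Assumption \ref{assp:Utilities}, which controls the dependence of $u^A_{tg}$, $u^F_{tg}$ on $\hat\theta$ via a mean-value linearization, exactly as in the proof of Theorem \ref{thm:full_test}), conditionally on $S^n_{train}$; hence $\sqrt{\ell_n}\hat T_{g,n}\xrightarrow{P}+\infty$. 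For the critical values, Assumption \ref{assp:NonDegenerate} and the bootstrap consistency lemma of Appendix \ref{sec:boot_consist} imply the bootstrap distributions $\Psi_{g,n}$ have quantiles that are bounded in probability, so $c^*_{g,1-\alpha}=O_P(1)$ and $\{\sqrt{\ell_n}\hat T_{g,n}>c^*_{g,1-\alpha}\}$ has probability $\to 1$. The fairness component is where the extra hypothesis $|U_F^r(\hat a^\rho_{1n})-U_F^b(\hat a^\rho_{1n})|\ge\xi$ a.s.\ enters: it guarantees that $\hat T_{f,n}$ is evaluated at a point where $|F_{1r}-F_{1b}|$ is bounded away from zero, so the map $(x,y)\mapsto|x-y|$ is differentiable there, the delta method applies, $\hat T_{f,n}-T_{f,n}^{\mathrm{pop}}=o_P(1)$, and the bootstrap for $\phi_n^{(f)}$ is consistent with $c^*_{f,\alpha}=O_P(1)$; therefore $\sqrt{\ell_n}\hat T_{f,n}\xrightarrow{P}-\infty$ and $\{\sqrt{\ell_n}\hat T_{f,n}<c^*_{f,\alpha}\}$ has probability $\to 1$.

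Combining, $E_P[\phi_n]=P(\phi_n^{(r)}=\phi_n^{(b)}=\phi_n^{(f)}=1)\ge 1-\sum P(\phi_n^{(j)}=0)\to 1$, which gives the claim. I would organize this by conditioning on the training sample throughout: fix a sequence of realized training sets along which the improvement-convergence and $\xi$-separation statements hold, establish the conditional rejection probabilities $\to 1$ on this event, and then integrate out using dominated convergence (the test functions are bounded by $1$). One technical point worth isolating is that improvement convergence is stated as convergence in probability over the full data, not conditionally on $S^n_{train}$; I would handle this by passing to subsequences or by noting that the joint convergence $U_A^g(\hat a^\rho_{1n})\xrightarrow{P}U_A^g(\gamma)$ together with the independence of $S^n_{test}$ from $S^n_{train}$ lets me treat $\hat a^\rho_{1n}$ as "frozen" at its (random) value when analyzing the test-sample statistics, which is precisely the logic already used in Theorem \ref{thm:full_test}.

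The main obstacle I anticipate is making the bootstrap-critical-value bound $c^*_{\cdot}=O_P(1)$ genuinely uniform over the (sample-size-dependent, data-dependent) candidate algorithms $\hat a^\rho_{1n}$: one must invoke the uniform non-degeneracy in Assumption \ref{assp:NonDegenerate} and the sample-size-robust bootstrap consistency result of Appendix \ref{sec:boot_consist} to rule out the critical values drifting to $+\infty$ along bad sequences of candidate algorithms. The $\xi$-separation hypothesis is exactly what makes the fairness part of this uniformity tractable, since without it the non-differentiability of $|\cdot|$ could destroy bootstrap consistency for $\phi_n^{(f)}$ along sequences where the candidate algorithm becomes arbitrarily fair—mirroring the role Assumption \ref{assp:n} played on the null side in the discussion following the statement of the test.
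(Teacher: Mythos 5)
Your proposal is correct and follows essentially the same route as the paper's proof: establish that each component rejection probability tends to one (using improvement convergence to drive $\sqrt{\ell_n}\,T_{s,n}$ to $\pm\infty$, bootstrap consistency together with Assumption \ref{assp:NonDegenerate} to keep $c^*_{\cdot}$ bounded, and the $\xi$-separation hypothesis to rescue differentiability for the fairness part), then combine via $E_P[\phi_n]\ge \sum_s E_P[\phi_n^{(s)}]-2\to 1$, handling the randomness of $\hat a^\rho_{1n}$ by conditioning on $S^n_{\mathrm{train}}$, passing to subsequences, and applying dominated convergence. (One small slip: your first mention of bootstrap critical values being $O_P(\ell_n^{-1/2})$ should read $O_P(1)$, as you correctly state a sentence later.)
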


This result assumes that the selection rule $\rho$ does not select a candidate algorithm that is arbitrarily fair, even asymptotically. This is related to our discussion at the end of Section \ref{sec:Test}: if the algorithms chosen by the selection rule converge to perfect fairness, then our bootstrap procedure is no longer consistent, and our proof strategy does not establish consistency of our test in this case. However, we emphasize that bootstrap consistency is \emph{not} a necessary condition for our test to have reasonable power, and we demonstrate via simulation in Appendix \ref{sec:sims} that our test displays reasonable power even when employing candidate algorithms that are perfectly fair.

We expect that, for many datasets, it will be possible to reject the null using a naive selection rule that does not satisfy improvement convergence, as is the case in our empirical illustration in Section \ref{sec:Application}. However, Theorem \ref{thm:consistent_test} guarantees that the null will be asymptotically rejected (in cases where it should be) when the selection rule satisfies this property. Many of the methods developed in the literature on identifying less discriminatory algorithms (as reviewed in Section \ref{subsec:RelatedLit}) are improvement-convergent for specific utility specifications and algorithm classes. In Appendix \ref{sec:Search} we propose selection rules constructed using mixed integer linear programs, which we conjecture are improvement-convergent when $\mathcal{A}$ is the set of linear classifiers.

\section{Microfoundation for Repeated Sample-Splitting} \label{sec:RobustManipulation}

Our recommended approach is to compute the median $p$-value across $K>1$ repeated sample splits and to reject when this $p$-value is less than $\alpha/2$ (where $\alpha$ is the level of the test). One could alternatively conduct a single sample-split and reject if the resulting $p$-value is less than $\alpha$. As our 
Theorem \ref{thm:full_test} and Corollary \ref{cor:split_test} demonstrate, both procedures are valid, and we generally expect that the use of a single split should lead to a less conservative test. However, because the resulting $p$-value can vary substantially across different splits, relying on a single train-test split introduces arbitrariness---what \citet{meinshausen2009p} term the `$p$-value lottery.' This sensitivity to the sample split introduces the possibility of  manipulation. As \citet{ritzwoller2023reproducible} note: ``Researchers are incentivized to report significant results. If there is scope to materially alter the statistics that they report through the choice of the split of their sample, should this choice be left to chance?'' This concern is particularly relevant to disparate impact claims, where complainants may have incentives to find fairness violations even when they do not exist.

In this section, we thus formalize the intuition that repeated sample-splitting provides stronger safeguards against manipulation than single splits, thereby providing a theoretical justification for this methodological choice. Our microfoundation is a game between a policymaker who sets the statistical procedure and an analyst who can (secretly) repeat the procedure multiple times. Within this framework, we define what it means for a test to be robust to manipulation. Section \ref{sec:RMmodel} presents this model, and Section \ref{sec:RMresults} proves that repeated sample-splitting is more robust to manipulation than single train-test splits. While our motivation comes from disparate impact testing, these results apply to any valid test, making them relevant beyond our specific setting.

\subsection{Model} \label{sec:RMmodel}
Suppose a firm's algorithm is the subject of a potential disparate impact case. We consider a game between two players: an \emph{analyst} auditing the firm and a \emph{policymaker}.  There is a fixed statistical test of exact size $\alpha \in (0,1)$, which produces a $p$-value from any given train and test split. The policymaker chooses between two procedures $s_1$ and $s_2$ based on this test. The first corresponds to a single train-test split, where the null of no fairness-improvability is rejected if the resulting $p$-value is less than $\alpha$. The second corresponds to $K$ train-test splits, where the null is rejected if the median $p$-value across these splits is less than $\alpha/2$ (i.e., our proposed method).  The analyst observes which  procedure is chosen, and repeats that procedure $m \in \mathbb{Z}_+$ times at a cost of $c_1(m)$ for procedure 1 or $c_2(m)$ for procedure 2, where $c_1$ and $c_2$ are increasing and weakly convex functions. (For example, the specifications $c_1(m) = \gamma m$ and $c_2(m) = \gamma K m$ would give the analyst a constant cost $\gamma$ for each train-test split.) The analyst reports the $p$-value from one of these repetitions, and this reported $p$-value determines whether the null is rejected. Crucially, we suppose that the hypothesis test is interpreted as if $m=1$, which is the standard convention.\footnote{This distinguishes our model from related works such as \citet{Henry2009}, \citet{FelgenhauerLoerke}, \citet{HenryOttaviani}, and \citet{Herresthal2022}, in which an agent gathers information through hidden testing or experimentation, and subsequently chooses whether to disclose his findings. In those papers, the principal or sender (the equivalent of our ``policymaker'') updates his beliefs about an unknown payoff-relevant state given the agent's report and equilibrium strategy. In our model, payoffs are instead directly determined by whether the null is rejected at the reported $p$-value.}

    We are interested in settings where the analyst would like to conclude that the algorithm is fairness-improvable even when it is not. With this kind of manipulation in mind, we condition on the state of the world in which the null hypothesis holds. The analyst's payoff is $1-c_i(m)$ if the null is rejected under procedure $s_i$ and $-c_i(m)$ otherwise, while the policymaker's payoff is 0 if the null is rejected and 1 otherwise.

To summarize the timeline:
\begin{enumerate}
\item The policymaker chooses between statistical procedures $s_1$ and $s_2$.
\item The analyst chooses $m \in \mathbb{Z}_+$.
\item The $p$-values from the $m$ repetitions of the procedure are realized, and the analyst chooses one to report.
\item Payoffs are realized.
\end{enumerate}
We consider subgame-perfect equilibria.

To simplify notation, it will be useful to couple the realizations of the $p$-values under the two tests in the following way: For each repetition $i$ of the second statistical test, let $(p_i^1, \dots, p_i^K)$ denote the random $p$-values across the $K$ iterations of train-test. For each repetition $i$ of the first statistical test, let $p_i^1$ be the $p$-value in the single train-test split. Further define $\phi_{ik}^\alpha=\mathbbm{1}(p_i^k < \alpha)$ to be an indicator for whether the $p$-value from the $(i,k)$-th train-test split leads to rejection at an $\alpha$ significance level. By assumption that the statistical test for each train-test split has exact size, each random variable $\phi_{ik}^\alpha$ is Bernoulli with parameter $\alpha$. 

Equilibrium actions and payoffs can be determined by backwards induction. In period 2, the analyst solves
\begin{equation} \label{eq:analystProblem1}
\max_{m \in \mathbb{Z}_+} \left[P\left(\max_{1\leq i \leq m } \phi_{i1}^\alpha = 1\right) - c_1(m)\right]
\end{equation}
if procedure $s_1$ was chosen, or
\[\max_{m \in \mathbb{Z}_+} \left[P\left(\max_{1\leq i \leq m } \mathbbm{1}\left(\frac1K \sum_{k=1}^K \phi_{ik}^{\alpha/2}\geq \frac12 \right)=1\right) - c_2(m)\right]\]
if procedure $s_2$ was chosen. Let $m_1^*$ and $m_2^*$ respectively denote the solutions to these two problems (breaking ties in favor of the smallest number of repetitions). Then in period 1, the policymaker's expected payoffs from the two procedures are
\[u_P(s) = \left\{ \begin{array}{cc}
1 - \displaystyle P\left(\max_{1\leq i \leq m_1^* } \phi_{i1}^\alpha =1\right) & \mbox{if $s=s_1$}\\
 1 - \displaystyle P\left(\max_{1\leq i \leq m_2^* } \mathbbm{1}\left(\frac1K \sum_{k=1}^K \phi_{ik}^{\alpha/2}\geq \frac12 \right)=1\right) & \mbox{if $s=s_2$}
 \end{array}\right.\]

We define robustness to manipulation as follows.

\begin{definition}
Say that $s_i$ is \emph{more robust to manipulation} than $s_j$ if $u_P(s_i) > u_P(s_j)$.
\end{definition}

\noindent That is, $s_i$ is more robust to manipulation than $s_j$ if the policymaker's expected payoff from choosing $s_i$ is strictly higher. Equivalently, the probability that the null is incorrectly rejected---given the analyst's endogenous choice of how many times to re-run the procedure---is higher under $s_i$ than $s_j$. Finally, note that when $s_i$ is more robust to manipulation than $s_j$, then the policymaker chooses $s_i$ in every equilibrium.

\subsection{Results} \label{sec:RMresults}

Our main result in this section is that when $K$ is sufficiently large, then procedure $s_2$ (repeated sample-splitting) is more robust to manipulation than $s_1$ (a single train-test split) under the following assumption.

\begin{assumption} \label{assp:LowCost}
$P(\phi_{21}^\alpha = 1 \mid \phi_{11}^\alpha = 0) >\frac{c_1(2)-c_1(1)}{1-\alpha}$.
\end{assumption}

This assumption lower bounds the probability that the second repetition of $s_1$ yields a different outcome from the first. It implies that the marginal benefit of the second repetition is higher than its marginal cost, and thus ensures that the analyst will repeat the first procedure at least once. This assumption is satisfied so long as tests are not perfectly correlated, and the cost of the second repetition is not too large. (For example, if $c_1(m) = \gamma m$ is linear, then the bound can be restated as $\gamma < (1-\alpha) P(\phi_{21}^\alpha = 1 \mid \phi_{11}^\alpha = 0)$.)

\begin{theorem} \label{prop:RobustManipulationGeneral}  Suppose Assumption \ref{assp:LowCost} holds. Then for $K$ sufficiently large, procedure $s_2$ is more robust to manipulation than $s_1$.
\end{theorem}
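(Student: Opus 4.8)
The plan is to compare, for large $K$, the two probabilities of incorrect rejection that define $u_P(s_1)$ and $u_P(s_2)$, after first solving the analyst's two optimization problems in period 2. The key observation is that the per-repetition rejection probabilities are very different across the two procedures. For $s_1$, each repetition rejects (incorrectly) with probability $\alpha \in (0,1)$, which does not vanish. For $s_2$, each repetition rejects only if the median of $K$ iid $\mathrm{Bernoulli}(\alpha/2)$ indicators is at least $1/2$, i.e., if at least $\lceil K/2\rceil$ of the $\phi_{ik}^{\alpha/2}$ equal one; since the mean $\alpha/2 < 1/2$, a Chernoff/Hoeffding bound shows this probability decays exponentially in $K$. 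Call this quantity $q_K \equiv P\!\left(\frac1K\sum_{k=1}^K \phi_{ik}^{\alpha/2} \ge \frac12\right)$, so $q_K \to 0$ as $K \to \infty$.

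First I would handle the analyst's problem under $s_1$. Using Assumption \ref{assp:LowCost}, the marginal benefit of moving from $m=1$ to $m=2$ is $P(\max_{i\le 2}\phi_{i1}^\alpha = 1) - P(\phi_{11}^\alpha = 1) = P(\phi_{21}^\alpha = 1,\ \phi_{11}^\alpha=0) = (1-\alpha)\,P(\phi_{21}^\alpha = 1 \mid \phi_{11}^\alpha = 0)$, which strictly exceeds $c_1(2)-c_1(1)$; hence $m_1^* \ge 2$, so in equilibrium $P(\max_{1\le i\le m_1^*}\phi_{i1}^\alpha = 1) \ge P(\max_{i\le 2}\phi_{i1}^\alpha=1) = 1-(1-\alpha)\,P(\phi_{11}^\alpha=0,\phi_{21}^\alpha=0)$, a quantity strictly greater than $\alpha$ and, crucially, bounded below by a constant $\bar{r}_1 > \alpha$ that does not depend on $K$. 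Therefore $u_P(s_1) = 1 - P(\max_{1\le i \le m_1^*}\phi_{i1}^\alpha = 1) \le 1 - \bar{r}_1 < 1-\alpha$.

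Next I would bound $u_P(s_2)$ from below. Whatever $m_2^*$ the analyst chooses, the incorrect-rejection probability satisfies $P\!\left(\max_{1\le i\le m_2^*}\mathbbm{1}\{\cdots\}=1\right) \le$ (by a union bound over the $m_2^*$ repetitions, or by noting the analyst will never choose $m_2^*$ larger than the point where marginal cost exceeds marginal benefit) something that can be made small. More carefully: the marginal benefit of each additional repetition under $s_2$ is at most $q_K$ (the per-repetition rejection probability), while the marginal cost $c_2(m+1)-c_2(m)$ is bounded below by $c_2(2)-c_2(1) > 0$ by convexity; so once $q_K < c_2(2)-c_2(1)$, the analyst optimally chooses $m_2^* = 0$ (report nothing, or equivalently the first repetition with no manipulation benefit) — or at worst $m_2^*$ is bounded by a $K$-independent constant $\bar{m}$, giving $P(\max_{1\le i\le m_2^*}\mathbbm{1}\{\cdots\}=1) \le \bar{m}\, q_K \to 0$. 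Either way, for $K$ large enough $u_P(s_2) = 1 - P(\max_{1\le i\le m_2^*}\mathbbm{1}\{\cdots\}=1) \ge 1 - \bar{m}\,q_K > 1 - \bar{r}_1 \ge u_P(s_1)$, which is the desired conclusion.

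The main obstacle I anticipate is the bookkeeping around the analyst's choice of $m_2^*$: one must rule out the possibility that the analyst finds it worthwhile to run an enormous (but finite) number of repetitions of $s_2$ precisely because $K$ is large, in a way that keeps the cumulative rejection probability bounded away from zero. The convexity of $c_2$ and the fact that the per-repetition gain $q_K$ is exponentially small in $K$ (while $c_2(2)-c_2(1)$ is fixed) is exactly what defeats this: the analyst's objective $P(\max_{i\le m}\{\cdots\}=1) - c_2(m) \le m\,q_K - c_2(m)$ becomes strictly decreasing in $m$ once $q_K$ falls below the smallest marginal cost increment, pinning $m_2^*$ at a bounded value (in fact $0$). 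Making the "$K$ sufficiently large" threshold explicit — it is whatever $K$ makes $q_K < c_2(2) - c_2(1)$, which exists since $q_K \to 0$ — and then combining the two one-sided bounds $u_P(s_1) \le 1-\bar r_1$ and $u_P(s_2) \ge 1 - \bar m\, q_K$ completes the argument.
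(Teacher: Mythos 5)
Your proof treats the $K$ train-test splits within a single repetition $i$ as producing i.i.d.\ $\mathrm{Bernoulli}(\alpha/2)$ indicators $\phi_{ik}^{\alpha/2}$, and then applies a Chernoff/Hoeffding bound to conclude that the per-repetition rejection probability $q_K = P\bigl(\tfrac1K\sum_{k}\phi_{ik}^{\alpha/2}\geq\tfrac12\bigr)$ decays exponentially to zero. That premise is false in the setting of the theorem: the $K$ iterations re-partition the \emph{same} underlying dataset, so the $\phi_{ik}^{\alpha/2}$ for $k=1,\dots,K$ are only exchangeable and positively correlated, not independent. In the exchangeable case $q_K$ need not go to zero at all. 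By de Finetti, the $\phi_{ik}^{\alpha/2}$ are conditionally i.i.d.\ $\mathrm{Bernoulli}(q)$ with $q\sim\mu$ and $\mathbb{E}_\mu(q)=\alpha/2$, and as $K\to\infty$ one has $q_K\to \mu(\{q>1/2\})+\tfrac12\mu(\{q=1/2\})$, which can be bounded away from zero (indeed close to $\alpha$) whenever $\mu$ places mass at or above $q=1/2$. This is exactly the distinction the paper draws between Theorem \ref{prop:RobustManipulationGeneral} (exchangeable) and Proposition \ref{prop:RobustManipulation} (i.i.d.): your argument effectively proves the latter, not the former.

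Because $q_K$ need not vanish, your downstream steps break. The union bound $P(\max_{i\leq m}\cdots)\leq m\,q_K$ does not yield a bound tending to zero or even to $\alpha$, and the threshold ``$q_K < c_2(2)-c_2(1)$'' that is supposed to pin $m_2^*$ at $1$ (not $0$, since $m\in\mathbb{Z}_+$) may never be crossed. The paper's proof closes this gap by integrating over the mixing measure: conditionally on $q<1/2$ the median rejection event has vanishing probability (SLLN), conditionally on $q>1/2$ it is asymptotically certain, and hence for any fixed $m$, $P(\max_{i\leq m}\cdots)\to\mu(\{q\geq 1/2\})$, which Markov's inequality bounds by $2\mathbb{E}_\mu(q)=\alpha$. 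A separate concavity argument on the analyst's payoff then bounds $m_2^*$ uniformly in $K$. Your $s_1$-side argument---that Assumption \ref{assp:LowCost} forces $m_1^*>1$ and yields a $K$-independent lower bound $\bar r_1 = \alpha + (1-\alpha)P(\phi_{21}^\alpha=1\mid\phi_{11}^\alpha=0) > \alpha$---is correct and essentially matches the paper's Lemma \ref{lemm:Part1}, but the $s_2$-side argument needs the de Finetti decomposition to go through.
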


The proof of this result proceeds in two parts. First, we show that
\begin{equation} \label{eq:Part1}
P\left(\max_{1\leq i \leq m } \phi_{i1}^\alpha = 1\right)  \geq \alpha
\end{equation}
for every $m$, and moreover the inequality holds strictly for every $m>1$ (where Assumption \ref{assp:LowCost} guarantees that the analyst's solution satisfies $m_1^*>1$). This is extremely intuitive, and says that the probability of rejecting the null given $m>1$ repetitions of the first procedure is larger than with just one. (The only way it could be otherwise is if the test was perfectly correlated across repetitions, which Assumption \ref{assp:LowCost} rules out.)

The other half of the proof demonstrates that for $K$ sufficiently large
\begin{equation} \label{eq:Part2}
P\left(\max_{1\leq i \leq m } \mathbbm{1}\left(\frac1K \sum_{k=1}^K \phi_{ik}^{\alpha/2}\geq \frac12 \right)=1\right) \leq \alpha.
\end{equation}
for every $m$ not too large (including $m=m_2^*$). Intuitively, for every fixed repetition $i$, the sample average $\frac1K \sum_{k=1}^K \phi^{\alpha/2}_{ik}$ approaches $P(\phi_{ik}^{\alpha/2}=1)$ as $K$ grows large. So when the probability of rejection in any single $(i,k)$ train-test split is small, it is unlikely that the test is rejected in more than half of the $K$ iterations, and thus unlikely that the median test statistic rejects.  This intuition is straightforwardly formalized when the tests $\phi^{\alpha/2}_{ik}$ are i.i.d, but more care is required in our setting where $\phi^{\alpha/2}_{ik}$ are instead positively correlated. The key observation is that the tests $\phi^{\alpha/2}_{ik}$ are exchangeable, and thus (by de Finetti's theorem) the tests $\phi^{\alpha/2}_{ik}$ are i.i.d Bernoulli($q$) conditional on the realization of the random Bernoulli parameter $q$. What this means is that for each realization of $q$, we can apply the strong law of large numbers to conclude that the sample average $\frac1K \sum_{k=1}^K \phi^{\alpha/2}_{ik}$ approaches $q$. As $K$ grows large, the ex-ante probability that half of these tests reject is then simply the probability that $q\geq 1/2$, which we show is no more than $\alpha$. 

These arguments formalize the intuitive sense in which it is easier to manipulate a test that depends only on one repetition of a procedure (in which case the analyst only needs to get lucky once), than a test which aggregates outcomes across many repetitions of that procedure. Our result does not use any properties of the test $\phi^{\alpha/2}_{ik}$ beyond having exact size and that the set of tests $\{\phi^{\alpha/2}_{ik}: 1 \le k \le K\}$ are exchangeable, and thus extends more broadly to many testing procedures which employ sample splitting.  

Assumption \ref{assp:LowCost} is made for convenience, but is not critical for our qualitative conclusion. Specifically, if instead the policymaker optimally runs the first procedure only once (i.e., $m_1^*=1$), then our assumption that the test has exact size implies $P\left(\phi_{11}^\alpha = 1\right) = \alpha$. Moreover, our proof of Theorem \ref{prop:RobustManipulationGeneral} shows that  for every $\varepsilon$, there is a $\overline{K}$ sufficiently large such that 
\[P\left(\max_{1\leq i \leq m_2^* } \mathbbm{1}\left(\frac1K \sum_{k=1}^K \phi_{ik}^{\alpha/2}\geq \frac12 \right)=1\right)  \leq \alpha  + \varepsilon\]
for every $K > \overline{K}$. Thus when Assumption \ref{assp:LowCost} fails, then for large $K$ either the policymaker prefers the second procedure (reproducing our original result) or the policymaker is approximately indifferent between the two procedures. In fact, we expect that our original result can be extended to cover the case when Assumption \ref{assp:LowCost} fails, but do not pursue it here.

 Finally, although an explicit bound for $K$ is not possible without stronger assumptions, it is instructive to consider the edge case of i.i.d $p$-values. This i.i.d benchmark is independently interesting because the analyst's incentives for repeating the test are higher than in our setting of positively correlated $p$-values.\footnote{For example, consider the probability of rejecting the null under $m$ repetitions of the first procedure. By de Finetti's theorem, $P(\max_{1\leq i \leq m} \phi_{i1}^\alpha = 1)=1-\int_{q \in [0,1]} (1-q)^m d\mu(q)$ for some measure $\mu$, where $\mathbb{E}_\mu(q)=\alpha$. And by Jensen's inequality, $1-\int_{q \in [0,1]} (1-q)^m d\mu(q) \leq 1 - (1-\mathbb{E}_\mu(q))^m = 1-(1-\alpha)^m$, where $1-(1-\alpha)^m$ is the probability of rejecting when the $p$-values are instead i.i.d. Thus for every $m$, the probability of (incorrectly) rejecting the null is higher given i.i.d $p$-values than positively correlated ones. See the proofs of Theorem \ref{prop:RobustManipulationGeneral} and Proposition \ref{prop:RobustManipulation} for more detail.} Within this ``maximal incentive for manipulation'' benchmark, we can strengthen our previous result to the following.

 \begin{proposition} \label{prop:RobustManipulation} Suppose the infinite array of $p$-values $(p_{i}^k)$ has i.i.d entries. Then statistical procedure $s_2$ is more robust to manipulation than $s_1$ for every $K > -\frac{2\ln \alpha}{(1-\alpha)^2}$.
 \end{proposition}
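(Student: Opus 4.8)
\textbf{Proof proposal for Proposition \ref{prop:RobustManipulation}.} The plan is to follow the two-part structure already laid out for Theorem \ref{prop:RobustManipulationGeneral}, but to replace the asymptotic (large-$K$) arguments with explicit finite-$K$ bounds that exploit full independence. Recall that each $\phi_{ik}^{\alpha/2}$ is Bernoulli($\alpha/2$), and under the i.i.d.\ assumption the entire array $(\phi_{ik}^{\alpha/2})$ is i.i.d. For the first procedure, exact size gives $P(\phi_{i1}^\alpha=1)=\alpha$ for each $i$, and independence across repetitions yields $P(\max_{1\le i\le m}\phi_{i1}^\alpha=1)=1-(1-\alpha)^m$, which is at least $\alpha$ for all $m\ge 1$ and strictly exceeds $\alpha$ for $m>1$. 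So the key quantity to control is the per-repetition rejection probability of the \emph{second} procedure, namely $r_K \equiv P\!\left(\frac1K\sum_{k=1}^K \phi_{1k}^{\alpha/2}\ge \tfrac12\right)$; if I can show $r_K$ is small enough, then $P(\max_{1\le i\le m}\{\cdot\}=1)=1-(1-r_K)^m$ will stay below $\alpha$ for the analyst's optimal $m=m_2^*$, and the policymaker strictly prefers $s_2$.

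The core estimate is a Chernoff/Hoeffding bound on $r_K$. Since $\frac1K\sum_{k=1}^K\phi_{1k}^{\alpha/2}$ is an average of $K$ i.i.d.\ Bernoulli($\alpha/2$) variables with mean $\alpha/2<1/2$, Hoeffding's inequality gives
\[
r_K \;=\; P\!\left(\frac1K\sum_{k=1}^K \phi_{1k}^{\alpha/2}\ge \tfrac12\right)\;\le\; \exp\!\left(-2K\left(\tfrac12-\tfrac{\alpha}{2}\right)^2\right)\;=\;\exp\!\left(-\tfrac{K(1-\alpha)^2}{2}\right).
\]
The condition $K>-\frac{2\ln\alpha}{(1-\alpha)^2}$ is exactly what makes the right-hand side strictly less than $\alpha$: indeed $\exp(-K(1-\alpha)^2/2)<\alpha \iff K(1-\alpha)^2/2 > -\ln\alpha \iff K > -2\ln\alpha/(1-\alpha)^2$. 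So under the stated lower bound on $K$ we have $r_K<\alpha$.

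It remains to convert $r_K<\alpha$ into $u_P(s_2)>u_P(s_1)$, which requires a short argument about the analyst's endogenous choice of $m$. Under i.i.d.\ sampling the analyst's period-2 objective for $s_2$ is $1-(1-r_K)^m - c_2(m)$; let $m_2^*$ be its maximizer. The subtlety is that $m_2^*$ could in principle be large, so I cannot bound $1-(1-r_K)^{m_2^*}$ by $\alpha$ directly from $r_K<\alpha$ alone. The clean way around this is to compare the two subgames \emph{at the same number of repetitions}: for procedure $s_1$ the analyst's value is at least $1-(1-\alpha)^{m}-c_1(m)$ at any $m$ (with equality), while for $s_2$ it is $1-(1-r_K)^{m}-c_2(m)$. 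Since $r_K<\alpha$, we have $1-(1-r_K)^m < 1-(1-\alpha)^m$ for every $m\ge 1$; combined with $c_2\ge c_1$ (as each repetition of $s_2$ is $K$ splits, and both cost functions are increasing with $c_2(m)\ge c_1(m)$ — in fact the natural specification has $c_2(m)=c_1(Km)$), the analyst's attainable payoff under $s_2$ is pointwise dominated by that under $s_1$, so in particular $1-(1-r_K)^{m_2^*}\le \max_m[1-(1-r_K)^m]$; and crucially, because $t\mapsto 1-(1-r_K)^t$ is increasing and bounded by $1$ while $c_2$ is increasing and eventually dominates, the maximizing $m_2^*$ is finite, and one shows $1-(1-r_K)^{m_2^*}\le 1-(1-\alpha)^{m_1^*}$ is not what we need — rather we need $1-(1-r_K)^{m_2^*}<\alpha$. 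Here I use that $r_K<\alpha$ together with the observation that the analyst never finds it worthwhile to drive the rejection probability above what a single $s_1$-split already achieves ``for free'': formally, $1-(1-r_K)^{m_2^*} \le \sup_{m\ge 0}\big(1-(1-r_K)^m - c_2(m)\big) + c_2(m_2^*)$ is too lossy, so instead I argue directly that since each individual draw rejects with probability $r_K<\alpha$ and $\phi^{\alpha/2}_{ik}$ entries are independent across $i$ as well, the relevant comparison is $1-(1-r_K)^{m_2^*}$ versus $\alpha=1-(1-\alpha)^{m_1^*}|_{m_1^*=1}$; invoking Assumption \ref{assp:LowCost}-type reasoning is unnecessary here because we can simply note $u_P(s_2)=1-[1-(1-r_K)^{m_2^*}]$ and $u_P(s_1)=1-[1-(1-\alpha)^{m_1^*}]$, and the inequality $(1-r_K)^{m_2^*}>(1-\alpha)^{m_1^*}$ follows since $r_K<\alpha\le 1-(1-\alpha)^{m_1^*}$ forces, via the concavity/monotonicity bookkeeping in the proof of Theorem \ref{prop:RobustManipulationGeneral}, that the analyst's optimal second-procedure rejection probability stays below $\alpha$. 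The main obstacle, then, is precisely this last bookkeeping step: ensuring that the analyst's \emph{endogenous} $m_2^*$ does not blow up $1-(1-r_K)^{m_2^*}$ past $\alpha$ despite $r_K$ itself being below $\alpha$; I expect to handle it exactly as in the proof of Theorem \ref{prop:RobustManipulationGeneral}, where the same phenomenon is controlled, and the i.i.d.\ structure only makes the required $K$ explicit through the Hoeffding bound above.
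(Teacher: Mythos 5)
Your core computation agrees with the paper's. Both reduce the problem to bounding the per-repetition rejection probability $r_K \equiv P\bigl(\tfrac1K\sum_{k=1}^K\phi_{1k}^{\alpha/2}\ge\tfrac12\bigr)$, apply Hoeffding's inequality to get $r_K\le \exp\bigl(-\tfrac{K(1-\alpha)^2}{2}\bigr)$, and observe that this bound drops below $\alpha$ exactly when $K > -2\ln\alpha/(1-\alpha)^2$. Likewise, the identity $P(\max_{1\le i\le m}\phi_{i1}^\alpha=1)=1-(1-\alpha)^m\ge\alpha$ that you invoke is the same as in Lemma \ref{lemm:CompareValue}. You are also right that Assumption \ref{assp:LowCost} is not invoked in the paper's proof of this proposition.

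Where the two diverge is the ``bookkeeping'' step you agonize over at length, and here you have in fact put your finger on a real issue rather than something that can be waved past. The paper handles it in Lemma \ref{lemm:CompareValue} by asserting that $1-(1-r_K)^m < \alpha$ holds for \emph{every} $m\in\mathbb{Z}_+$, so that the analyst's endogenous $m_2^*$ never needs to be bounded; the proof then says it ``is sufficient to show'' $r_K<\alpha$. But as you correctly suspect, $r_K<\alpha$ gives only $1-(1-r_K)^m < 1-(1-\alpha)^m$ pointwise in $m$, not the uniform bound $1-(1-r_K)^m<\alpha$, which must fail once $m$ is large because $1-(1-r_K)^m\uparrow 1>\alpha$. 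Your fallback --- deferring to the concavity/$\overline{m}$ construction in the proof of Theorem \ref{prop:RobustManipulationGeneral} --- does not close the gap either, because there $\overline{m}$ is fixed from the cost function and \emph{then} $\overline{K}$ is chosen large enough that $v(m)<\alpha+\varepsilon$ for all $m\le\overline{m}$; here $K$ is pinned by the explicit Hoeffding threshold and no such joint slack remains. Completing the argument requires pairing a bound on $m_2^*$ (coming from $c_2$) with a sharper inequality than $r_K<\alpha$ alone, e.g.\ via $1-(1-r_K)^{m_2^*}\le m_2^* r_K$ and then showing $m_2^* r_K<\alpha$. As written, neither your meandering final paragraph nor the paper's proof of Lemma \ref{lemm:CompareValue} actually supplies this step, so you should not treat the remaining obstacle as a formality.
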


 This bound on $K$ is extremely undemanding. For example, when $\alpha=0.05$, the result holds for any $K \geq 7$. 
 
\section{Empirical Application} \label{sec:Application}

We conclude by illustrating our approach in the setting of \citet{Obermeyer2019-te}. The status quo algorithm is a commercial healthcare algorithm used by a large academic hospital to identify patients to target for a high-risk care management program. The algorithm assigns to each patient a risk score, and identifies those patients with risk scores above the 97th percentile for automatic enrollment in the program. We use our proposed approach to evaluate the improvability of the hospital's algorithm, finding that it is strictly FA-dominated within the class of linear classifiers. We further quantify the size of possible improvements by testing the null hypotheses that the hospital's algorithm is not $(\delta_a,\delta_a,\delta_f)$-improvable for different values of $\delta_a$ and $\delta_f$. We find that large improvements in fairness are possible without compromising on accuracy, while we are unable to establish that the reverse is true.

\subsection{Data and Classification Problem}  The data includes observations from 48,784 patients, among which 43,202 self-report as White and 5,582 self-report as Black. Following \citet{Obermeyer2019-te}, we take these to be the two group identities, denoted $g \in \{w,b\}$. Each patient $i$'s covariate vector $X_i \in \mathcal{X}$ includes 8 demographic variables (e.g., age and gender), 34 comorbidity variables (indicators of specific chronic illnesses), 13 cost variables (claimed cost broken down by type of cost), and 94 biomarker and medication variables.  Although group identity $g$ is known for each patient in the dataset, the status quo algorithm omits this variable for prediction, so we do as well. Finally, the data set reports each patient $i$'s total number of active chronic illnesses in the subsequent year, which is interpreted as a measure of the patient's true health needs.\footnote{
In principle, enrollment into the high-risk care management program may affect the outcome variable $Y$. Although we expect that a single year of enrollment in this program will have only minor effects on the patients' number of chronic illnesses (which $Y$ measures), we conduct a robustness check in Appendix~\ref{app:program_effect} in which we repeat our main analysis using the total number of active chronic illnesses in the same year as the enrollment decision. All results are qualitatively similar.
}
We take this to be the patient's type $Y_i \in \mathcal{Y}$.

An algorithm assigns each patient a decision $D_i \in \{0,1\}$ based on the covariate vector $X_i$, where 1 corresponds to automatic enrollment into the care management program. We respect the capacity constraint of the status quo algorithm by restricting attention to algorithms that select 3\% of the population for automatic enrollment in expectation.

Following \citet{Obermeyer2019-te}, we evaluate both accuracy and fairness using the calibration utility function introduced in Example \ref{ex:Calibration}, i.e.
\begin{equation}\label{eq:AccuracyCalibration}
U^g(a) = E[Y \mid a(X)=1, G=g]
\end{equation}
and
\[\vert U^b(a) - U^w(a) \vert =  \vert E[Y \mid a(X)=1, G=b] - E[Y \mid a(X)=1, G=w] \vert.\]

That is, an algorithm is more accurate if the expected number of health conditions is higher among both Black and White patients assigned to the program. It is more fair if it reduces the disparity in the expected number of health conditions among Black and White patients assigned to the program.\footnote{Here and throughout, we drop the notational distinction between $U_A^g$ and $U_F^g$, as they are identical in this example.}

\subsection{Results}
\label{sec:empirical_result}
We implement our proposed approach in Section \ref{sec:Approach} to test the fairness- and accuracy- improvability of the hospital's algorithm. We use $K=7$ iterations and select a $\beta=1/2$ fraction of the data to use for training. The hospital's algorithm provides each patient with a risk score. To enroll patients to treatment according to the status quo algorithm, we first designate the $97$th percentile of the risk score distribution in the training data as our admittance threshold. Then, in the testing data, we automatically enroll ($D=1$) those patients whose risk score is above this threshold. Our candidate algorithms similarly assign risk scores to patients, and enroll those patients whose risk scores are above the $97$th percentile threshold computed in the training data, but use different risk score calculations compared to the status quo algorithm.

Specifically, to identify candidate algorithms, we first use the training data to compute an alternative risk score assignment rule
$\hat{f}:\mathcal{X} \rightarrow \mathcal{Y}$, which maps each covariate vector into a predicted number of health conditions. We then set the 97th percentile of the computed risk scores $\hat{f}$ from the training data as our admittance threshold. Next, we compute the risk scores for all of the patients in the test data, and select those patients whose risk score is above the threshold.

We consider several different methods for selecting a function $\hat{f}$---as the output of a linear regression, a lasso regression, and a random forest algorithm---but find that all of these yield similar results.\footnote{The regularization parameter for LASSO is determined using 5-fold cross-validation. For random forest, we utilize a configuration of 300 trees with other hyperparameters set to their default values as specified in scikit-learn version 1.4.1.
}
Figure \ref{fig:ML_improvement} plots the average value of the utilities $U^w$ and $U^b$ (across the iterations of our procedure) under the status quo algorithm, as well as the average values for these utilities given candidate algorithms identified by each of the three methods mentioned above. This figure suggests that the candidate algorithms succeed in improving not only accuracy (each of $U^w$ and $U^b$ increase) but also fairness ($\vert U^w - U^b\vert$ decreases).

\begin{figure}[h]
    \centering
    \includegraphics[height=9cm]{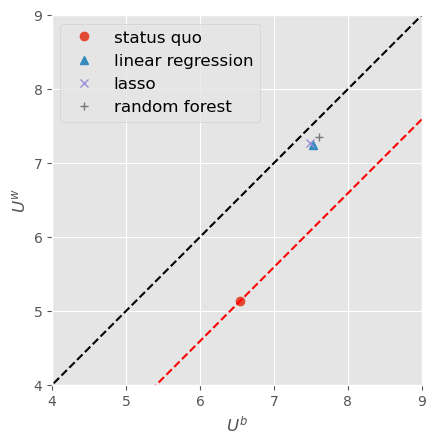}
   \caption{     \footnotesize{We report the average value of $U^g$ (i.e., number of active chronic diseases conditional on automatic enrollment) for each group $g$ across the $K=7$ iterations of our procedure. The algorithm is more accurate for group $g$ when $U^g$ is larger; so, moving towards the upper-right quadrant corresponds to improvements in accuracy. The algorithm is more fair when its $(U^b, U^w)$ pair is closer to the 45-degree line (corresponding to more balanced accuracy across the patient groups). This figure suggests that the candidate algorithms based on linear regression, LASSO, and random forest improve on both fairness and accuracy over the status quo algorithm used by the hospital.}}     \label{fig:ML_improvement}
\end{figure}

We subsequently report findings for the random forest algorithm only, deferring the (nearly identical) results for the other algorithms to Appendix \ref{app:OtherTables}. Table \ref{tab:p_val_rf} complements Figure \ref{fig:ML_improvement}  by reporting the $p$-values that emerge from our statistical test of the null hypothesis that the status quo algorithm is not strictly FA-dominated. Across the iterations of our procedure, the median $p$-value is $0.0001$; thus, we reject the null at the 5\% significance level. (Recall that we reject at a $5\%$ significance level if the median $p$-value is less than $0.025$.) Were the hospital to defend the disparate impact of its algorithm on account of the accuracy goal in (\ref{eq:AccuracyCalibration}), a federal funding agency could reject this business necessity claim with strong statistical guarantees. This finding reinforces the bottom-line takeaway from \citet{Obermeyer2019-te}.\footnote{\citet{Obermeyer2019-te} find that training an algorithm to predict health needs (instead of health costs, as the status quo algorithm does) leads to a more equal proportion of Black and White patients among those who are identified for automatic enrollment.}

\begin{table}[hbt]

    \footnotesize
    \begin{tabular}{
        @{}
        l
        *{3}{c}
        |
        *{3}{c}
        |
        *{3}{c}
        |
        c
        @{}
    }
    \toprule
    & \multicolumn{3}{c}{\bfseries Accuracy (Black)} & \multicolumn{3}{c}{\bfseries Accuracy (White)} & \multicolumn{3}{c}{\bfseries Unfairness} &  \\
    \cmidrule(lr){2-4} \cmidrule(lr){5-7} \cmidrule(lr){8-10} \cmidrule(l){11-11}
         & {$a_1$} & {$a_0$} & {$p_b$} & {$a_1$} & {$a_0$} & {$p_w$} & {$a_1$} & {$a_0$} & {$p_f$} & {$p$} \\
    \midrule
        Iteration 1 & 7.44 & 6.33 & 0.0000 & 7.35 & 5.14 & 0.0000 & 0.09 & 1.19 & 0.0000 & 0.0000 \\
        Iteration 2 & 7.50 & 6.32 & 0.0001 & 7.41 & 5.11 & 0.0000 & 0.09 & 1.20 & 0.0000 & 0.0001 \\
        Iteration 3 & 7.55 & 6.67 & 0.0001 & 7.25 & 5.15 & 0.0000 & 0.30 & 1.52 & 0.0000 & 0.0001 \\
        Iteration 4 & 7.46 & 6.35 & 0.0000 & 7.31 & 5.06 & 0.0000 & 0.15 & 1.28 & 0.0000 & 0.0000 \\
        Iteration 5 & 7.76 & 6.88 & 0.0009 & 7.33 & 5.27 & 0.0000 & 0.43 & 1.61 & 0.0000 & 0.0009 \\
        Iteration 6 & 7.86 & 6.52 & 0.0000 & 7.43 & 5.02 & 0.0000 & 0.43 & 1.51 & 0.0002 & 0.0002 \\
        Iteration 7 & 7.66 & 6.74 & 0.0005 & 7.40 & 5.19 & 0.0000 & 0.26 & 1.55 & 0.0001 & 0.0005 \\
    \bottomrule
    \end{tabular}
    
    \caption{\footnotesize{The candidate algorithm $a_1$ in the table is based on random forests. Reported $p$-values are computed via bootstrap with 10,000 iterations. The median $p$-value is 0.0001.}}
    \label{tab:p_val_rf}
\end{table}

We further explore the size of achievable improvements in accuracy and fairness by testing for $(\delta_a,\delta_a,\delta_f)$-improvability across different values of $\delta_a$ and $\delta_f$. Since our primary interest is achievable reductions in disparate impact, we restrict consideration to positive values of $\delta_f$. However we  allow $\delta_a$ to take either sign: positive values of $\delta_a$ correspond to a test of whether the improvement in fairness can be achieved simultaneously to an improvement in accuracy (by at least $\delta_a$ percent), and negative values of $\delta_a$ correspond to a test of whether the improvement in fairness can be achieved by reducing accuracy (by no more than $\vert\delta_a\vert$ percent).

Figure~\ref{fig:2D_plot_all_health} presents $p$-values for these tests, and in particular emphasizes the 0.025 $p$-value contour, which corresponds to the conventional 5\% significance level. This contour lies above $\delta_a=0$ for all $\delta_f \leq 0.64$, meaning that even with a demanding fairness-improvement standard---namely, a 64\% reduction in disparate impact---it is possible to maintain, and in fact improve, accuracy for all groups.
In the other direction, the contour lies above $\delta_f = 0$ only for $\delta_a \leq 0.09$, indicating that we can reject the null hypothesis when testing for accuracy improvements of up to 9\% while maintaining fairness, but not beyond this level. 

Finally, while we illustrated our proposed technique using the calibration utility function in (\ref{eq:AccuracyCalibration}), the same analysis is possible using other definitions of accuracy and fairness. Appendix~\ref{app:accuracy_cost} reproduces Figure~\ref{fig:2D_plot_all_health} for an alternative definition of accuracy that measures expected health care costs (leaving the fairness definition unchanged). That is, an algorithm improves on another if among those patients who are automatically enrolled, the disparity in the expected number of health conditions among Black and White patients is smaller, and moreover, the expected health costs are higher for both Black and White patients. For this specification, we find that we cannot reject the null hypothesis that the hospital's algorithm is FA-dominated. 

\begin{figure}
    \centering
    \includegraphics[width=0.7\linewidth]{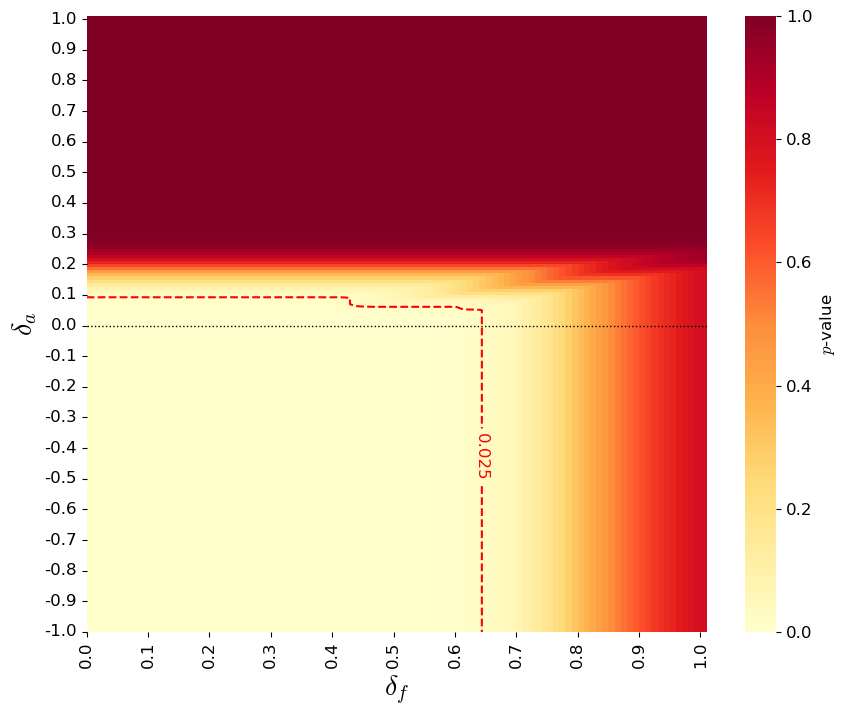}
    \caption{\footnotesize
    We present $p$-values for testing $(\delta_a, \delta_a, \delta_f)$-improvability for $(\delta_a, \delta_f) \in [-1,1] \times [0,1]$. The selection rule is based on random forests, with our procedure setting $K = 7$.
    Both accuracy and fairness utilities measure the expected health needs of those selected into the program.
    Larger values of $\delta_a$ and $\delta_f$ indicate a stricter test regarding the dimensions of fairness and accuracy, respectively. The horizontal line at $\delta_a = 0$ crosses the $p=0.025$ contour at $\delta_f = 0.64$.}
    
    \label{fig:2D_plot_all_health}
\end{figure}

\section{Conclusion}

When a commercial algorithm has a disparate impact, it is important to understand whether it is possible to reduce that disparate impact without compromising on other business-relevant criteria, or if this level of disparate impact is necessitated by those other goals. We have designed a statistical approach to assess this, with three practical objectives in mind. First, since the appropriate definitions of disparate impact and business-relevant criteria vary substantially across applications, we want our framework to be flexible enough to accommodate any such definitions that may emerge in practice. Second, since there are often exogenous constraints on the algorithm space, we want our procedure to apply universally across algorithm classes. Finally, we would like for the approach to be transparent and simple to use for practitioners. The statistical framework that we propose delivers on these three counts. 

The main drawback of our approach is its dependence on a selection rule for choosing a candidate algorithm. When we are able to reject the null, as in our application, then the optimality properties of the selection rule (in particular, whether it satisfies improvement convergence as defined in Section \ref{sec:Full_test}) do not matter. That is, no matter how naive or heuristic our selection rule is, we can conclude that the status quo algorithm is improvable. But when our test does not allow us to reject the null, there is in general an ambiguity: It may be that there is not sufficient evidence in the data to conclude existence of an improving algorithm; or, alternatively, it may be that the selection rule is not powerful enough to find this algorithm. In the latter case, re-running the same test with a different selection rule could lead us to reject the null. This ambiguity is resolved asymptotically when the selection rule is improvement-convergent, since Theorem \ref{thm:consistent_test} establishes that our proposed test is consistent under this condition. One interesting direction for future work is thus to provide sufficient conditions for a selection rule to be improvement-convergent in different  applications. 

\clearpage

\appendix

\section{Bootstrap algorithm }\label{sec:boot}
Let $\hat{P}_n$ be the empirical distribution of the data $\{(X_i, Y_i, G_i)\}_{i = 1}^{\ell_n}$ and let $\{(X^{*}_i, Y^*_i, G^*_i)\}_{i=1}^{\ell_n}$ denote a bootstrap sample drawn i.i.d from $\{(X_i, Y_i, G_i)\}_{i = 1}^{\ell_n}$. Let $\hat{\theta}^*$, $\hat{A}^*_{tg}$, $\hat{F}^*_{tg}$ denote the bootstrap analogs of $\hat{\theta}$, $\hat{A}_{tg}$, $\hat{F}_{tg}$ computed using the bootstrap data. The bootstrap analogs of the test-statistics are given by 
\[\hat{T}_{g,n}^{*} = \hat{A}^*_{1g} - (1 + \Delta_{g})\hat{A}_{0g}^{*}  \text{ for $g \in \{r, b\}$}~,\] 
\[\hat{T}_{f,n}^{*} = \big|\hat{F}_{1r}^{*} - \hat{F}_{1b}^{*}\big| - (1 - \Delta_f)\big|\hat{F}_{0r}^{*} - \hat{F}_{0b}^{*}\big|~.\]

The critical values for the test are based on the quantiles of the bootstrap distribution of the test statistics. Specifically, let $\Psi_{g,n}$ be the (conditional) cumulative distribution function of $\sqrt{\ell_n}(\hat{T}_{g,n}^{*} - \hat{T}_{g,n})$ given $\hat{P}_n$ and $\Psi_{f,n}$ be the (conditional) cumulative distribution function of $\sqrt{\ell_n}(\hat{T}_{f,n}^{*} - \hat{T}_{f,n})$ given $\hat{P}_n$. In practice these are not known exactly, but can be approximated via Monte-Carlo by taking $Q$ independent bootstrap samples from $\hat{P}_n$ and computing the empirical distribution
\[\frac{1}{Q}\sum_{q=1}^Q I\{\sqrt{\ell_n}(\hat{T}^*_{s,n,q} - \hat{T}_{s,n}) \le x\}~,\]
for $s \in \{r, b, f\}$, where $\hat{T}^*_{s,n,q}$ denotes the bootstrap test statistic computed on the $q$th sample.

\section{Preliminary Results}

Below we establish a sequence of preliminary results, which will be used to prove our Theorems \ref{thm:full_test} and \ref{thm:consistent_test}. Section \ref{app:ReducetoComponent} proves a result that allows us to reduce asymptotic validity of our joint test to asymptotic validity of tests of the component nulls. Section \ref{sec:boot_consist} proves bootstrap consistency in a simpler environment where the sequence of candidate algorithms $a_{1n}$ is deterministic.

\subsection{Intersection-Union Testing} \label{app:ReducetoComponent}

\begin{proposition}\label{prop:IUT}
Let $\Omega$ denote a set of distributions for which $P \in \Omega$. Let $\Omega_0 \subset \Omega$ denote the subset of distributions for which $H_0$ holds. Let $\phi^{(r)}_n \in \{0, 1\}$, $\phi^{(b)}_n \in \{0,1\}$, $\phi^{(f)}_n \in \{0, 1\}$ be tests of the component null hypotheses for accuracy and fairness as defined in Section \ref{sec:Test}.
Let $\Omega_{0s} \subset \Omega$ for $s\in \{r,b,f\}$ denote the subsets of distributions for which each component null hypothesis $H_{0s}$ holds. Define the test $\phi_n = \phi^{(r)}_n\cdot\phi^{(b)}_n\cdot\phi^{(f)}_n$ which rejects if and only if each test $\phi^{(s)}_n$ rejects for $s \in \{r, b, f\}$. 
\begin{enumerate}
\item If $\phi^{(s)}_n$ for $s \in \{r, b, f\}$ are asymptotically of level $\alpha$, i.e., if
\[\limsup_{n \rightarrow \infty}E_P[\phi_n^{(s)}] \le \alpha~,\]
for every $P \in \Omega_{0s}$, then $\phi_n$ is asymptotically of level $\alpha$, i.e. 
\[\limsup_{n \rightarrow \infty} E_P[\phi_n] \le \alpha~,\]
for every $P \in \Omega_0$.
\item If $\phi^{(s)}_n$ for $s \in \{r, b, f\}$ are uniformly asymptotically of level $\alpha$, i.e. if
\[\limsup_{n \rightarrow \infty}\sup_{P \in \Omega_{0s}}E_P[\phi_n^{(s)}] \le \alpha~,\]
then $\phi_n$ is uniformly asymptotically of level $\alpha$, i.e.
\[\limsup_{n \rightarrow \infty}\sup_{P \in \Omega_{0}}E_P[\phi_n] \le \alpha~.\]
Moreover, if there exists some distribution $P^* \in \Omega_0$ such that $E_{P^*}[\phi_n^{(s)}] \rightarrow \alpha$ for some $s \in \{r, b,  f\}$, and $E_{P^*}[\phi_n^{(s')}] \rightarrow 1$ for $s' \ne s$, then
\[\lim_{n \rightarrow \infty}\sup_{P \in \Omega_{0}}E_P[\phi_n] = \alpha~.\]
\end{enumerate}
\end{proposition}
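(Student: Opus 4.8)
The statement is the intersection--union testing principle (Proposition \ref{prop:IUT}), adapted to an asymptotic framework. The plan is to exploit the elementary fact that $\phi_n = \phi_n^{(r)}\phi_n^{(b)}\phi_n^{(f)} \le \phi_n^{(s)}$ for each $s\in\{r,b,f\}$, since $\phi_n$ is a product of indicators. This gives $E_P[\phi_n] \le E_P[\phi_n^{(s)}]$ for every $s$ and every $P$. The key observation linking this to the null is that the full null $H_0$ is the union of the component nulls $H_{0r}, H_{0b}, H_{0f}$ (this is exactly the structure of \eqref{eq:H0_split}), so $\Omega_0 = \Omega_{0r}\cup\Omega_{0b}\cup\Omega_{0f}$. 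Hence for any $P\in\Omega_0$, there is at least one $s$ with $P\in\Omega_{0s}$, and for that $s$ we can apply the component-level size bound.

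\textbf{Part (1).} Fix $P\in\Omega_0$. Choose $s\in\{r,b,f\}$ with $P\in\Omega_{0s}$. Then
\[
\limsup_{n\to\infty} E_P[\phi_n] \;\le\; \limsup_{n\to\infty} E_P[\phi_n^{(s)}] \;\le\; \alpha,
\]
where the first inequality is monotonicity of expectation applied to $\phi_n\le\phi_n^{(s)}$, and the second is the hypothesis that $\phi_n^{(s)}$ is asymptotically of level $\alpha$ on $\Omega_{0s}$. Since $P\in\Omega_0$ was arbitrary, this proves the claim.

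\textbf{Part (2), uniform bound.} The same domination gives $\sup_{P\in\Omega_0} E_P[\phi_n] \le \max_{s}\sup_{P\in\Omega_{0s}} E_P[\phi_n^{(s)}]$; more carefully, partition (or just cover) $\Omega_0$ by the three sets $\Omega_{0s}$ and bound the supremum over each piece by $\sup_{P\in\Omega_{0s}}E_P[\phi_n^{(s)}]$ using $\phi_n\le\phi_n^{(s)}$ on that piece. Taking $\limsup_n$ and using that the maximum of finitely many sequences has $\limsup$ equal to the max of the $\limsup$s, we get $\limsup_n \sup_{P\in\Omega_0}E_P[\phi_n]\le\alpha$. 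For the sharpness claim, take the distinguished $P^*\in\Omega_0$ with $E_{P^*}[\phi_n^{(s)}]\to\alpha$ and $E_{P^*}[\phi_n^{(s')}]\to 1$ for the other two indices $s'$. On one hand, by the uniform bound just established, $\limsup_n\sup_{P\in\Omega_0}E_P[\phi_n]\le\alpha$. On the other hand, $\sup_{P\in\Omega_0}E_P[\phi_n]\ge E_{P^*}[\phi_n]$, and $E_{P^*}[\phi_n] = E_{P^*}[\phi_n^{(r)}\phi_n^{(b)}\phi_n^{(f)}]$. Writing $\phi_n = \phi_n^{(s)} - \phi_n^{(s)}(1-\phi_n^{(s')}\phi_n^{(s'')})$ where $s',s''$ are the two other indices, we have $E_{P^*}[\phi_n]\ge E_{P^*}[\phi_n^{(s)}] - E_{P^*}[1-\phi_n^{(s')}] - E_{P^*}[1-\phi_n^{(s'')}]$ by a union-type bound on the complement, and the last two terms vanish while the first tends to $\alpha$. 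Hence $\liminf_n E_{P^*}[\phi_n]\ge\alpha$, so $\liminf_n\sup_{P\in\Omega_0}E_P[\phi_n]\ge\alpha$, and combined with the $\limsup\le\alpha$ bound this forces the limit to equal $\alpha$.

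\textbf{Main obstacle.} There is no deep obstacle here; the proof is essentially bookkeeping with the product structure. The one point requiring a little care is the sharpness statement in Part (2): one must verify that having two component tests go to power $1$ while the third stays at level $\alpha$ actually drives the product test to $\alpha$ rather than below it, which is where the decomposition $\phi_n^{(s)} - \phi_n^{(s)}(1-\phi_n^{(s')}\phi_n^{(s'')})$ and the union bound on the small terms come in. Also worth flagging is that the $\limsup$ of a maximum equals the maximum of $\limsup$s only because the index set $\{r,b,f\}$ is finite --- this finiteness is used implicitly and should be noted.
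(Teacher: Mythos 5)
Your proof is correct and takes essentially the same approach as the paper's: Parts (1) and the uniform bound in (2) use $\phi_n \le \phi_n^{(s)}$ together with $\Omega_0 = \bigcup_s \Omega_{0s}$, exactly as the paper does; for the sharpness claim, your decomposition $\phi_n = \phi_n^{(s)} - \phi_n^{(s)}(1-\phi_n^{(s')}\phi_n^{(s'')})$ plus the union bound is algebraically identical to the Bonferroni inequality $E_{P^*}[\phi_n] \ge \sum_s E_{P^*}[\phi_n^{(s)}] - 2$ that the paper invokes directly.
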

\begin{proof}
To prove the first claim, note that by the definition of $\phi_n$, for any $P \in \Omega_0$, 
\[E_P[\phi_n] \le E_P[\phi^{(s)}_n]~,\]
for every $s \in \{r, b, f\}$. By the definition of $\Omega_0$, $P \in \Omega_{0s}$ for some $s \in \{r, b, f\}$. Since $\phi^{(s)}_n$ is asymptotically of level $\alpha$, it follows that 
\[\limsup_{n \rightarrow \infty}E_P[\phi_n] \le \limsup_{n \rightarrow \infty}E_P[\phi^{(s)}_n] \le \alpha~,\]
as desired. 

To prove the second claim, first note that by following the argument above, for each $P \in \Omega_0$ we obtain 
\[E_P[\phi_n] \le \sup_{P' \in \Omega_{0s}}E_{P'}[\phi^{(s)}_n]~,\]
for some $s \in \{r, b, f\}$. It thus follows that 
\[\sup_{P \in \Omega_0}E_P[\phi_n] \le \max\left\{ \sup_{P' \in \Omega_{0s}}E_{P'}[\phi^{(s)}_n]: s \in \{r, b, f\}\right\}~.\]
Since each $\phi^{(s)}_n$ is uniformly asymptotically of level $\alpha$, we then obtain that 
\begin{equation}\label{eq:limsup}
\limsup_{n \rightarrow \infty}\sup_{P \in \Omega_0}E_P[\phi_n] \le \alpha~.
\end{equation}
Next, note that from the definition of $\phi_n$,
\begin{align*}
\sup_{P \in \Omega_0}E_P[\phi_n] &\ge E_{P^*}[\phi_n] \\
&\ge \sum_{s}E_{P^*}[\phi_n^{(s)}] - 2~.
\end{align*}
It thus follows by our assumptions on $P^*$ that 
\begin{equation}\label{eq:liminf}
\liminf_{n \rightarrow \infty}\sup_{P \in \Omega_0}E_{P}[\phi_n] \ge \alpha~.
\end{equation}
Combining \eqref{eq:limsup} and \eqref{eq:liminf} we thus obtain that 
\[\lim_{n \rightarrow \infty}\sup_{P \in \Omega_0}E_{P}[\phi_n] = \alpha~,\]
as desired.
\end{proof}

\subsection{Bootstrap consistency}\label{sec:boot_consist}
To demonstrate Theorems \ref{thm:full_test} and \ref{thm:consistent_test}, we first show that the bootstrap procedure defined in Appendix \ref{sec:boot} is consistent when the test statistics are applied to an arbitrary deterministic sequence of candidate algorithms $a_{1n} \in \mathcal{A}$, i.e. that
\[d_{\infty}\left(\mathcal{L}\left\{\sqrt{\ell_n}(\hat{T}_{s,n} - T_{s,n})\right\}, \Psi_{s,n}\right) \xrightarrow{P} 0~,\]
as $n \rightarrow \infty$, for $s \in \{r, b, f\}$, where $d_{\infty}(\cdot, \cdot)$ denotes the Kolmogorov metric and $\mathcal{L}\{Z\}$ denotes the distribution function of a random variable $Z$. Note here that we have been careful to index each term in the expression by the total sample size $n$, since in principle each of these objects could vary with the candidate algorithm $a_{1n}$, but we suppress this dependence subsequently whenever we think it would not lead to confusion.

We prove bootstrap consistency in three steps. In the first step (Proposition \ref{prop:lin_normal}), we show that the asymptotic linear representations of the test statistics are asymptotically normal. In the second step (Proposition \ref{prop:infeasible_boot}), we demonstrate consistency of the bootstrap distribution for the asymptotic linear representations by invoking Theorem 1 of \cite{mammen2012does}. Finally in the third step we show that the bootstrap remains consistent when we move from the asymptotic linear representations to the original test statistics (Proposition \ref{prop:new_feasible_boot}). Throughout this section we maintain Assumptions \ref{assp:n}, \ref{assp:Utilities} and \ref{assp:NonDegenerate}.

\subsubsection{Asymptotic normality of the test statistic}
We first write the test statistics as sample averages. To do this, we define 
\begin{align}
\Omega^A_{i} &= (u^A_{1r}(Z_i,\theta), u^A_{0r}(Z_i,\theta),u^A_{1b}(Z_i,\theta),u^A_{0b}(Z_i,\theta))~, \nonumber \\
\hat{\Omega}^A_{i} &= (u^A_{1r}(Z_i,\hat{\theta}), u^A_{0r}(Z_i,\hat{\theta}),u^A_{1b}(Z_i,\hat{\theta}),u^A_{0b}(Z_i,\hat{\theta}))~, \nonumber \\
\Omega^F_{i} &= (u^F_{1r}(Z_i,\theta), u^F_{0r}(Z_i,\theta),u^F_{1b}(Z_i,\theta),u^F_{0b}(Z_i,\theta))~, \nonumber \\
\hat{\Omega}^F_{i} &= (u^F_{1r}(Z_i,\hat{\theta}), u^F_{0r}(Z_i,\hat{\theta}),u^F_{1b}(Z_i, \hat{\theta}),u^F_{0b}(Z_i, \hat{\theta}))~, \nonumber \\
\mu_{i} &= (h_i,\Omega^A_i,\Omega^F_i)~, \nonumber \\
\hat{\mu}_{i} &= (h_{i}, \hat{\Omega}^A_{i},\hat{\Omega}^F_{i})~, \nonumber \\
\Sigma_n &= E\left[(\mu_{i} - E[\mu_i])(\mu_{i} - E[\mu_i])^{T}\right] \label{eq:sigma}
\end{align}
and
\begin{align*}
\alpha_r &= (0_J,1,-(1+\Delta_r),0_6) ~, \\
\alpha_{b} &= (0_{J+2},1,-(1+\Delta_{b}),0_4) ~, \\
\alpha_{f} &=  \begin{pmatrix} 0_{(J+4) \times 4} \\ I_4 \\
\end{pmatrix}~,
\end{align*}
where $0_{m}$ refers to a $m \times 1$ dimensional vector of $0$s, $0_{m \times n}$ refers to a $m \times n$ dimensional matrix of $0$s, and $I_m$ is the $m\times m$ dimensional identity matrix. 
We also define
\begin{align*}
T_{r,n} &= E\left[\alpha_r^{T} \mu_i\right] ~,\\
T_{b,n} &= E\left[\alpha_{b}^{T} \mu_i\right] ~,\\
T_{f,n} &= \varphi(E\left[\alpha_f^{T}\mu_i\right])~,
\end{align*}
where $\varphi(a,b,c,d) = |a-c| - (1-\Delta_f)|b-d|$ and
\begin{align*}
\hat{T}_{r,n} &= \frac{1}{\ell_n}\sum_{i=1}^{\ell_n}\alpha_r^{T} \hat{\mu}_i ~,\\
\hat{T}_{b,n} &= \frac{1}{\ell_n}\sum_{i=1}^{\ell_n}\alpha_{b}^{T} \hat{\mu}_i ~,\\
\hat{T}_{f,n} &= \varphi\left(\frac{1}{\ell_n}\sum_{i=1}^{\ell_n}\alpha_f^{T}\hat{\mu}_i\right)~.
\end{align*}

To demonstrate asymptotic normality of the test statistics, we derive their asymptotic linear representations. To do this, we use the additional definitions
\[A^{1}_{tg} = E[\nabla u^A_{tg}(Z_i,\theta)]~,\]
\[F^{1}_{tg} = E[\nabla u^F_{tg}(Z_i,\theta)]~.\] 
where we note that these are $J \times 1$ dimensional vectors, and
\begin{equation*}
\begin{aligned}
\tilde{\alpha}_{r} &= (A_{1r}^{1} - A_{0r}^{1}(1+\Delta_{r}), 1,-(1+\Delta_r),0_{6})~, \\
\tilde{\alpha}_{b} &= (A_{1b}^{1} - A_{0b}^{1}(1+\Delta_{b}), 0_{2}, 1,-(1+\Delta_{b}),0_{4})~, \\
\tilde{\alpha}_{f} &= \left(\text{sign}(F_{1r}-F_{1b})(F_{1r}^{1}-F_{1b}^{1}) - (1-\Delta_f)\text{sign}(F_{0r}-F_{0b})(F_{0r}^{1}-F_{0b}^{1}), 0_{4},\right. \\
&\hspace{10mm}\left.\text{sign}(F_{1r}-F_{1b}), -(1-\Delta_f)\text{sign}(F_{0r}-F_{0b}), -\text{sign}(F_{1r}-F_{1b}), (1-\Delta_f)\text{sign}(F_{0r}-F_{0b})\right)~.
\end{aligned}
\end{equation*}

 Using these additional constructions, we can define the asymptotic linear representations of the test statistics and demonstrate their asymptotic normality. They are \begin{equation*}
\begin{aligned}
\tilde{T}_{r,n} &= T_{r,n} + \frac{1}{\ell_n}\sum_{i=1}^{\ell_n}\tilde{\alpha}_r^{T}(\mu_{i} - E[\mu_i]), \\
\tilde{T}_{b,n} &= T_{b,n} + \frac{1}{\ell_n}\sum_{i=1}^{\ell_n}\tilde{\alpha}_b^{T}(\mu_{i} - E[\mu_i]), \\
\tilde{T}_{f,n} &=  T_{f,n} + \frac{1}{\ell_n}\sum_{i=1}^{\ell_n}\tilde{\alpha}_{f}^{T}(\mu_{i} - E[\mu_i]))~.
\end{aligned}
\end{equation*} 

First, we argue that the asymptotic linear representations are asymptotically normal.

\begin{proposition}\label{prop:lin_normal}
For $g \in \{r,b\}$,
\[d_{\infty}\left(\mathcal{L}\left\{\sqrt{\ell_n}(\tilde{T}_{g,n} - T_{g,n})\right\},\mathcal{N}(0,\sigma_{g,n}^{2})\right) \rightarrow 0~,\] and 
\[d_{\infty}\left(\mathcal{L}\left\{\sqrt{\ell_n}(\tilde{T}_{f,n} - T_{f,n})\right\}, \mathcal{N}(0,\sigma_{f,n}^{2})\right) \rightarrow 0~,\] where $\sigma_{g,n}^2 = \tilde{\alpha}_g^T \Sigma_n \tilde{\alpha}_g$ and $\sigma_{f,n}^2 = \tilde{\alpha}_f^T \Sigma_n \tilde{\alpha}_f$ for $g \in \{r,b\}$, and $d_\infty(\cdot, \cdot)$ is the Kolmogorov metric.
\end{proposition}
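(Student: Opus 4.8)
The plan is to establish asymptotic normality of each asymptotic linear representation by verifying a triangular-array central limit theorem (Lyapunov or Lindeberg--Feller), paying attention to the fact that the candidate algorithm $a_{1n}$ — and hence the summands $\mu_i$, their mean, the coefficient vectors $\tilde\alpha_g,\tilde\alpha_f$, and the variance $\Sigma_n$ — all change with $n$. First I would observe that for each of $s\in\{r,b,f\}$ the quantity $\sqrt{\ell_n}(\tilde T_{s,n}-T_{s,n})$ is exactly a normalized sum $\ell_n^{-1/2}\sum_{i=1}^{\ell_n}\tilde\alpha_s^T(\mu_i-E[\mu_i])$ of i.i.d.\ (within the array indexed by $n$) mean-zero random vectors, so the only issue is a uniform-in-$n$ CLT for row-wise i.i.d.\ triangular arrays. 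I would then record the two ingredients that make this go through: (i) under Assumption \ref{assp:Utilities}, $h_i$ and the utilities $u^A_{tg},u^F_{tg}$ (evaluated at the true $\theta$) are uniformly bounded over $(z,\theta)$, and the derivative vectors $A^1_{tg},F^1_{tg}$ entering $\tilde\alpha_g,\tilde\alpha_f$ are uniformly bounded as well, so the summands $\tilde\alpha_s^T(\mu_i-E[\mu_i])$ are uniformly bounded by a constant not depending on $n$ or on $a_{1n}$; (ii) under Assumption \ref{assp:NonDegenerate}, $\lambda_{\min}(\Sigma_n)\ge\nu>0$, and since $\tilde\alpha_s$ has a coordinate equal to $\pm 1$ (the entry multiplying $u_{1g}$ or $u_{1r}-u_{1b}$), we get $\sigma_{s,n}^2=\tilde\alpha_s^T\Sigma_n\tilde\alpha_s\ge\nu\|\tilde\alpha_s\|^2\ge\nu>0$, so the variances are bounded away from zero uniformly in $n$.

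With boundedness above and variance bounded below, I would invoke the Berry--Esseen theorem for i.i.d.\ sums: for each $n$, $\sup_{x}\bigl|P\bigl(\ell_n^{-1/2}\sum_i\tilde\alpha_s^T(\mu_i-E[\mu_i])/\sigma_{s,n}\le x\bigr)-\Phi(x)\bigr|\le C\,\rho_{s,n}/(\sigma_{s,n}^3\sqrt{\ell_n})$, where $\rho_{s,n}=E|\tilde\alpha_s^T(\mu_i-E[\mu_i])|^3$. By the uniform bounds, $\rho_{s,n}\le \bar C$ and $\sigma_{s,n}^3\ge\nu^{3/2}$, so the right-hand side is $O(1/\sqrt{\ell_n})\to 0$ as $n\to\infty$, uniformly over the choice of deterministic sequence $a_{1n}$. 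Since $d_\infty$ is exactly the sup-norm distance between CDFs and $\mathcal N(0,\sigma_{s,n}^2)$ has CDF $\Phi(\cdot/\sigma_{s,n})$, a rescaling shows $d_\infty\bigl(\mathcal L\{\sqrt{\ell_n}(\tilde T_{s,n}-T_{s,n})\},\mathcal N(0,\sigma_{s,n}^2)\bigr)$ equals the Berry--Esseen bound above, hence tends to $0$. This handles all three cases $s\in\{r,b\}$ and $s=f$ simultaneously; note that for $s=f$ the coefficient vector $\tilde\alpha_f$ involves sign factors $\text{sign}(F_{1r}-F_{1b})$ and $\text{sign}(F_{0r}-F_{0b})$, but these are just fixed $\pm1$ constants for a given $n$ (well-defined because Assumption \ref{assp:n} keeps $F_{0r}-F_{0b}$ bounded away from $0$, and in the relevant region $F_{1r}-F_{1b}$ is too, per the discussion after Section \ref{sec:Test}), so they do not affect the argument — $\tilde T_{f,n}$ is still a linear statistic.

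The main obstacle — and the reason this lemma is not entirely routine — is the uniformity over $n$, i.e.\ ensuring that the convergence is not merely pointwise for a fixed sequence but holds at a rate independent of how $a_{1n}$ wanders through $\mathcal A$. This is precisely what Assumptions \ref{assp:Utilities} and \ref{assp:NonDegenerate} are designed to deliver: the former gives the uniform third-moment bound in the numerator of Berry--Esseen, the latter the uniform lower bound on the variance in the denominator. A secondary point requiring a little care is bookkeeping the dimension and structure of $\tilde\alpha_s$ (the block of zeros, the $A^1_{tg}$/$F^1_{tg}$ block coming from the delta-method linearization in $\theta$) to confirm the claimed formula $\sigma_{s,n}^2=\tilde\alpha_s^T\Sigma_n\tilde\alpha_s$ and that the relevant unit-magnitude coordinate is genuinely present; but once the form of $\tilde\alpha_s$ is in hand this is immediate. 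I would not belabor the delta-method derivation of the $\tilde\alpha_s$ themselves here, since $\tilde T_{s,n}$ is \emph{defined} as the linear representation in this proposition — the content is only its asymptotic normality, with the link back to $\hat T_{s,n}$ deferred to Proposition \ref{prop:new_feasible_boot}.
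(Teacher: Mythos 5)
Your proof is correct, and it takes a route that is close in spirit to the paper's but technically distinct in a useful way. The paper writes the linear representation $\sqrt{\ell_n}(\tilde T_{s,n}-T_{s,n})$ as the normalized sum $\ell_n^{-1/2}\sum_i \tilde\alpha_s^T(\mu_i - E[\mu_i])$ exactly as you do, then invokes the Lindeberg--Feller CLT for the row-wise i.i.d.\ triangular array (with the Lindeberg condition implicitly satisfied by the uniform boundedness from Assumption~\ref{assp:Utilities} and the non-degeneracy from Assumption~\ref{assp:NonDegenerate}), and finally upgrades convergence in distribution to convergence in Kolmogorov metric via Polya's theorem (the cited Theorem~11.2.9 of Lehmann--Romano). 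You compress these two steps into one by appealing to Berry--Esseen, which hands you the Kolmogorov-metric bound directly and with an explicit $O(\ell_n^{-1/2})$ rate; this makes the uniformity over the sequence $a_{1n}$ visible in the bound rather than requiring a separate argument, which is arguably cleaner. The ingredients are the same: uniform third-moment bound from Assumption~\ref{assp:Utilities}, uniform variance lower bound from Assumption~\ref{assp:NonDegenerate}. One small inaccuracy in your writeup: for $s=f$, $\operatorname{sign}(F_{1r}-F_{1b})$ need not be $\pm 1$ for an arbitrary deterministic sequence of candidate algorithms in this proposition (the assumption $|U_F^r(a_{1n})-U_F^b(a_{1n})|\ge\xi$ first appears in Proposition~\ref{prop:lin_equiv}, not here), so $\operatorname{sign}(F_{1r}-F_{1b})$ could be $0$; that does not break your argument, but it means the lower bound $\|\tilde\alpha_f\|^2\ge 1$ should instead be taken via the $\operatorname{sign}(F_{0r}-F_{0b})=\pm1$ entries (Assumption~\ref{assp:n}), giving $\|\tilde\alpha_f\|^2\ge(1-\Delta_f)^2$, which is still strictly positive for $\Delta_f<1$.
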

\begin{proof} 
For $g \in \{r,b\}$
 \[\sqrt{\ell_n}(\tilde{T}_{g,n} - T_{g,n}) 
 = \frac{1}{\sqrt{\ell_n}}\sum_{i}\tilde{\alpha}_{g}^{T}(\mu_i - E[\mu_i])~.\]
Since $\mu_i$ is independent across $i$, the Lindeberg-Feller CLT implies that 
 \[\frac{\sqrt{\ell_n}(\tilde{T}_{g,n} - T_{g,n})}{\sigma_{g,n}} \xrightarrow{d} \mathcal{N}(0,1)~,\]
 where $\sigma_{g,n}^2 = \tilde{\alpha}_g^T \Sigma_n \tilde{\alpha}_g$. Note that $\sigma_{g,n}^2 >0$ by Assumption \ref{assp:NonDegenerate}. It thus follows by Polya's Theorem \citep[Theorem 11.2.9 in][]{lehmann2005testing} that
 \[d_{\infty}\left(\mathcal{L}\left\{\sqrt{\ell_n}(\tilde{T}_{g,n} - T_{g,n})\right\},\mathcal{N}(0,\sigma_{g,n}^{2})\right) \rightarrow 0~.\]

 Similarly, 
\[
\sqrt{\ell_n}(\tilde{T}_{f,n} - T_{f,n}) 
= \frac{1}{\sqrt{\ell_n}} \sum_{i} \tilde{\alpha}_{f}^{T} (\mu_i - E[\mu_i])~,
\]
and
 \[d_{\infty}\left(\mathcal{L}\left\{\sqrt{\ell_n}(\tilde{T}_{f,n} - T_{f,n})\right\}, \mathcal{N}(0,\sigma_{f,n}^{2})\right) \rightarrow 0~,\]
  where $\sigma_{f,n}^2 = \tilde{\alpha}_f^T \Sigma_n \tilde{\alpha}_f$. Note also that $\sigma_{f,n}^2 > 0$ by Assumption 2. 
\end{proof}

Second, we argue that the difference between the test statistics and their asymptotic linear representation are asymptotically negligible. 

 \begin{proposition}\label{prop:lin_equiv} Suppose there exists some $\xi > 0$ such that $|U_F^r(a_{1n}) - U_F^{b}(a_{1n})| \ge \xi$ for every $a_{1n}$ in our sequence of candidate algorithms. Then for each $g \in \{r,b\}$,
      \[\sqrt{\ell_n}(\hat{T}_{g,n} - \tilde{T}_{g,n}) \xrightarrow{P} 0 \quad \text{ and } \quad \sqrt{\ell_n}(\hat{T}_{f,n} - \tilde{T}_{f,n}) \xrightarrow{P} 0~.\]
 \end{proposition}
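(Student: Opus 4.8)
The statement is a standard ``asymptotic linear representation'' or ``delta-method with estimated nuisance parameter'' argument, so the plan is to expand each test statistic around the true nuisance parameter $\theta$ and show the remainder is $o_P(\ell_n^{-1/2})$, uniformly in the candidate algorithm. First I would recall that $\hat{T}_{g,n} = \frac{1}{\ell_n}\sum_i \alpha_g^T \hat{\mu}_i$ depends on $\hat\theta = \frac{1}{\ell_n}\sum_i h_i$ only through the plug-in $u^{A}_{tg}(Z_i,\hat\theta)$, and that the targeted linearization $\tilde{T}_{g,n}$ is obtained by a first-order Taylor expansion in $\theta$, with the correction term $(A^1_{1g}-A^1_{0g}(1+\Delta_g))^T(\hat\theta - \theta)$ folded into the coefficient vector $\tilde\alpha_g$ acting on $\mu_i$. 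So the key identity to write down is
\[
\hat{T}_{g,n} - \tilde{T}_{g,n} = \frac{1}{\ell_n}\sum_i \Big( u^{A}_{1g}(Z_i,\hat\theta) - u^{A}_{0g}(Z_i,\hat\theta)(1+\Delta_g) - \alpha_g^T\mu_i - \big(A^1_{1g}-A^1_{0g}(1+\Delta_g)\big)^T(\hat\theta - \theta)\Big),
\]
and then Taylor-expand each $u^{A}_{tg}(Z_i,\hat\theta)$ to second order in $\hat\theta$ around $\theta$.

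Next I would control the two resulting pieces. The first-order piece is $\frac{1}{\ell_n}\sum_i \big(\nabla u^{A}_{tg}(Z_i,\theta) - E[\nabla u^{A}_{tg}(Z_i,\theta)]\big)^T(\hat\theta-\theta)$; by Assumption \ref{assp:Utilities} the gradients are uniformly bounded, so $\frac{1}{\ell_n}\sum_i(\nabla u^{A}_{tg}(Z_i,\theta) - E[\nabla u^{A}_{tg}])$ is $O_P(\ell_n^{-1/2})$ (by Chebyshev, with variance bounded uniformly in $a_{1n}$), and $\hat\theta - \theta$ is also $O_P(\ell_n^{-1/2})$ since $h$ is bounded; their product is $O_P(\ell_n^{-1})$, hence $o_P(\ell_n^{-1/2})$ after multiplying by $\sqrt{\ell_n}$. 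The second-order remainder involves $(\hat\theta-\theta)^T \nabla^2 u^{A}_{tg}(Z_i,\bar\theta)(\hat\theta-\theta)$ for some intermediate $\bar\theta$; again by the uniform bound on second derivatives in Assumption \ref{assp:Utilities} this is $O_P(\|\hat\theta-\theta\|^2) = O_P(\ell_n^{-1})$, so $\sqrt{\ell_n}$ times it vanishes. The uniformity over the space of candidate algorithms is what lets this argument survive even though $a_{1n}$ changes with $n$ --- all the bounds invoked are uniform over $(z,\theta)\in\mathcal{Z}\times\Theta$ and over $a_1\in\mathcal{A}$, which is precisely why Assumptions \ref{assp:Utilities} and \ref{assp:NonDegenerate} were stated in that uniform form.

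For the fairness statistic, the same nuisance-parameter expansion applies inside each $\hat F_{tg}$, but there is the extra wrinkle that $\hat{T}_{f,n} = \varphi\big(\frac{1}{\ell_n}\sum_i \alpha_f^T\hat\mu_i\big)$ where $\varphi$ involves absolute values. Here I would use the hypothesis $|U_F^r(a_{1n})-U_F^b(a_{1n})| \ge \xi > 0$ (and Assumption \ref{assp:n}, which gives the analogous bound for $a_0$) to conclude that the arguments of the two absolute-value terms in $\varphi$ are bounded away from zero, so $\varphi$ is continuously differentiable (indeed affine) in a neighborhood of $E[\alpha_f^T\mu_i]$ that the sample average enters with probability tending to one. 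On that event $\varphi$ behaves linearly with gradient given by the sign pattern appearing in $\tilde\alpha_f$, and a first-order Taylor expansion of $\varphi$ combined with the nuisance-parameter expansion above (again with a quadratic remainder killed by boundedness of second derivatives) yields $\sqrt{\ell_n}(\hat T_{f,n} - \tilde T_{f,n}) \xrightarrow{P} 0$.

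\textbf{Main obstacle.} The only genuinely delicate point is the fairness statistic's non-differentiability: I expect the bulk of the care in the write-up to go into arguing cleanly that the event $\{\sqrt{\ell_n}|\frac{1}{\ell_n}\sum_i(\cdot) - E[\cdot]| \le \xi/2\}$ (say) occurs with probability $\to 1$ uniformly in the candidate algorithm, so that on that event $\varphi$ can be replaced by its affine approximation with a second-order remainder, and to handle the interaction between that remainder and the nuisance-parameter remainder simultaneously. The accuracy statistics, by contrast, require only the routine two-term Taylor expansion in $\theta$ plus the uniform boundedness from Assumption \ref{assp:Utilities}, and I do not anticipate any subtlety there beyond bookkeeping.
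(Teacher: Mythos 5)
Your proposal is correct and matches the paper's argument essentially step for step: a second-order Taylor expansion in $\theta$ for the accuracy statistics (with the first-order remainder controlled by Chebyshev and the second-order remainder by the uniform bound on Hessians), and, for the fairness statistic, a first-order expansion of $\varphi$ around the population value $E[\alpha_f^T\mu_i]$ justified by $\varphi$ being locally affine there under Assumption~\ref{assp:n} and the hypothesis $|U_F^r(a_{1n})-U_F^b(a_{1n})|\ge\xi$. Your treatment of the high-probability event on which $\varphi$ is affine makes explicit a step that the paper folds into the $o_p(\ell_n^{-1/2})$ remainder, but it is the same argument.
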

\begin{proof}
The result for the accuracy test statistics follows from
\begin{align*}
\hat{T}_{g,n} 
&= \alpha_g^T \frac{1}{\ell_n}\sum_i \hat{\mu}_i
= \alpha_g^T \frac{1}{\ell_n}\sum_i (\mu_i + \nabla \mu_i (\hat{\theta} - \theta) ) + o_{p}(\ell_n^{-1/2}) \\
&= \tilde{\alpha}^{T}_g \frac{1}{\ell_n}\sum_i(0_J,\Omega_i^A, \Omega_i^F) + \tilde{\alpha}^{T}_{g,n} \frac{1}{\ell_n}\sum_i(h_i - \theta,0_8) + o_{p}(\ell_n^{-1/2}) \\
&= \frac{1}{\ell_n}\sum_i \tilde{\alpha}_g^{T}(\mu_i - (\theta,0_8)) + o_{p}(\ell_n^{-1/2}) = \tilde{T}_{g,n} + o_{p}(\ell_n^{-1/2})~,
\end{align*}
where the last equality is because $\tilde{\alpha}_g^{T}(\theta,0_{8}) + \alpha^{T}_{g}E[\mu_i] = \tilde{\alpha}_g^{T}E[\mu_i]$.

The second equality follows from a Taylor expansion, Assumption \ref{assp:Utilities}, and the fact that $||\hat{\theta} - \theta|| = O_{p}(\ell_n^{-1/2})$ by a standard concentration argument (e.g. Markov's inequality).  The third equality follows from a law of large numbers, $||\hat{\theta} - \theta|| = O_{p}(\ell_n^{-1/2})$, and some additional algebra.

The result for the fairness test statistic follows from 
\begin{align*}
\hat{T}_{f,n} 
&= \varphi\left(\frac{1}{\ell_n}\sum_{i}\alpha_f^{T}\hat{\mu}_i \right) 
= \varphi(E[\alpha_f^T \mu_i]) + \nabla \varphi(E[\alpha_f^T \mu_i])\left(\frac{1}{\ell_n}\sum_i \alpha_f^T\hat{\mu}_i - E[\alpha_f^T \mu_i]\right) + o_{p}(\ell_n^{-1/2}) \\
&= \varphi(E[\alpha_f^T \mu_i]) + \nabla\varphi(E[\alpha_f^T \mu_i])\alpha_f^T\left(\frac{1}{\ell_n}\sum_i(\mu_i - E[\mu_i]) + E[\nabla\mu_i](\hat{\theta} - \theta)\right) + o_{p}(\ell_n^{-1/2}) \\
&= \varphi(E[\alpha_f^T \mu_i]) + \nabla\varphi(E[\alpha_f^T \mu_i])\alpha_f^T(I_{J+8} + (E[\nabla\mu_i],0_{(J+8)\times 8}))\left(\frac{1}{\ell_n}\sum_i(\mu_i - E[\mu_i]) \right) + o_{p}(\ell_n^{-1/2}) \\
&= T_{f,n} + \frac{1}{\ell_n}\sum_i \tilde{\alpha}_f^T (\mu_i - E[\mu_i]) + o_{p}(\ell_n^{-1/2})
= \tilde{T}_{f,n} + o_{p}(\ell_n^{-1/2})~,
\end{align*}
where $I_{J+8}$ is a $(J+8)\times (J+8)$ dimensional identity matrix and $0_{(J+8)\times 8}$ is a $(J+8) \times 8$ dimensional matrix of $0$s. Existence and continuity of $\nabla\varphi(\cdot)$ follows by Assumption \ref{assp:n} and the assumption that $|U_F^r(a_{1n}) - U_F^{b}(a_{1n})| \ge \xi$.

The second equality follows from a Taylor expansion and because $\frac{1}{\ell_n}\sum_i \alpha_f^T\left(\hat{\mu}_i - E[\mu_i]\right) = O_{p}(\ell_n^{-1/2})$. The third equality is because $\frac{1}{\ell_n}\sum_i \left(\nabla \mu_i - E[\nabla \mu_i]\right)= O_{p}(\ell_n^{-1/2})$ and $||\hat{\theta} - \theta|| = O_{p}(\ell_n^{-1/2})$. They both follow from Assumption 1 and a standard concentration argument.

The second-to-last equality is because $T_{f,n} = \varphi(E[\alpha_f^T \mu_i])$ and $\tilde{\alpha}_f^T = \nabla\varphi(E[\alpha_f^T \mu_i])\alpha_f^T(I_{J+8} + (E[\nabla\mu_i],0_{(J+8)\times 8}))$. This last result is because
\[\nabla\varphi(E[\alpha^T_f \mu_i]) = (\text{sign}(F_{1r}-F_{1b}),-(1-\Delta_f)\text{sign}(F_{0r}-F_{0b}),-\text{sign}(F_{1r}-F_{1b}),-(1-\Delta_f)\text{sign}(F_{0r}-F_{0b}))~,\]
\[\alpha_f^T = \begin{pmatrix} 0_{4 \times (J+4)} & I_4 \end{pmatrix}~, \text{ and }\]
\[\left(I_{J+8} + (E[\nabla \mu_i],0_{(J+8)\times 8})\right) 
= \begin{pmatrix} I_{J} & 0 & 0 \\
A^{1} & I_{4} & 0 \\
F^{1} & 0 & I_{4} \\
\end{pmatrix}\]
for $A^{1} = (A^{1}_{1r},A^{1}_{0r},A^{1}_{1b},A^{1}_{0b})$ and $F^{1} = (F^{1}_{1r},F^{1}_{0r},F^{1}_{1b},F^{1}_{0b})$
so that
\begin{align*}
&\nabla\varphi(E[\alpha_f^T \mu_i])\alpha_f^T(I_{J+8} + (E[\nabla\mu_i],0_{(J+8)\times 8})) \\
&= \left(\text{sign}(F_{1r}-F_{1b})(F_{1r}^1-F_{1b}^1)-(1-\Delta_f)\text{sign}(F_{0r} - F_{0b})(F_{0r}^1 - F_{0b}^1), 0_4, \right. \\
&\hspace{10mm} \left. \text{sign}(F_{1r}-F_{1b}), -(1-\Delta_f)\text{sign}(F_{0r}-F_{0b}), -\text{sign}(F_{1r}-F_{1b}), -(1-\Delta_f)\text{sign}(F_{0r}-F_{0b}) \right) \\
&= \tilde{\alpha}_f~.
\end{align*}
\end{proof}

It follows from Propositions \ref{prop:lin_normal} and \ref{prop:lin_equiv} that for $g \in \{r,b\}$, 
\[d_{\infty}\left(\mathcal{L}\left\{\sqrt{\ell_n}(\hat{T}_{g,n} - T_{g,n})\right\},\mathcal{N}(0,\sigma_{g,n}^{2})\right) \rightarrow 0~,\] and 
\[d_{\infty}\left(\mathcal{L}\left\{\sqrt{\ell_n}(\hat{T}_{f,n} - T_{f,n})\right\}, \mathcal{N}(0,\sigma_{f,n}^{2})\right) \rightarrow 0~.\] 

\subsubsection{Asymptotic normality of the bootstrap distribution}
Let the (infeasible) bootstrap distributions of $\sqrt{\ell_n}(\tilde{T}_{g,n} - T_{g,n})$ and $\sqrt{\ell_n}(\tilde{T}_{f,n} - T_{f,n})$ be denoted by $\tilde{\Psi}_{g,n}$ and $\tilde{\Psi}_{f,n}$. We say that these distributions are infeasible because they depend on unknown population parameters. 

\begin{proposition}\label{prop:infeasible_boot}
$d_\infty( \tilde{\Psi}_{g,n}, \mathcal{N}(0,\sigma^2_{g,n})) \xrightarrow{P} 0$ and $d_\infty( \tilde{\Psi}_{f,n}, \mathcal{N}(0,\sigma^2_{f,n})) \xrightarrow{P} 0$.
\end{proposition}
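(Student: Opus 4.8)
The plan is to invoke Theorem~1 of \cite{mammen2012does}, which gives conditions under which the bootstrap distribution of a standardized sample mean of a (possibly triangular-array) sequence of independent random vectors is consistent for a normal limit, even when the summands and their covariance structure depend on $n$. Concretely, recall from Proposition~\ref{prop:lin_normal} that the infeasible linearized statistics are, for $g \in \{r,b\}$ and for $f$, exact sample averages of the i.i.d.\ centered terms $\tilde\alpha_g^T(\mu_i - E[\mu_i])$ and $\tilde\alpha_f^T(\mu_i - E[\mu_i])$; their bootstrap analogs $\tilde\Psi_{g,n}$ and $\tilde\Psi_{f,n}$ are the conditional laws (given the data) of the corresponding bootstrap sample averages recentered at the sample mean. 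Thus the object of interest is precisely a bootstrap of a sample mean, and the question reduces to verifying the hypotheses of the \cite{mammen2012does} result for the scalar arrays $\{\tilde\alpha_g^T(\mu_i - E[\mu_i])\}$ and $\{\tilde\alpha_f^T(\mu_i - E[\mu_i])\}$.

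The key steps, in order, are as follows. \emph{Step 1: uniform integrability / moment bounds.} Show that the summands have uniformly bounded moments of order slightly above two (in fact bounded of all orders), uniformly over the sample size $n$ and over the candidate algorithm $a_{1n} \in \mathcal{A}$. This follows from Assumption~\ref{assp:Utilities}: $h(\cdot)$ and the utilities $u_{tg}^A$, $u_{tg}^F$ together with their first and second derivatives are uniformly bounded over $\mathcal{Z} \times \Theta$, and the weights $\tilde\alpha_g$, $\tilde\alpha_f$ are themselves bounded (their nonzero entries are $\pm 1$, $1+\Delta$, or the gradient vectors $A^1_{tg}, F^1_{tg}$, which are bounded expectations of bounded functions). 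Hence $\mu_i$ is almost surely bounded and so are the linear combinations defining the summands. \emph{Step 2: non-degeneracy of the variances.} Show $\sigma_{g,n}^2 = \tilde\alpha_g^T\Sigma_n\tilde\alpha_g$ and $\sigma_{f,n}^2 = \tilde\alpha_f^T\Sigma_n\tilde\alpha_f$ are bounded away from zero uniformly in $n$ (and in $a_{1n}$). This is exactly Assumption~\ref{assp:NonDegenerate}: $\lambda_{\min}(\Sigma_n) \ge \nu > 0$, and since $\|\tilde\alpha_g\|$, $\|\tilde\alpha_f\|$ are bounded below away from zero (each contains a $\pm 1$ entry), we get $\sigma_{\bullet,n}^2 \ge \nu \cdot \|\tilde\alpha_\bullet\|^2 \ge c\nu$ for some $c > 0$. \emph{Step 3: apply \cite{mammen2012does}.} With uniformly bounded $(2+\delta)$ moments and variances bounded away from zero, Theorem~1 of that paper yields $d_\infty(\tilde\Psi_{\bullet,n}, \mathcal{N}(0,\hat\sigma_{\bullet,n}^2)) \xrightarrow{P} 0$, where $\hat\sigma_{\bullet,n}^2$ is the empirical variance of the (estimated) summands. \emph{Step 4: replace $\hat\sigma_{\bullet,n}^2$ by $\sigma_{\bullet,n}^2$.} Show $\hat\sigma_{\bullet,n}^2 - \sigma_{\bullet,n}^2 \xrightarrow{P} 0$ by a law of large numbers for bounded summands (plus consistency of $\hat\theta$ for $\theta$, which enters the estimated weights $\tilde\alpha$ through $A^1, F^1$), and then conclude via the standard fact that $\mathcal{N}(0,\hat\sigma_{\bullet,n}^2)$ and $\mathcal{N}(0,\sigma_{\bullet,n}^2)$ are close in the Kolmogorov metric when the variances are close and bounded away from zero. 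Combining Steps 3 and 4 by the triangle inequality for $d_\infty$ gives the claim.

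The main obstacle is \emph{Step 3}: checking that our setting genuinely fits the scope of the \cite{mammen2012does} theorem, since the distribution of the summands is allowed to vary with $n$ (through the candidate algorithm $a_{1n}$, which need not ``settle down'' in the limit) and we need a conclusion that is uniform in $a_{1n}$ in the appropriate sense. This is precisely the feature flagged in the introduction as the reason off-the-shelf bootstrap consistency results do not directly apply, and it is why the non-degeneracy condition (Assumption~\ref{assp:NonDegenerate}) and the uniform boundedness in Assumption~\ref{assp:Utilities} are stated with uniformity over $\mathcal{A}$ built in. The remaining steps are routine: Step~1 and Step~2 are bookkeeping with the assumptions, Step~4 is an elementary LLN plus a continuity argument for the Gaussian CDF in its variance parameter.
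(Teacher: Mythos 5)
Your proposal is correct and follows essentially the same route as the paper's one-line proof: Proposition~\ref{prop:lin_normal} establishes asymptotic normality (in Kolmogorov distance) of the linearized statistics, and Theorem~1 of \cite{mammen2012does} --- which for linear statistics gives the equivalence between asymptotic normality of the sample mean and consistency of its nonparametric bootstrap for the same limit --- then delivers the claim directly. One small refinement: because Mammen's theorem already yields consistency for the limit $\mathcal{N}(0,\sigma_{\bullet,n}^2)$ identified in Proposition~\ref{prop:lin_normal}, your Step~4 is not needed, and consistency of $\hat\theta$ plays no role for $\tilde\Psi_{\bullet,n}$ since the infeasible linearization $\tilde{T}_{\bullet,n}$ uses the population weights $\tilde\alpha_\bullet$ rather than estimated ones.
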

\begin{proof}
This follows from Proposition \ref{prop:lin_normal} above and Theorem 1 of \cite{mammen2012does}.
\end{proof}

\begin{proposition}\label{prop:new_feasible_boot}
Suppose $|U_F^r(a_{1n}) - U_F^{b}(a_{1n})| \ge \xi$ for some $\xi > 0$, for every $a_{1n}$ in our sequence of candidate algorithms. For $s \in \{r, b ,f\}$,
\[d_{\infty}\left(\mathcal{L}\left\{\sqrt{\ell_n}(\hat{T}_{s,n} - T_{s,n})\right\}, \Psi_{s,n}\right) \xrightarrow{P} 0~,\]
as $n \rightarrow \infty$.
\end{proposition}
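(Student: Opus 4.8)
The goal is to transfer bootstrap consistency from the (infeasible) asymptotic linear representations $\tilde T_{s,n}$ to the actual test statistics $\hat T_{s,n}$. The plan is to mirror, on the bootstrap side, exactly the two-step argument already carried out on the sample side (Propositions \ref{prop:lin_normal}--\ref{prop:lin_equiv} plus the displayed corollary following them). Concretely, the first step is to show that the bootstrap analog of Proposition \ref{prop:lin_equiv} holds: for each $g \in \{r,b\}$,
\[\sqrt{\ell_n}\bigl(\hat T_{g,n}^* - \tilde T_{g,n}^*\bigr) \xrightarrow{P} 0 \quad\text{and}\quad \sqrt{\ell_n}\bigl(\hat T_{f,n}^* - \tilde T_{f,n}^*\bigr) \xrightarrow{P} 0,\]
where convergence is in (outer) probability with respect to the joint law of the data and the bootstrap draws, and $\tilde T_{s,n}^*$ denotes the bootstrap version of the linearization built from the same weights $\tilde\alpha_s$ (evaluated at the population $\theta$, $F$-signs, etc.). The argument is the bootstrap-world replay of the Taylor expansion in the proof of Proposition \ref{prop:lin_equiv}: one expands $u_{tg}^{A,F}(Z_i^*,\hat\theta^*)$ around $\hat\theta$, uses $\|\hat\theta^* - \hat\theta\| = O_P(\ell_n^{-1/2})$ (standard for the nonparametric bootstrap of a bounded-function sample mean, since $h$ is bounded by Assumption \ref{assp:Utilities}), invokes the uniform boundedness of $u^{A},u^{F}$ and their first two $\theta$-derivatives from Assumption \ref{assp:Utilities} to control the remainder uniformly over the (possibly drifting) candidate algorithm, and uses a bootstrap law of large numbers for $\tfrac1{\ell_n}\sum_i \nabla\mu_i^*$. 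For the fairness coordinate, the extra ingredient is the Hadamard differentiability of $\varphi$ at the point $E[\alpha_f^T\mu_i]$, which holds precisely because Assumption \ref{assp:n} together with the maintained hypothesis $|U_F^r(a_{1n}) - U_F^b(a_{1n})| \ge \xi$ keeps us bounded away from the kink of $\varphi$; the functional delta method then applies to the bootstrapped empirical mean fed into $\varphi$. The key technical point is that every $o_P(\ell_n^{-1/2})$ bound in the expansion must be uniform over the deterministic sequence $a_{1n}$, which is exactly what Assumptions \ref{assp:Utilities} and \ref{assp:NonDegenerate} were designed to deliver.

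The second step combines this with the already-established infeasible-bootstrap result. By Proposition \ref{prop:infeasible_boot}, $d_\infty(\tilde\Psi_{s,n}, \mathcal N(0,\sigma_{s,n}^2)) \xrightarrow{P} 0$, where $\tilde\Psi_{s,n} = \mathcal L^*\{\sqrt{\ell_n}(\tilde T_{s,n}^* - T_{s,n})\}$. Step one says $\sqrt{\ell_n}(\hat T_{s,n}^* - T_{s,n})$ and $\sqrt{\ell_n}(\tilde T_{s,n}^* - T_{s,n})$ differ by a term that is $o_P(1)$ conditionally on the data (with probability tending to one), so their conditional laws $\Psi_{s,n}$ and $\tilde\Psi_{s,n}$ are within $o_P(1)$ in the Kolmogorov metric — here one uses that $\mathcal N(0,\sigma_{s,n}^2)$ has a density bounded uniformly in $n$ (by $\sigma_{s,n}^2 \ge \nu\|\tilde\alpha_s\|^2$ from Assumption \ref{assp:NonDegenerate} and $\sigma_{s,n}^2$ bounded above by the uniform boundedness in Assumption \ref{assp:Utilities}), so that a small perturbation in the random variable produces only a small change in the cdf, via the standard Slutsky/anti-concentration lemma for the Kolmogorov distance. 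Hence $d_\infty(\Psi_{s,n}, \mathcal N(0,\sigma_{s,n}^2)) \xrightarrow{P} 0$. Finally, combining this with the displayed corollary after Proposition \ref{prop:lin_equiv}, namely $d_\infty(\mathcal L\{\sqrt{\ell_n}(\hat T_{s,n} - T_{s,n})\}, \mathcal N(0,\sigma_{s,n}^2)) \to 0$, and the triangle inequality for $d_\infty$ gives $d_\infty(\mathcal L\{\sqrt{\ell_n}(\hat T_{s,n} - T_{s,n})\}, \Psi_{s,n}) \xrightarrow{P} 0$, as claimed.

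I expect the main obstacle to be the bootstrap-side linearization in step one, specifically making the remainder terms $o_P(\ell_n^{-1/2})$ \emph{uniformly} over the (potentially non-settling) sequence $a_{1n}$ — the naive delta-method/Taylor bounds are pointwise, and one needs the uniform second-derivative bounds of Assumption \ref{assp:Utilities} together with a uniform (in the candidate algorithm) bootstrap LLN and a uniform bound on $\|\hat\theta^* - \hat\theta\|$. The handling of $\varphi$'s nondifferentiability is a genuine but already-anticipated issue, resolved by the $\xi$-separation hypothesis exactly as in Proposition \ref{prop:lin_equiv}; once differentiability of $\varphi$ at the relevant point is in hand, its contribution is routine. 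A secondary point worth stating carefully is the mode of convergence: all the $o_P$ statements on the bootstrap side are with respect to the conditional bootstrap law, holding on a data event of probability tending to one, and one should be explicit that the final "$\xrightarrow{P}$" is over the data-generating randomness, obtained by integrating out the bootstrap randomness.
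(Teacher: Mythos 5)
Your plan is correct and the argument is essentially the same as the paper's: both proceed via (i) a bootstrap-side replay of the linearization in Proposition~\ref{prop:lin_equiv} to reduce $\hat T^*_{s,n}$ to $\tilde T^*_{s,n}$, and (ii) a combination with the infeasible bootstrap consistency of Proposition~\ref{prop:infeasible_boot} and the sample-side normal approximation to close the loop. The only genuine difference is in how the final convergence in probability of the Kolmogorov distance is established: the paper passes to subsequences along which the relevant $o_P(1)$ statements and the convergence $\sigma_{s,n_k}\to\sigma^\dagger>0$ hold almost surely, proves conditional weak convergence to $N(0,1)$ along those subsequences, and then invokes Polya's theorem; you instead argue directly in probability via an anti-concentration/Slutsky lemma for the Kolmogorov metric, exploiting that the reference normal $N(0,\sigma_{s,n}^2)$ has density bounded uniformly in $n$ because $\sigma_{s,n}^2$ is sandwiched by Assumptions~\ref{assp:Utilities} and~\ref{assp:NonDegenerate}. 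Both routes are standard; the subsequence device avoids needing any explicit anti-concentration bound (uniform integrability of the residual suffices), while your device avoids the bookkeeping of subsequences at the cost of stating the perturbation lemma carefully. Two small points of precision worth flagging: first, $\Psi_{s,n}$ and $\tilde\Psi_{s,n}$ are by construction the conditional laws centered at $\hat T_{s,n}$ and $\tilde T_{s,n}$ respectively, not at $T_{s,n}$, so the relevant difference is $\sqrt{\ell_n}(\hat T^*_{s,n}-\hat T_{s,n})-\sqrt{\ell_n}(\tilde T^*_{s,n}-\tilde T_{s,n}) = \sqrt{\ell_n}(\hat T^*_{s,n}-\tilde T^*_{s,n})-\sqrt{\ell_n}(\hat T_{s,n}-\tilde T_{s,n})$, with both pieces $o_P(1)$ (the latter from Proposition~\ref{prop:lin_equiv}); the paper makes this explicit through the three-term decomposition. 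Second, the way you phrase the bootstrap-side $o_P$ statements as conditional $o_{P_*}(1)$ on a data event of probability tending to one is exactly what the paper encodes by writing $P_*(|r^*_{s,n}|>\epsilon)\xrightarrow{P}0$, so you are in full agreement there.
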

\begin{proof}
Let $P_*$ denote the bootstrap distribution conditional on the data. By applying the logic of Proposition \ref{prop:lin_equiv} to the bootstrap data, it follows that
\[\sqrt{\ell_n}(\hat{T}^*_{s,n} - \tilde{T}^*_{s,n}) = r^*_{s,n}~,\]
for some $r^*_{s,n}$ such that $P_*\left(|r^*_{s,n}| > \epsilon\right) \xrightarrow{P} 0$. 
To establish our result, we argue along subsequences. To that end, for every subsequence there exists a further subsequence (which we index by $n_k$) such that
\[
\begin{aligned}
&P_*\left(|r^*_{s,n_k}| > \epsilon\right) \xrightarrow{a.s.} 0~, \\
&\sqrt{\ell_n}(\hat{T}_{s,n_k} - \tilde{T}_{s,n_k}) \xrightarrow{a.s.} 0~, \\
&d_{\infty}(\tilde{\Psi}_{s,n_k}, \mathcal{N}(0, \sigma^2_{s,n_k})) \xrightarrow{a.s.} 0 ~, \\
&\sigma_{s,n_k} \rightarrow \sigma^{\dagger} > 0~,
\end{aligned}
\]
where the second convergence statement follows from Proposition \ref{prop:lin_equiv}, the third from Proposition \ref{prop:infeasible_boot}, and the last convergence statement follows from Assumptions \ref{assp:Utilities} and \ref{assp:NonDegenerate}.
We thus have that 
\[\frac{\sqrt{\ell_{n_k}}(\hat{T}^*_{s,n_k} - \hat{T}_{s,n_k})}{\sigma_{s,n_k}} = \frac{\sqrt{\ell_{n_k}}(\hat{T}^*_{s,n_k} - \tilde{T}^*_{s,n_k})}{\sigma_{s,n_k}} + \frac{\sqrt{\ell_{n_k}}(\tilde{T}^*_{s,n_k} - \tilde{T}_{s,n_k})}{\sigma_{s,n_k}} + \frac{\sqrt{\ell_{n_k}}(\tilde{T}_{s,n_k} - \hat{T}_{s,n_k})}{\sigma_{s,n_k}} \\
\xrightarrow{d} N(0, 1)~,\]
conditional on the data, with probability one.
It follows by Polya's Theorem \citep[Theorem 11.2.9 in][]{lehmann2005testing}  that
\[d_{\infty}\left(\Psi_{s,n_k}, N(0,\sigma^2_{s,n_k})\right) \xrightarrow{a.s.} 0~,\]
and thus by Propositions \ref{prop:lin_normal} and \ref{prop:lin_equiv},
\[d_{\infty}\left(\mathcal{L}\left\{\sqrt{\ell_{n_k}}(\hat{T}_{s,n_k} - T_{s,n_k})\right\}, \Psi_{s,n_k}\right) \xrightarrow{a.s.} 0~.\]
Since we have established that for every subsequence there exists a further subsequence along which the convergence holds almost surely, the original sequence converges in probability, as desired.
\end{proof}

\section{Proofs of Results in Section \ref{sec:Full_test}}
\subsection{Proof of Theorem \ref{thm:full_test}}

We will first show that for any  deterministic sequence of algorithms $a_{1n} \in \mathcal{A}$, and any $P$ satisfying the null hypothesis,
\begin{equation} \label{eq:ConvDeterministic} \limsup_{n \rightarrow \infty}E_P[\phi_n(a_0, a_{1n})] \le \alpha~.
\end{equation}
By Proposition \ref{prop:IUT} (1) it suffices to check that each test $\phi_n^{(s)}$ is asymptotically of level $\alpha$ for distributions $P$ contained in $\Omega_{0s}$, $s \in \{r, b, f\}$.
We prove the result for the fairness test since the other results follow from similar arguments. By definition,
\begin{align}
E_P[\phi_n^{(f)}] &= P(\sqrt{\ell_n}\hat{T}_{f,n} < c^*_\alpha) \nonumber \\
&\le P(\sqrt{\ell_n}(\hat{T}_{f,n} - T_{f,n}) < c^*_\alpha)~, \label{ineq:calpha}
\end{align}
since $T_{f,n} \ge 0$ for $P \in \Omega_{0f}$. 

Further note by Assumption \ref{assp:n} that for any $P \in \Omega_{0f}$, \[|U_F^r(a_{1n}) - U_F^{b}(a_{1n})| \ge |F_{0r} - F_{0b}|(1 - \Delta_f) > 0\] for every $a_{1n}$ in our sequence of candidate algorithms. Thus by Proposition \ref{prop:new_feasible_boot}, for any $\epsilon > 0$ and $n$ sufficiently large,
\[P\left\{d_{\infty}\left(\mathcal{L}\left\{\sqrt{\ell_n}(\hat{T}_{f,n} - T_{f,n})\right\}, \Psi_{f,n}\right) \le \frac{\epsilon}{2}\right\} \ge 1 - \frac{\epsilon}{2}~\]
where $d_{\infty}(\cdot, \cdot)$ denotes the Kolmogorov metric and $\mathcal{L}\{Z\}$ denotes the distribution function of a random variable $Z$. Thus by Lemma A.1. (vii) in \cite{romano2012uniform}, we have that for $n$ sufficiently large,
\[P(\sqrt{\ell_n}(\hat{T}_{f,n} - T_{f,n}) \ge c^*_\alpha) \ge 1 - \alpha - \epsilon~.\]
Combining this with (\ref{ineq:calpha}), it follows that for $n$ sufficiently large,
\[P(\sqrt{\ell_n}\hat{T}_{f,n} < c^*_\alpha) \le 1 - (1-\alpha - \epsilon) = \alpha + \epsilon~,\]
from which we can conclude that 
\[\limsup_{n \rightarrow \infty} E_P[\phi_n^{(f)}] \le \alpha~.\]
Repeating this argument for the other two component nulls, we obtain the desired (\ref{eq:ConvDeterministic}).

Next we argue that the result holds for the random sequence $\hat{a}^\rho_{1n}$. By the law of iterated expectations, 
\[E_P[\phi_n(a_0,\hat{a}^{\rho}_{1n})] = E\left[E_P[\phi_n(a_0, \hat{a}^\rho_{1n}) \mid S^{n}_{train}]\right]~. \]
By the independence of $S^{n}_{train}$ from $S^{n}_{test}$,  $E_P[\phi_n(a_0, \hat{a}^\rho_{1n}) \mid S^{n}_{train}] = h(\hat{a}^\rho_{1n})$, where $h(a) = E_P[\phi_n(a_0,a)]$. Thus by Fatou's lemma and (\ref{eq:ConvDeterministic}), we obtain
\[\limsup_{n \rightarrow \infty}E_P[\phi_n(a_0,\hat{a}^{\rho}_{1n})] \le  E\left[\limsup_{n \rightarrow \infty}h(\hat{a}^\rho_{1n})\right] \le \alpha~,\]
as desired.

\subsection{Proof of Corollary \ref{cor:split_test}}
By Markov's inequality
\[E_P[\psi_n] \le \frac{2}{K}\sum_{k = 1}^KE_P[\phi_{n,k}(a_0, \hat{a}^{\rho}_{1nk})]~.\]
The result then follows immediately from Theorem \ref{thm:full_test}.

\subsection{Proof of Theorem \ref{thm:consistent_test}}

Consider a deterministic sequence of algorithms $a_{1n} \in \mathcal{A}$ satisfying:
\begin{enumerate}
    \item there exists a $\xi > 0$ such that $|U_F^r(a_{1n}) - U_F^{b}(a_{1n})| \ge \xi$ for each $a_{1n}$, and
    \item  $U^g_h(a_{1n}) \rightarrow U^g_h(\gamma)$ for each $g \in \{r, b\}$ and $h \in \{A, F\}$, where  $\gamma \in \mathcal{A}$ is an algorithm that $\delta$-fairness improves on the status quo algorithm $a_0$.
\end{enumerate}

We first argue that 
$E_P[\phi_n^{(s)}] \rightarrow 1$ for $s \in \{r, b, f\}$. We show the result for $\phi_n^{(g)}$ for $g \in \{r,b\}$, as the result for fairness follows similarly. We proceed along subsequences. By Assumptions \ref{assp:Utilities} and \ref{assp:NonDegenerate}, for every subsequence there exists a further subsequence (which we index by $n_k$) such that $\sigma^2_{g,n_k} \rightarrow \sigma^2$ for some $\sigma^2 > 0$. Thus by Proposition \ref{prop:new_feasible_boot}
\begin{align}\label{eq:boot_normal}
d_{\infty}\left(\Psi_{g,n_k}, N(0,\sigma^2)\right) \xrightarrow{P} 0~.
\end{align}
Along a further subsequence (which we continue to index by $n_k$), we obtain that \eqref{eq:boot_normal} holds almost surely, and thus by Lemma 11.2.1 in \cite{lehmann2005testing}, $c^*_{1 - \alpha} \xrightarrow{a.s} c_{1 - \alpha}$ along this subsequence, where $c_{1- \alpha}$ denotes the $1 - \alpha$ quantile of a $N(0, \sigma^2)$ distribution. We thus obtain that, along this subsequence,
\begin{align*}
    E_P[\phi_{n_k}^{(g)}] &= P\left(\sqrt{\ell_{n_k}}\hat{T}_{g,n_k} > c^*_{1 - \alpha}\right) \\
    &= P\left(\sqrt{\ell_{n_k}}\left(\hat{T}_{g,n_k} - T_{g,n_k}\right) > c^*_{1 - \alpha} - \sqrt{\ell_{n_k}}T_{g,n_k}\right) \rightarrow 1~,
\end{align*}
where the convergence follows from Corollary 11.3.1 in \cite{lehmann2005testing} and the fact that, since $\gamma$ is a fairness-improvement,
\[c^*_{1 - \alpha} - \sqrt{\ell_{n_k}}T_{g,n_k} \xrightarrow{a.s} -\infty~.\]
Since we have established that for every subsequence there exists a further subsequence along which $E_P[\phi_{n_k}^{(g)}] \rightarrow 1$, we obtain $E_P[\phi_n^{(g)}] \rightarrow 1$. Similar arguments establish $E_P[\phi_n^{(f)}] \rightarrow 1$. Next, we argue that these results continue to hold when we replace the sequence $a_{1n}$ with $\hat{a}^{\rho}_{1n}$. To that end, once again we argue along subsequences. For each subsequence $\phi_{n_k}^{(g)}$, there exists a further subsequence such that $U^g_h(\hat{a}^{\rho}_{1n}) \xrightarrow{a.s} U^g_h(\gamma)$ for each $g \in \{r, b\}$, $h \in \{A,F\}$. Then by the law of iterated expectations and the dominated convergence theorem,
\[E_P[\phi_{n_k}^{(g)}] = E[E[\phi_{n_k}^{(g)}|S^{n}_{train}]] \rightarrow 1~,\]
so that the result follows along the original sequence. Finally we establish the main result of the theorem: by construction
\[E_P[\phi_n(a_0, \hat{a}^{\rho}_{1n})] \ge E_P[\phi_n^{(r)}] +  E_P[\phi_n^{(b)}] +  E_P[\phi_n^{(f)}] - 2 \rightarrow 1~,\]
as desired.

\section{Proofs of Results in Section \ref{sec:RobustManipulation}}

\subsection{Proof of Theorem \ref{prop:RobustManipulationGeneral}}

 The plan of the proof is to demonstrate that there exists an $\varepsilon>0$ such that
  \begin{equation}
 P\left(\max_{1\leq i \leq m_1^*} \phi_{i1}^\alpha = 1\right)  > \alpha + \varepsilon > P\left(\max_{1\leq i \leq m_2^*} \mathbbm{1}\left(\frac1K \sum_{k=1}^K \phi_{ik}^{\alpha/2}\geq \frac12 \right)=1\right)
 \end{equation} 
 when $K$ is sufficiently large. To show this, we will first show the stronger claim that
 \begin{equation}
 P\left(\max_{1\leq i \leq m } \phi_{i1}^\alpha = 1\right)  > \alpha + \varepsilon
 \end{equation} 
 for every $m>1$, and also that 
 \begin{equation}
 P\left(\max_{1\leq i \leq m } \mathbbm{1}\left(\frac1K \sum_{k=1}^K \phi_{ik}^{\alpha/2}\geq \frac12 \right)=1\right) < \alpha + \varepsilon
 \end{equation}
 for $K$ sufficiently large and $m$ not too large (but including $m=m_2^*$). Since $\alpha + \varepsilon$ is a uniform lower bound on the probability of false rejection under the first procedure (across the number of repetitions $m_1$) and a uniform upper bound on the probability of false rejection under the second procedure (across the number of repetitions $m_2$), we obtain the desired result.

 \begin{lemma} \label{lemm:Part2} For every $\varepsilon>0$, there exist finite $\overline{m}$ and $\overline{K}$ such that 
 \begin{equation} \label{eq:Inequality2}
 P\left(\max_{1\leq i \leq m } \mathbbm{1}\left(\frac1K \sum_{k=1}^K \phi_{ik}^{\alpha/2}\geq \frac12 \right)=1\right) < \alpha + \varepsilon
 \end{equation}
 for every $1 \leq m \leq \overline{m}$ and $K > \overline{K}$. Moreover, $m_2^* < \overline{m}$.
     
 \end{lemma}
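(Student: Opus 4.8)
The plan is to pass to the de Finetti representation of the exchangeable array $(\phi_{ik}^{\alpha/2})$, which turns the rejection probability into the expectation of a binomial tail; to control that tail as $K\to\infty$ using the law of large numbers and Markov's inequality; and, in parallel, to show that the analyst's optimal number of repetitions $m_2^*$ under $s_2$ is bounded by a constant not depending on $K$, so that a single pair $(\overline m,\overline K)$ works simultaneously.

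First I would fix $\varepsilon>0$. Since the array $(\phi_{ik}^{\alpha/2})$ is exchangeable with Bernoulli$(\alpha/2)$ marginals, de Finetti's theorem gives a $[0,1]$-valued random variable $q$ with $E[q]=\alpha/2$ such that, conditionally on $q$, the $\phi_{ik}^{\alpha/2}$ are i.i.d.\ Bernoulli$(q)$. Writing $\rho_K(q)=P(\mathrm{Bin}(K,q)\ge K/2)$, the $m$ events $\{\tfrac1K\sum_{k=1}^K\phi_{ik}^{\alpha/2}\ge\tfrac12\}$, $i=1,\dots,m$, are then conditionally i.i.d.\ Bernoulli$(\rho_K(q))$ given $q$ (the $i$-indexed sums are independent across $i$ and $\mathrm{Bin}(K,q)$-distributed), so that
\[
P\!\left(\max_{1\le i\le m}\mathbbm{1}\!\left(\tfrac1K\textstyle\sum_{k=1}^K\phi_{ik}^{\alpha/2}\ge\tfrac12\right)=1\right)=E\big[\,1-(1-\rho_K(q))^m\,\big].
\]
By the strong law of large numbers, $\rho_K(q)\to\mathbbm{1}\{q>1/2\}+\tfrac12\mathbbm{1}\{q=1/2\}$ for every $q$; since $1-(1-\rho_K(q))^m\in[0,1]$, dominated convergence yields
\[
\lim_{K\to\infty}E\big[\,1-(1-\rho_K(q))^m\,\big]=P(q>1/2)+(1-2^{-m})\,P(q=1/2)\le P(q\ge 1/2)\le\alpha,
\]
the final inequality being Markov's inequality for $q$ (using $E[q]=\alpha/2$). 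Hence for each fixed $m$ there is a finite $\overline K(m)$ with the left-hand side of the displayed identity $<\alpha+\varepsilon$ whenever $K>\overline K(m)$; and since that quantity is nondecreasing in $m$, the bound at $m=\overline m$ implies it for every $m\le\overline m$ once $K>\overline K(\overline m)$.

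It remains to choose $\overline m$ and verify $m_2^*<\overline m$. Here I would use that the analyst's period-2 payoff from $m$ repetitions of $s_2$ is at most $1-c_2(m)$, while the payoff from a single repetition is at least $-c_2(1)$ (the rejection probability is nonnegative). Optimality of $m_2^*$ therefore forces $c_2(m_2^*)-c_2(1)\le 1$, and weak convexity of $c_2$ gives $c_2(m_2^*)-c_2(1)\ge (m_2^*-1)\big(c_2(2)-c_2(1)\big)$; since $c_2$ is increasing we conclude $m_2^*\le 1+1/\big(c_2(2)-c_2(1)\big)$, a finite bound not depending on $K$ (and, when $c_2$ varies with $K$ as in $c_2(m)=\gamma K m$, still uniformly bounded since $c_2(2)-c_2(1)$ stays bounded away from zero). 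Setting $\overline m:=\big\lceil 1+1/(c_2(2)-c_2(1))\big\rceil+1$ and $\overline K:=\overline K(\overline m)$ then gives $m_2^*<\overline m$ and, by the previous paragraph, the claimed inequality for all $1\le m\le\overline m$ and $K>\overline K$.

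I expect the main obstacle to be the order-of-limits point rather than any single computation: the bound we ultimately need is evaluated at $m=m_2^*$, which itself depends on $K$, so the argument only closes once $m_2^*$ has been pinned below a $K$-free threshold and the rejection probability has been bounded uniformly over all $m\le\overline m$ (hence the need to combine the pointwise-in-$m$ limit with monotonicity in $m$). The de Finetti reduction, the convergence $\rho_K(q)\to\mathbbm{1}\{q>1/2\}+\tfrac12\mathbbm{1}\{q=1/2\}$, and the Markov bound are otherwise routine.
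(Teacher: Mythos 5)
Your proof is correct and reaches the same conclusion, but the way you bound $m_2^*$ is genuinely different from (and more elementary than) the paper's argument, so it is worth comparing. For the $K$-uniformity over $m\le\overline m$, you replace the paper's ``take the maximum of $\overline K_1,\dots,\overline K_{\overline m}$'' step with the observation that $E\big[1-(1-\rho_K(q))^m\big]$ is nondecreasing in $m$, so that the bound at $m=\overline m$ propagates downward; this is a minor but clean simplification. The substantive difference is the bound on $m_2^*$. The paper chooses $\overline m$ to satisfy $c_2(\overline m)>\alpha+\varepsilon$ (after normalizing $c_2(1)=0$), proves that $v(m)=1-\int f_K(q)^m\,d\mu(q)$ is concave in $m$ (because $f_K(q)^m$ is a convex function of $m$ for each $q$ and convexity is preserved under mixtures), and then argues from concavity of $u(m)=v(m)-c_2(m)$ together with $u(1)>0$ and $u(\overline m)<0$ that $m_2^*<\overline m$; note this requires $K>\overline K$ already. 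You instead sidestep the concavity of $v$ entirely: from optimality, $c_2(m_2^*)-c_2(1)\le v(m_2^*)-v(1)\le 1$, and weak convexity of $c_2$ gives $(m_2^*-1)\big(c_2(2)-c_2(1)\big)\le c_2(m_2^*)-c_2(1)$, yielding the explicit $K$-free bound $m_2^*\le 1+1/\big(c_2(2)-c_2(1)\big)$, which holds for every $K$ (not only large $K$) and requires no normalization. Your route is shorter and buys an explicit bound that depends only on the cost function; the paper's route uses the structure of $v$ (concavity of the payoff) and is tied to the chosen threshold $\alpha+\varepsilon$, but it yields the qualitative insight that the analyst's payoff is concave in $m$. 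Both give the lemma as stated.
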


\begin{proof} Fix any $m\geq 1$. First consider the LHS of (\ref{eq:Inequality2}), and rewrite this probability as
 \begin{equation} \label{eq:Complement}
 P\left(\max_{1\leq i \leq m} \mathbbm{1}\left(\frac1K \sum_{k=1}^K \phi_{ik}^{\alpha/2}\geq \frac12 \right)=1\right) = 1 - P\left(\frac1K \sum_{k=1}^K \phi_{ik}^{\alpha/2} < \frac12 \quad \forall i\right)
 \end{equation}
Since the infinite array of tests $(\phi^{\alpha/2}_{ik})$ has exchangeable entries, we can apply de Finetti's theorem to express the variables $(\phi_{ik}^{\alpha/2})$ as an i.i.d array $(Z_{ik}^q)$ where each $Z_{ik}^q \sim \text{Ber}(q)$ while $q \sim \mu$. Since $\mathbb{E}(\phi_{ik}^{\alpha/2}) = \alpha/2$, also $\mathbb{E}_\mu(q) = \alpha/2$. Then
 \begin{align*}
 1 - P\left(\frac1K \sum_{k=1}^K \phi_{ik}^{\alpha/2} < \frac12 \quad \forall i\right) & = 1 - \int_{q \in [0,1]} P\left(\frac1K \sum_{k=1}^K Z_{ik}^q < \frac12 \quad \forall i\right) d\mu(q) \\
& = 1 - \int_{q \in [0,1]} \left[P\left(\frac1K \sum_{k=1}^K Z_{1k}^q < \frac12\right)\right]^m d\mu(q) 
 \end{align*}
Define the function
\[f_K(q) = P\left(\frac1K \sum_{k=1}^K Z_{1k}^q < \frac12\right)\]
for $q \in [0,1]$. By the strong law of large numbers and the Central Limit Theorem, $f_K(q)$ has three possible limit points as $K$ grows large. This limit is 1 if $q\in [0,1/2)$, 0 if $q\in(1/2,1]$, and $1/2$ if $q=1/2$. Moreover, $0 \leq f_K(q) \leq g(q)$ pointwise, where the function $g(q)=1$ is integrable on $[0,1]$. So by the dominated convergence theorem
\begin{align*}
1 - \lim_{K \rightarrow \infty} &\int_{q \in [0,1]} \left[f_K(q)\right]^m d\mu(q) \\ & = 1 - \int_{q \in [0,1]} \lim_{K \rightarrow \infty} \left[f_K(q)\right]^m d\mu(q) \\
& = 1-\left[\mu\left(\left\{q<\frac12\right\}\right) \cdot 1 + \mu\left(\left\{q>\frac12\right\}\right) \cdot 0 + \mu\left(\left\{q=\frac12\right\}\right) \cdot \left(\frac12\right)^m \right] \\
& \leq \mu\left(\left\{q\geq \frac12\right\}\right) \\
& \leq 2\mathbb{E}_\mu(q) = \alpha
\end{align*} 
where the final line uses Markov's inequality and $\mathbb{E}_\mu(q)=\alpha/2$. Thus for every $\varepsilon>0$ there exists $\overline{K}_m$ such that 
\[ 1 - \int_{q \in [0,1]} \left[P\left(\frac1K \sum_{k=1}^K Z_{1k}^q < \frac12\right)\right]^m d\mu(q) < \alpha +\varepsilon\]
for all $K > \overline{K}_m$. Together with (\ref{eq:Complement}) this implies
\begin{equation} \label{eq:SufficientK}
P\left(\max_{1\leq i \leq m } \mathbbm{1}\left(\frac1K \sum_{k=1}^K \phi_{ik}^{\alpha/2}\geq \frac12 \right)=1\right) < \alpha +\varepsilon
\end{equation}
for $K \geq \overline{K}_m$.

We will now demonstrate a bound on $K$ that holds uniformly across the relevant values of $m$. Without loss, normalize $c_2(1)=0$.\footnote{For arbitrary cost functions $c_1$ and $c_2$, define $\tilde{c}_i(m) = c_i(m) - c_2(1)$ for each $i \in \{1,2\}$, noting that $\tilde{c}_2(1)=0$. The analyst's solutions $m_1^*$ and $m_2^*$ are unchanged by adding the constant $c_2(1)$ to his payoffs, and so the policymaker's payoffs are identical given cost functions $c_i$ or $\tilde{c}_i$.} Since $c_2(m)$ is (by assumption) weakly convex and strictly increasing, there exists an $\overline{m}$ such that $c_2(\overline{m})>\alpha + \varepsilon$. Define
 $\overline{K} = \max\left\{\overline{K}_1, \dots, \overline{K}_{\overline{m}}\right\}$. Then when $K > \overline{K}$,
\begin{equation} \label{eq:UniformBound}
v(m) \equiv P\left(\max_{1\leq i \leq m } \mathbbm{1}\left(\frac1K \sum_{k=1}^K \phi_{ik}^{\alpha/2}\geq \frac12 \right)=1\right) < \alpha +\varepsilon
\end{equation}
holds uniformly across $1 \leq m \leq \overline{m}$ by our previous arguments.

It remains to show that $m_2^* < \overline{m}$. Because $\int f_K(q)^m d\mu(q)$ is a positive mixture of convex functions, it is itself convex in $m$. Thus $v(m) = 1 - \int f_K(q)^m d\mu(q)$ is concave, and so the analyst's payoff
\[u(m) = v(m) - c_2(m)\]
is concave in $m$.  Moreover $u(1) =v(1) - c_2(1)> 0$ while $u(\overline{m}) < 0 $, since 
\begin{align*} 
v(\overline{m}) & < \alpha + \varepsilon  && \text{by  (\ref{eq:UniformBound})} \\
& < c_2(\overline{m}) && \text{by definition of $\overline{m}$}
\end{align*}
Thus there exists an $m' \leq \overline{m}$ such that $u(m) < 0$ for all $m\geq m'$. (See Figure \ref{fig:nbar} for a depiction of these points.) No $m\geq m'$ can be optimal, since the analyst could profitably deviate  to $m=1$. Thus the analyst's solution satisfies $m^*_2 < m' \leq \overline{m}$ as desired.

 \begin{figure}[h]
 \begin{center}
     \includegraphics[scale=0.32]{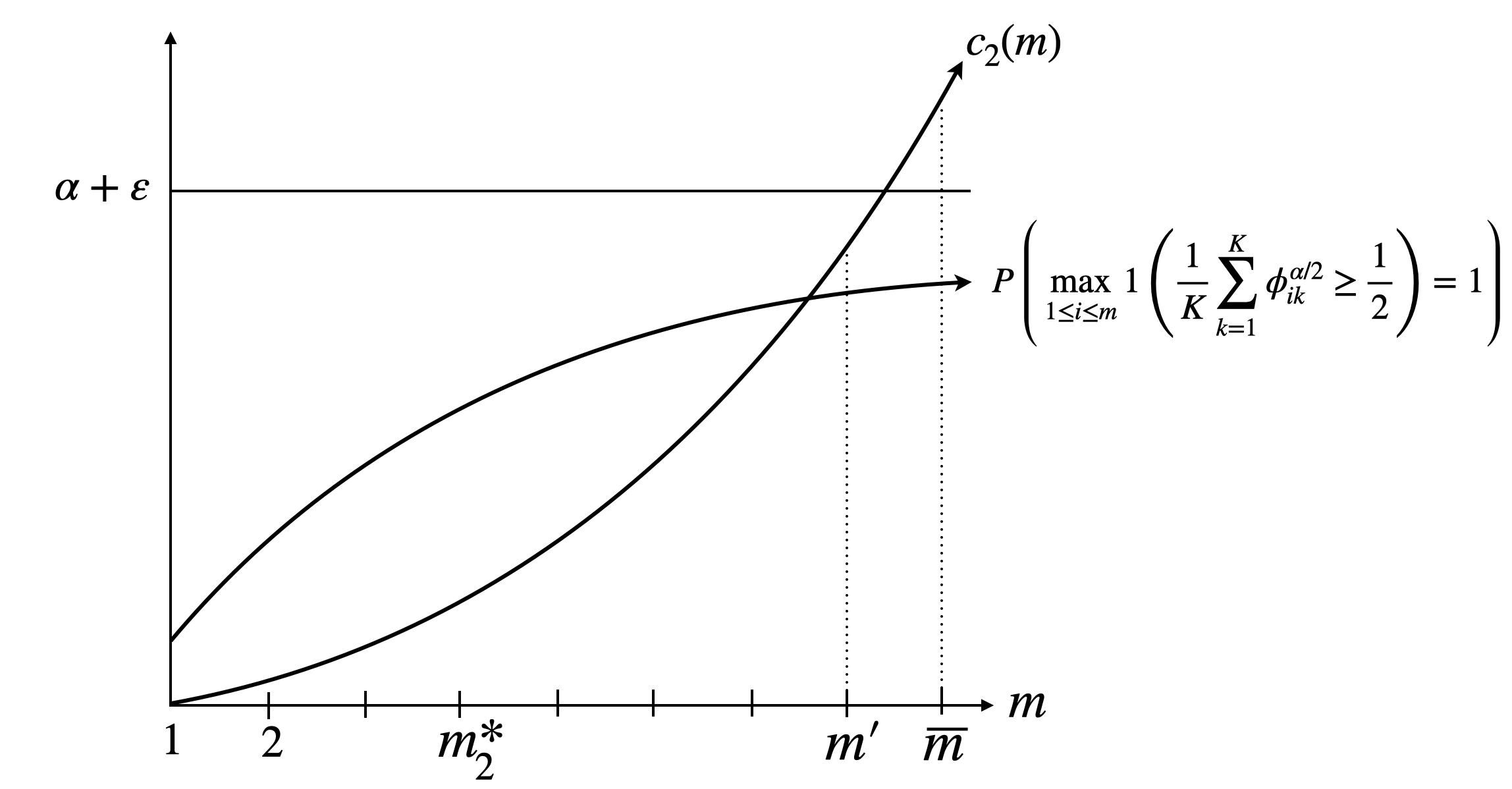}
     \caption{The analyst's solution satisfies $m_2^* \leq \overline{m}$ for some finite $\overline{m}$.} \label{fig:nbar}
     \end{center}
 \end{figure}
 
 \end{proof}

  \begin{lemma}  \label{lemm:Part1} For every $m > 1$, \begin{equation} \label{eq:Inequality1}
 P\left(\max_{1\leq i \leq m } \phi_{i1}^\alpha = 1\right)  > \alpha.
 \end{equation}
 Moreover $m_1^* > 1$.
  \end{lemma}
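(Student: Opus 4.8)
The plan is to derive both assertions as short consequences of Assumption \ref{assp:LowCost}, using only that each $\phi_{i1}^\alpha$ is Bernoulli with parameter $\alpha$. For \eqref{eq:Inequality1}, I would start from the complement identity
\[P\Big(\max_{1\le i\le m}\phi_{i1}^\alpha=1\Big)=1-P\Big(\bigcap_{i=1}^m\{\phi_{i1}^\alpha=0\}\Big),\]
and, since $m>1$, bound the intersection probability from above by retaining only the first two events: $P\big(\bigcap_{i=1}^m\{\phi_{i1}^\alpha=0\}\big)\le P\big(\{\phi_{11}^\alpha=0\}\cap\{\phi_{21}^\alpha=0\}\big)$. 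Decomposing the right-hand side as $P(\phi_{11}^\alpha=0)-P(\{\phi_{11}^\alpha=0\}\cap\{\phi_{21}^\alpha=1\})=(1-\alpha)\big(1-P(\phi_{21}^\alpha=1\mid\phi_{11}^\alpha=0)\big)$ and invoking Assumption \ref{assp:LowCost}, whose right-hand side is nonnegative because $c_1$ is increasing, yields $P(\phi_{21}^\alpha=1\mid\phi_{11}^\alpha=0)>0$. Hence $P\big(\bigcap_{i=1}^m\{\phi_{i1}^\alpha=0\}\big)<1-\alpha$, which gives the strict inequality \eqref{eq:Inequality1}.

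For the claim $m_1^*>1$, I would compare the analyst's objective in \eqref{eq:analystProblem1} at $m=1$ and $m=2$. At $m=1$ it equals $P(\phi_{11}^\alpha=1)-c_1(1)=\alpha-c_1(1)$, and the decomposition from the previous step gives $P(\max\{\phi_{11}^\alpha,\phi_{21}^\alpha\}=1)=\alpha+(1-\alpha)P(\phi_{21}^\alpha=1\mid\phi_{11}^\alpha=0)$, so the objective at $m=2$ equals $\alpha+(1-\alpha)P(\phi_{21}^\alpha=1\mid\phi_{11}^\alpha=0)-c_1(2)$. The difference (value at $m=2$ minus value at $m=1$) is $(1-\alpha)P(\phi_{21}^\alpha=1\mid\phi_{11}^\alpha=0)-\big(c_1(2)-c_1(1)\big)$, which is strictly positive by Assumption \ref{assp:LowCost}. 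Thus $m=1$ is not a maximizer of \eqref{eq:analystProblem1}; since ties are broken in favor of the smallest number of repetitions, $m_1^*\ge 2$, i.e., $m_1^*>1$.

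There is no substantive obstacle: the argument is elementary bookkeeping with probabilities, and exchangeability of the array $(\phi_{ik}^\alpha)$ is not needed here. The only points requiring care are (i) noting that the bound in Assumption \ref{assp:LowCost} is nonnegative (because $c_1$ is increasing), so that the assumption genuinely delivers $P(\phi_{21}^\alpha=1\mid\phi_{11}^\alpha=0)>0$, which is what makes \eqref{eq:Inequality1} strict; and (ii) the tie-breaking convention, which is what turns "$m=2$ strictly dominates $m=1$'' into "$m_1^*\neq 1$''.
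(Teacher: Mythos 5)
Your proof is correct, and for the first claim it takes a genuinely different — and shorter — route than the paper.

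For $m_1^* > 1$, your argument is the same as the paper's: compare the analyst's objective at $m=2$ and $m=1$, observe that the difference equals $(1-\alpha)P(\phi_{21}^\alpha = 1 \mid \phi_{11}^\alpha = 0) - (c_1(2) - c_1(1))$, and invoke Assumption \ref{assp:LowCost} to conclude it is strictly positive, so that (with the stated tie-breaking) $m_1^* \geq 2$.

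For \eqref{eq:Inequality1}, however, the paper applies de Finetti's theorem to write
\[
P\Big(\max_{1\le i\le m}\phi_{i1}^\alpha = 1\Big) = 1 - \int_{[0,1]}(1-q)^m\,d\mu(q),
\]
bounds $(1-q)^m \le (1-q)$, and then obtains strictness by showing that equality would force $\mu$ to concentrate on $\{0,1\}$ with $\mu(\{0\})=1-\alpha$, $\mu(\{1\})=\alpha$, which would make $P(\phi_{21}^\alpha=1\mid\phi_{11}^\alpha=0)=0$ and contradict Assumption \ref{assp:LowCost}. Your argument instead uses only the monotonicity bound
\[
P\Big(\bigcap_{i=1}^m\{\phi_{i1}^\alpha=0\}\Big)\le P\big(\{\phi_{11}^\alpha=0\}\cap\{\phi_{21}^\alpha=0\}\big)=(1-\alpha)\big(1-P(\phi_{21}^\alpha=1\mid\phi_{11}^\alpha=0)\big),
\]
and gets strictness directly from Assumption \ref{assp:LowCost}. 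This is more elementary — no exchangeability and no de Finetti are needed, only that each $\phi_{i1}^\alpha$ is Bernoulli$(\alpha)$ — and it also yields the explicit uniform lower bound $P(\max) \ge \alpha + (1-\alpha)P(\phi_{21}^\alpha=1\mid\phi_{11}^\alpha=0)$ for all $m\ge 2$, which is exactly the uniform $\alpha+\varepsilon$ bound that the proof of Theorem \ref{prop:RobustManipulationGeneral} needs downstream (the paper obtains this by a separate monotonicity remark). Both routes hinge on the same core fact, namely that Assumption \ref{assp:LowCost} rules out the second draw being perfectly redundant given the first; your version isolates that observation more cleanly.
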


  \begin{proof}
Let $Z^q_1, Z^q_2, \dots$ be i.i.d Ber$(q)$ variables where $q \sim \mu$ and $\mathbb{E}_\mu(q)=\alpha$. Following identical arguments to those used in the proof of Lemma \ref{lemm:Part2}, we have
 \begin{align*}
 P\left(\max_{1\leq i \leq m} \phi_{i1}^\alpha = 1\right)  & = 1 - P\left(\phi_{i1}^{\alpha} =0 \,\, \forall i\right) \\
 & = 1 - \int_{q \in [0,1]} \left[P\left(Z^q_{1} =0\right)\right]^m d\mu(q)  \\
  & = 1 - \int_{q \in [0,1]} (1-q)^m d\mu(q) \\
  & \geq 1 - \int_{q \in [0,1]} (1-q) d\mu(q) && \text{since $q \in [0,1]$} \\
  & = 1-(1-\mathbb{E}_\mu(q)) = \alpha
 \end{align*}
When $m>1$, we can improve the weak inequality to a strict inequality by the following argument. Suppose instead that $\int_{q \in [0,1]} (1-q)^m d\mu(q) = \int_{q \in [0,1]} (1-q) d\mu(q)$. This (and the constraint $\mathbb{E}_\mu(q)=\alpha$) requires $\mu$ to place probability $1-\alpha$ on $q=0$ and probability $\alpha$ on $q=1$. But then $P(\phi_{21}^\alpha =1 \mid \phi_{11}^\alpha =0)=0$, contrary to Assumption \ref{assp:LowCost}.

So (\ref{eq:Inequality1}) holds for every $m>1$, as desired. It remains to argue that $m_1^*>1$. A sufficient condition is
\[P(\max \{\phi_{11}^\alpha,\phi_{21}^\alpha\}=1) - P(\phi_{11}^\alpha = 1) > c_1(2) - c_1(1) \]
i.e., the marginal value of the second repetition exceeds its cost. This reduces to $P(\phi_{21}^\alpha = 1 \mid \phi_{11}^\alpha = 0) \cdot P(\phi_{11}^\alpha =0) > c_1(2) - c_1(1)$, or
\[ P(\phi_{21}^\alpha = 1 \mid \phi_{11}^\alpha = 0) > \frac{c_1(2)-c_1(1)}{1-\alpha}\]
which is the content of Assumption \ref{assp:LowCost}. 
\end{proof}

Lemmas \ref{lemm:Part2} and \ref{lemm:Part1} yield
\[P\left(\max_{1\leq i \leq m_1^*} \phi_{i1}^\alpha = 1\right)  > \alpha + \varepsilon > P\left(\max_{1\leq i \leq m_2^*} \mathbbm{1}\left(\frac1K \sum_{k=1}^K \phi_{ik}^{\alpha/2}\geq \frac12 \right)=1\right)\]
for all $K > \overline{K}$, which further implies
 \[ u_P(s_1) = 1 - P\left(\max_{1\leq i \leq m_1^* } \phi_{i1}^\alpha = 1\right)  <  1 - P\left(\max_{1\leq i \leq m_2^* } \mathbbm{1}\left(\frac1K \sum_{k=1}^K \phi_{ik}^{\alpha/2}\geq \frac12 \right)=1\right) = u_P(s_2)\]
 for $K$ sufficiently large. This concludes the proof.

\subsection{Proof of Proposition \ref{prop:RobustManipulation}}

The proposition follows from the subsequent lemma.
\begin{lemma} \label{lemm:CompareValue} For every $K > -\frac{2\ln \alpha}{(1-\alpha)^2}$, 
\[P\left(\max_{1\leq i \leq m } \mathbbm{1}\left(\frac1K \sum_{k=1}^K \phi_{ik}^{\alpha/2} \geq \frac12 \right)=1\right)  < \alpha \leq P\left(\max_{1\leq i \leq m } \phi_{i1}^{\alpha} = 1\right) \quad \forall m \in \mathbb{Z}_+\]
\end{lemma}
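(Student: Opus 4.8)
The plan is to exploit the i.i.d.\ hypothesis to collapse both probabilities into explicit closed forms and then to control the binomial tail appearing on the left-hand side with Hoeffding's inequality; the particular concentration bound used is what pins down the threshold on $K$. First I would reduce to i.i.d.\ Bernoulli arrays: since $(p_i^k)$ has i.i.d.\ entries and the underlying test has exact size, the indicators $\phi_{ik}^{\beta}:=\mathbbm{1}(p_i^k<\beta)$ form an i.i.d.\ array with $\phi_{ik}^{\beta}\sim\mathrm{Ber}(\beta)$ for $\beta\in\{\alpha,\alpha/2\}$; in particular $S_i:=\sum_{k=1}^{K}\phi_{ik}^{\alpha/2}\sim\mathrm{Bin}(K,\alpha/2)$ independently across $i$, and $(\phi_{i1}^{\alpha})_i$ is i.i.d.\ $\mathrm{Ber}(\alpha)$.

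Next I would record the closed forms. By independence across $i$,
\[
P\!\left(\max_{1\le i\le m}\phi_{i1}^{\alpha}=1\right)=1-(1-\alpha)^{m},\qquad
P\!\left(\max_{1\le i\le m}\mathbbm{1}\!\Big(\tfrac1K\sum_{k=1}^{K}\phi_{ik}^{\alpha/2}\ge\tfrac12\Big)=1\right)=1-(1-q_K)^{m},
\]
where $q_K:=P(S_1\ge K/2)$ (to be read as $P(S_1\ge(K+1)/2)$ for odd $K$). The right-hand inequality of the lemma is then immediate: because $m\ge 1$ and $1-\alpha\in(0,1)$ we have $(1-\alpha)^{m}\le 1-\alpha$, so $1-(1-\alpha)^{m}\ge\alpha$ for every $m$, with equality at $m=1$.

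Then I would bound $q_K$. Rewriting $q_K=P\big(\tfrac1K S_1-\tfrac{\alpha}{2}\ge\tfrac{1-\alpha}{2}\big)$, where $\tfrac1K S_1$ is an average of $K$ independent $[0,1]$-valued variables with mean $\alpha/2$, Hoeffding's inequality gives
\[
q_K\le\exp\!\Big(-2K\big(\tfrac{1-\alpha}{2}\big)^{2}\Big)=\exp\!\Big(-\tfrac{K(1-\alpha)^{2}}{2}\Big),
\]
and the hypothesis $K>-\tfrac{2\ln\alpha}{(1-\alpha)^{2}}$ is exactly equivalent to the right-hand side being strictly less than $\alpha$, so $q_K<\alpha$. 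Feeding $q_K<\alpha$ back into the second closed form above yields the left-hand inequality (at $m=1$ it is precisely $q_K<\alpha$, and in general $1-(1-q_K)^m$ stays strictly below $1-(1-\alpha)^m$); together with the previous paragraph this gives the displayed chain, and Proposition~\ref{prop:RobustManipulation} follows by evaluating the lemma at $m=m_2^{*}$ on the left and $m=m_1^{*}$ on the right.

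The hard part will be the Hoeffding step: one must use the correct normalization (each summand has range $1$, so the exponent is $2Kt^{2}$ with $t=\tfrac{1-\alpha}{2}$) in order for the threshold to come out as $-2\ln\alpha/(1-\alpha)^{2}$ exactly — a looser constant would only establish the conclusion for a larger range of $K$ than advertised. Two minor points I would state explicitly but that do not affect the argument are that "exact size" must be invoked to pin down $P(p_i^k<\beta)=\beta$ at both $\beta=\alpha$ and $\beta=\alpha/2$, and that the rounding convention in the event $\{S_i\ge K/2\}$ must be fixed for odd $K$.
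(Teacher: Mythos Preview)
Your proposal is correct and follows essentially the same route as the paper: reduce to i.i.d.\ Bernoulli indicators via exact size, write both probabilities as $1-(1-\alpha)^m$ and $1-(1-q_K)^m$, and control the binomial tail $q_K=P(S_K\ge K/2)$ by Hoeffding's inequality to obtain $q_K<\alpha$ precisely when $K>-2\ln\alpha/(1-\alpha)^2$. The paper's proof is identical in structure and constants, including the sufficiency step ``it is enough to show $q_K<\alpha$''; your side remarks on the odd-$K$ rounding and on invoking exact size at both levels $\alpha$ and $\alpha/2$ are refinements the paper leaves implicit.
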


 \begin{proof} By assumption that the infinite array $(p_i^k)$ has i.i.d entries, 
\begin{equation} \label{eq:Simplify1}
P\left(\max_{1\leq i \leq m } \phi_{i1}^\alpha = 1\right) = 1-\left[P\left(\phi_{i1}^\alpha =0\right)\right]^m = 1- (1-\alpha)^m \geq \alpha.
\end{equation}
while
 \begin{equation} \label{eq:Simplify2}
 P\left(\max_{1\leq i \leq m } \mathbbm{1}\left(\frac1K \sum_{k=1}^K \phi_{ik}^{\alpha/2} \geq \frac12 \right)=1\right)   = 1- \left[P\left(\frac1K \sum_{k=1}^K \phi_{1k}^{\alpha/2} < \frac12\right)\right]^m
 \end{equation}
So it is sufficient to show $P\left(\frac1K \sum_{k=1}^K \phi_{1k}^{\alpha/2} < \frac12\right)> 1-\alpha$, or equivalently
 \begin{equation} \label{eq:ToShow}
P\left(\frac1K \sum_{k=1}^K \phi_{1k}^{\alpha/2} \geq \frac12\right) < \alpha.
 \end{equation}
This inequality says that the probability of (incorrectly) rejecting under statistical test $s_1$ is higher than under statistical test $s_2$. To show this, define $S_K = \sum_{k=1}^K \phi_{1k}^{\alpha/2}$, observing that $S_K \sim \text{Binomial}(K,\alpha/2)$. Then by Hoeffding's inequality,
\[P(S_K \geq K/2) = P\left(S_K - \mathbb{E}(S_K) \geq \frac{K}{2}\left(1 - \alpha\right)\right) \leq e^{-\frac{K\left(1 - \alpha\right)^2}{2}}\]
So the desired (\ref{eq:ToShow}) is satisfied whenever $e^{-\frac{K\left(1 - \alpha\right)^2}{2}} < \alpha$, or equivalently, $K > -\frac{2\ln \alpha}{(1-\alpha)^2}$.
 \end{proof}

Lemma \ref{lemm:CompareValue} implies
\[
1-P\left(\max_{1\leq i \leq m_1^* } \phi_{i1}^\alpha = 1\right) \leq 1-\alpha < 1- P\left(\max_{1\leq i \leq m_2^*} \mathbbm{1}\left(\frac1K \sum_{k=1}^K \phi_{ik}^{\alpha/2}\geq \frac12 \right)=1\right)\]
and hence
 \[ u_P(s_1) = 1 - P\left(\max_{1\leq i \leq m_1^* } \phi_{i1}^\alpha = 1\right)  <  1 - P\left(\max_{1\leq i \leq m_2^* } \mathbbm{1}\left(\frac1K \sum_{k=1}^K \phi_{ik}^{\alpha/2}\geq \frac12 \right)=1\right) = u_P(s_2)\]
 for all $K > \frac{2\ln \alpha}{(1-\alpha)^2}$, as desired.

 \section{Simulation Exercise}\label{sec:sims}
In this section we study the effect of employing arbitrarily fair candidate algorithms on the power of our bootstrap procedure for testing fairness. To focus on this specific feature of the test, we abstract away from the details of our procedure and explicitly generate values of the status-quo and candidate fairness measures using a data generating process given by
\[
\begin{bmatrix}
\Gamma_0 \\
\Gamma_1
\end{bmatrix}
\sim N
\left(
\begin{bmatrix}
1.52 \\
\eta
\end{bmatrix},
\begin{bmatrix}
10 & 6.85 \\
6.85 & 10.83
\end{bmatrix}
\right)~.
\]
Here, $\Gamma_0$ and $\Gamma_1$ represent random draws of the utility discrepancy between groups (i.e. realizations of $u^{F}_{tr}(Z_i,\theta) - u^{F}_{tb}(Z_i,\theta)$) for the status-quo and candidate algorithms, respectively. Our null hypothesis of interest could thus be restated as
\[H_0: \left|E[\Gamma_1]\right| - |E[\Gamma_0]| \ge 0~,\]
against the alternative
\[H_1: |E[\Gamma_1]| -  |E[\Gamma_0]| < 0 ~.\]
The mean of $\Gamma_0$ and the covariance matrix for $[\Gamma_0, \Gamma_1]$ were calibrated using the test sample from Iteration 3 in Table \ref{tab:p_val}.\footnote{For random forests, both Iteration 2 and 3 yield the median $p$-value. To break the tie, we select Iteration 3 as the median split, as it produces a higher $p$-value for linear regression and LASSO (see also Table~\ref{tab:p_val_lasso} and \ref{tab:p_val}.)}
Our objective is to study the power of the test $\phi_n^{(f)}$ for a range of values of the mean fairness of the candidate algorithm given by $\eta \in [0, 1.75]$. For each $\eta$, we report the rejection probability of a $5\%$ level test based on $2,000$ bootstrap draws, computed using $5,000$ Monte Carlo iterations. To simplify the computation, we consider much smaller test set sample sizes  ($\ell_n \in \{100, 200, 400\}$) than the sample size available in the empirical application, but we simultaneously reduce the amount of noise in the problem considerably, by re-normalizing the covariance matrix so that $\text{Var}(\Gamma_0) = 10$. 

Figure \ref{fig:Bootpower} presents the computed power curves for each sample size. In all cases, we see that the test controls size when $\eta \ge 1.52$, as expected. As candidate fairness approaches zero, we see that power generally increases. However, when $\ell_n \in \{100, 200\}$, we also observe that the power of the test dips slightly as we approach perfect fairness. We conjecture that this may be related to the fact that the critical value generated by our bootstrap procedure could be less well-behaved at points near perfect fairness, as mentioned in the discussion following Theorem \ref{thm:consistent_test}. However, as expected given our theoretical results, this feature of the power curve disappears once the test sample is sufficiently large (in this case, $\ell_n = 400$).

\begin{figure}[htbp]
    \centering
    \begin{subfigure}[b]{0.45\textwidth}
        \centering
        \includegraphics[scale=0.45]{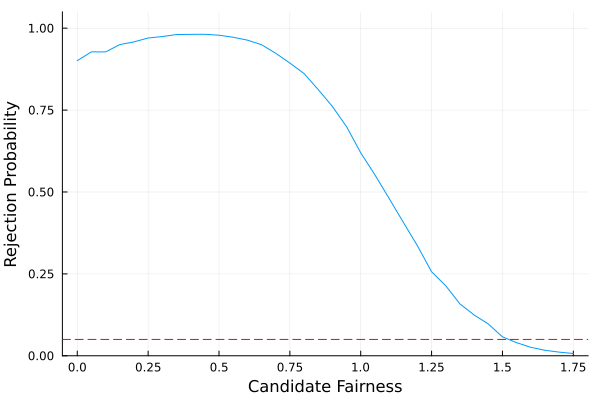}
        \caption{Sample size $\ell_n=100$}
        \label{fig:sub1}
    \end{subfigure} \\[1ex] % Add some vertical space between subfigures
    \begin{subfigure}[b]{0.45\textwidth}
        \centering
        \includegraphics[scale=0.45]{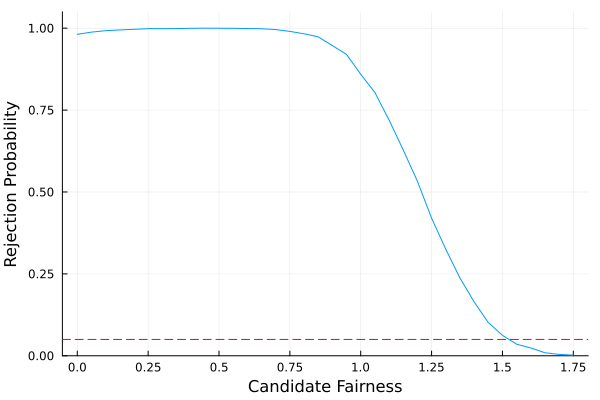}
        \caption{Sample size $\ell_n=200$}
        \label{fig:sub2}
    \end{subfigure} \\[1ex] % Add some vertical space between subfigures
    \begin{subfigure}[b]{0.45\textwidth}
        \centering
        \includegraphics[scale=0.45]{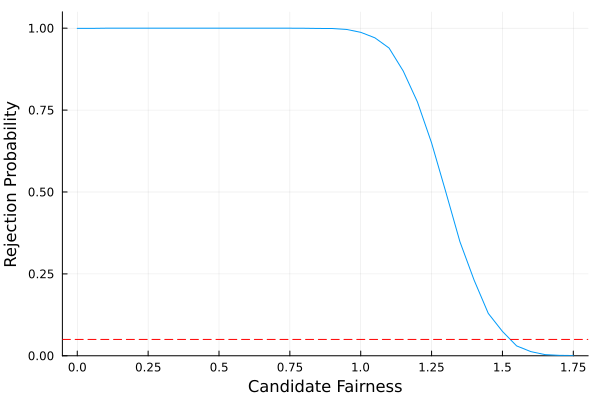}
        \caption{Sample size $\ell_n=400$}
        \label{fig:sub3}
    \end{subfigure}
    \caption{\footnotesize{Rejection probabilities for various levels of candidate fairness, for sample sizes $\ell_n \in \{100, 200, 400\}$.}}
    \label{fig:Bootpower}
\end{figure}

\section{Search for an Improving Algorithm} \label{sec:Search}
In this section we present a collection of systematic methods for finding a candidate algorithm which may improve over the status quo for two specific utility specifications: the classification rate utility (Example \ref{ex:Classification}) and the calibration utility (Example \ref{ex:Calibration}). Before proceeding, we note that our ultimate test for fairness/accuracy-improvability is valid regardless of the method used to construct the candidate algorithm, and good examples of possible methods are already available in the literature for some utility specifications (see Section~\ref{subsec:RelatedLit}.) With this in mind, our goal in this section is to provide methods which are specifically tailored to the problem of interest.

As in Section \ref{sec:af_improve}, let $a_0$ denote a status quo algorithm and let $\mathcal{A}$ denote some fixed class of algorithms. Our objective is to find a candidate algorithm $a_1 \in \mathcal{A}$ with which we can test for a fairness and/or accuracy improvement relative to $a_0$, using the test outlined in Section \ref{sec:Test}. For a given class of algorithms $\mathcal{A}$, let $\mathcal{U}(\mathcal{A})$ denote the feasible set of expected accuracy and fairness utilities that can be realized by algorithms in $\mathcal{A}$. Consider first the search for an algorithm that $\delta$-fairness improves on $a_0$. The following optimization problem
\begin{equation} \label{eq:optprob1}
\begin{aligned}
\inf_{U^r_F,U^b_F, U^r_A, U^b_A\in \mathbb{R}_+} \quad & |U^r_F - U^b_F|  \\
\textrm{s.t.} \quad & U^r_A > U^r_A(a_0)\\
  &   U^b_A > U^b_A(a_0) \\
  & (U^r_F,U^b_F, U^r_A, U^b_A) \in \mathcal{U}(\mathcal{A})~
\end{aligned}
\end{equation}
 returns the smallest possible unfairness, subject to feasibility and no reduction in accuracy. The status quo algorithm is $\delta$-fairness improvable if and only if the optimal value of this optimization problem is strictly less than $ |U^r_F(a_0) - U^b_F(a_0)|(1  -\delta)$. We could alternatively search for an algorithm that $\delta$-improves on accuracy for group (say, for instance) $r$ in the same way. Such an algorithm exists if and only if the optimal value of the following optimization problem is strictly greater than $U^r_A(a_0)(1 -\delta)$:
\begin{equation}\label{eq:optprob2}
\begin{aligned}
\sup_{U^r_F,U^b_F, U^r_A, U^b_A \in \mathbb{R}_+} \quad &  U^r_A\\
\textrm{s.t.} \quad & U^b_A > U^b_A(a_0) \\
  &  |U^r_F - U^b_F| < |U^r_F(a_0) - U^b_F(a_0)| \\
  & (U^r_F,U^b_F, U^r_A, U^b_A) \in \mathcal{U}(\mathcal{A})~.
\end{aligned}
\end{equation}

We propose to search for candidates $a_1 \in \mathcal{A}$ by solving feasible versions of the optimization problems \eqref{eq:optprob1}, \eqref{eq:optprob2} given a set of training data $\{(X_j, Y_j, G_j)\}_{j = 1}^m$. Let $\hat{U}^g_h(a)$ denote the empirical utility of algorithm $a$ for group $g \in \{r,b\}$, $h \in \{A, F\}$ computed using the sample $\{(X_j, Y_j, G_j)\}_{j = 1}^m$ (for the status quo algorithm $a_0$ we will often denote this as $\hat{U}^g_{0h}$). Let 
\[\widehat{\mathcal{U}}(\mathcal{A}) \equiv \{(\hat{U}^r_F(a),\hat{U}^b_F(a), \hat{U}^r_A(a), \hat{U}^b_A(a)) : a \in \mathcal{A}\}~,\]
be a sample analog of the feasible set.  A candidate for improvement could now be computed by solving the sample analog of optimization problem \eqref{eq:optprob1} given by

\begin{equation} \label{eq:optprob1_feas}
\begin{aligned}
\min_{U^r_F,U^b_F, U^r_A, U^b_A\in \mathbb{R}_+} \quad & |U^r_F - U^b_F|  \\
\textrm{s.t.} \quad & U^r_A \ge \hat{U}^r_{0A} + \iota\\
  &   U^b_A \ge \hat{U}^b_{0A} + \iota\\
  & (U^r_F,U^b_F, U^r_A, U^b_A) \in \widehat{\mathcal{U}}(\mathcal{A})~
\end{aligned}
\end{equation}

\noindent where $\iota \ge 0$ is some small positive constant. Solving \eqref{eq:optprob1_feas} may not be tractable in general. In the rest of this section we consider tractable versions of problem \eqref{eq:optprob1_feas} for specific utility specifications, by reformulating \eqref{eq:optprob1_feas} as a mixed integer linear program (MILP). Modern optimization libraries are capable of solving MILPs of moderate size \citep[recent applications in econometrics include][]{florios2008exact, kitagawa2018should,chen2018best,mbakop2021model}.

\subsection{Classification Utility}
Consider a setting with $\mathcal{Y} = \mathcal{D} = \{0, 1\}$, $w_A = w_F = w(x,y,d) = 1$, and $u_A = u_F = u(x, y, d) = \mathbbm{1}(y = d)$, so that our objective is to find a \emph{classifier} $a : \mathcal{X} \rightarrow \{0, 1\}$. Moreover, we fix $\mathcal{A}$ to be the set of linear classifiers:
\begin{definition} Algorithm $a: \mathbb{\mathcal{X}} \rightarrow \{0,1\}$ is a \emph{linear classifier} if $a(x) = \mathbbm{1}(\beta^T x \geq 0)$ for some $\beta \in \mathbb{R}^d$. From now on we use $\mathcal{A}$ to denote the set of all linear classifiers. (We assume that each vector $X_i$ contains a $1$ in its first entry, so it is without loss to set the threshold to zero.)
\end{definition} 
Given these restrictions, we can operationalize \eqref{eq:optprob1_feas} via the following mixed integer linear program (MILP) formulation of the feasible set $\widehat{\mathcal{U}}(\mathcal{A})$ (note here that we set the fairness and accuracy utilities to be equal, so that the feasible set consists of utility \emph{pairs}):

\begin{equation}\label{eq:e_hat}
\widehat{\mathcal{U}}(\mathcal{A}) = 
  \left\{(U^r, U^{b}) \in [0, 1]^2 \ 
    \begin{tabular}{|l}
      $U^r = \frac{1}{n_r}\sum_{G_j = r}\left((1 - Y_j) + (2Y_j - 1)D_j\right)$ \\[4pt]
      $U^b = \frac{1}{n_b}\sum_{G_j = b}\left((1 - Y_j) + (2Y_j - 1)D_j\right)$ \\[4pt]
      $\frac{X_j^T\beta}{C_j} < D_j \le 1 + \frac{X_j^T\beta}{C_j} \,\, \forall \hspace{1mm} 1 \le j \le m, \text{for some $\beta \in \mathcal{B}$}$ \\[4pt]
      $D_j\in \{0, 1\} \,\, \forall \hspace{1mm} 1 \le j \le m$
\end{tabular}
\right\}
\end{equation}
where $n_g = \sum_{G_j = g} 1$, and $C_j$ is chosen to satisfy $C_j > \sup_{\beta \in \mathcal B} |X_j^T \beta|$, with $\mathcal B$ some compact set such that we restrict $\beta \in \mathcal B$. The first, second and fourth conditions guarantee that  there is binary vector of decisions $(D_j)_{j = 1}^m$ (where $D_j = 1$ corresponds to classifying individual $j$ as $1$) yielding sample utilities $U^r$ and $U^b$ under classification loss. The inequalities in the third condition further constrain that these decisions are consistent with a linear classifier, since if $X_j^T\beta$ is weakly positive then the constraint  $D_j\in \{0,1\}$ implies $D_j =1$, while if $X_j^T\beta$ is strictly negative then the constraint $D_j \in \{0,1\}$ implies $D_j=0$. We can then use this characterization of $\widehat{\mathcal{U}}(\mathcal{A})$ to construct the following MILP formulation of \eqref{eq:optprob1_feas}: 

\begin{equation}\label{eq:find_fair}
\begin{aligned}
\min_{(t, U^r , U^b) \in [0, 1]^3} \quad  & t \\
\textrm{s.t.} \quad & t \ge U^r - U^b \\
& t \ge -(U^r - U^b) \\
& U^r \ge \hat{U}^r_{0} + \iota \\
& U^b \ge \hat{U}^b_{0} + \iota \\
& (U^r, U^b) \in \widehat{\mathcal{U}}(\mathcal{A})~,
\end{aligned}
\end{equation}
where we note that the first two constraints reformulate the absolute value in the objective function as a set of linear inequalities. We could also operationalize a feasible version of problem \eqref{eq:optprob2} as a MILP as well:
\begin{equation}\label{eq:find_pareto}
\begin{aligned}
\max_{(U^r, U^b) \in [0, 1]^2} \quad & U^r \\
\textrm{s.t.} \quad & U^b \ge \hat{U}^b_{0} + \iota\\ 
  &U^r - U^b \le |\hat{U}^r_{0} - \hat{U}^b_{0}| - \iota  \\
  &U^r - U^b \ge -|\hat{U}^r_{0} - \hat{U}^b_{0}| + \iota \\
  & (U^r, U^b) \in \widehat{\mathcal{U}}(\mathcal{A})~.
\end{aligned}
\end{equation}

Let $\beta_f$ denote a coefficient vector that achieves the minimum value of problem \eqref{eq:find_fair}. Then a candidate algorithm for $\delta$-fairness improvement is
$a_f(x) = \mathbbm{1}(\beta_f^T  x \geq 0)$. Let $\beta_r$ denote a coefficient vector that achieves the minimum value of of problem \eqref{eq:find_pareto}. Then a candidate algorithm for an $\delta$ -accuracy  improvement for group $r$ is $a_r(x) = \mathbbm{1}(\beta_r^T   x \geq 0)$.

\subsection{Calibration Utility}
Consider a setting in Example~\ref{ex:Calibration}: we have $\mathcal{Y} \subseteq \mathbb{R}$, $\mathcal{D} = \{0,1\}$,
\[
u(x,y,d) = y \mathbbm{1}(d=1), \quad w(x,y,d)=\mathbbm{1}(d=1), \quad u_A=u_F=u, \quad w_A=w_F=w. 
\]
Our objective is to find a linear classifier. As in our empirical application, we set the capacity constraint by restricting to algorithms that select a fraction $\kappa$ of the population.
With a slight abuse of notation, let $G_i \equiv 1$ when $G_i=b$; otherwise let $G_i \equiv 0$.
The error pair $(U^r, U^b) \in \mathbb{R}_+^2$ is in the feasible set $\hat{\mathcal{U}}(\mathcal{A})$ if and only if there exist $\beta \in \mathcal{B}$ such that
\begin{align}
\label{eq:feasible_calibration}
\left\{
\begin{matrix}
    \displaystyle{U^b = \left(\sum_i G_i D_i \right)^{-1} \sum_{i} Y_i G_i D_i} \\[8pt]
    \displaystyle{U^r = \left(\sum_i (1-G_i) D_i \right)^{-1} \sum_{i} Y_i (1-G_i) D_i}\\[8pt]
    \displaystyle{\sum_i D_i = \kappa m} \\[8pt]
    \frac{X_i^\top \beta}{C_i} < D_i \leq 1 + \frac{X_i^\top \beta}{C_i} 
    \ \forall i\\[8pt]
    D_i \in \{0,1\} \ \forall i
\end{matrix}
\right.
\end{align}
where $m$ in the third condition denotes the size of the train set.

We can formulate mixed integer programming (MIP) corresponding to \eqref{eq:find_fair} and \eqref{eq:find_pareto} for the current $\hat{\mathcal{U}}(\mathcal{A})$. However, the optimization problem is not linear in decision variables since the first and second conditions in \eqref{eq:feasible_calibration} are nonlinear. Below, we show that we could operationalize the problem by transforming it into a MILP.

To describe the transformation, let us introduce some variables: 
\[
    z_{1i} \equiv Y_i G_i, \
    z_{0i} \equiv Y_i (1 - G_i),\
    z_1 \equiv (z_{1i})_i, \
    z_0 \equiv (z_{0i})_i, \
    d \equiv (D_i)_i, \
    g \equiv (G_i)_i.
\]
Also, let $\mathbbm{1}$ denote the vector whose elements are all $1$. Then MIP corresponding \eqref{eq:find_fair}, say $(P)$, can be written as follows:

\[
\mathrm{(P)}
\left[
\begin{matrix}
    \displaystyle{\min_{\beta, d, U^b, U^r}} & \displaystyle{\left| U^b - U^r \right|} \\[8pt]
    \mathrm{s.t.}
    & U^b = \displaystyle{\frac{z_1^\top d}{g^\top d}} \\[8pt]
    & U^r = \displaystyle{\frac{z_0^\top d}{(1 - g)^\top d}} \\[8pt]
    & U^b \geq \hat U^b_0 + \iota \\
    & U^r \geq \hat U^r_0 + \iota \\
    & \displaystyle{\1^\top d = \kappa m} \\
    & \displaystyle{\frac{X_i^\top \beta}{C_i}} < D_i \leq 1 + \displaystyle{\frac{X_i^\top \beta}{C_i}}  \\[8pt]
    & D_i \in \{0,1\}, \beta \in \mathcal{B}
\end{matrix}
\right. 
\]
Note that for each group there exists at least one $i$ with $D_i=1$ at the optimal solution of $(P)$; otherwise the value of the objective function is $\infty$.
Consider the following change of variables:
\begin{align}
    t_1 \equiv \frac{1}{g^\top d}, \quad
    t_0 \equiv \frac{1}{(1-g)^\top d}, \quad
    s_1 \equiv t_1 d,\quad
    s_0 \equiv t_0 d.
        \label{eq:new_vars_MILP}
\end{align}
Then we can formulate the following MILP that corresponds to the original problem: 
\[
\mathrm{(P')}
\left[
\begin{matrix}
    \displaystyle{\min_{
    \substack{\beta, d, U^b, U^r, \\
    s_{1}, s_{0}, t_1, t_0
    }}} & \displaystyle{\left|U^b - U^r \right|} \\[8pt]
    \mathrm{s.t.}
    & U^b = z_1^\top s_1 \\
    & U^r = z_0^\top s_0 \\
    & U^b \geq \hat U^b_0 + \iota \\
    & U^r \geq \hat U^r_0 + \iota \\
    & \displaystyle{\1^\top d = \kappa m} \\
    & g^\top s_1 = 1 \\
    & (1-g)^\top s_0 = 1 \\[8pt]
    & \displaystyle{\frac{X_i^\top \beta}{C_i}} < D_{i} \leq \displaystyle{\left( 1 + \frac{X_i^\top \beta}{C_i}\right)} \ \forall i \\[8pt]
    & s_{1i} \leq t_1 \ \forall i \\
    & s_{1i} \leq D_{i} \ \forall i \\
    & s_{1i} \geq t_1 - (1 - D_{i})  \ \forall i \\
    & s_{0i} \leq t_0  \ \forall i \\
    & s_{0i} \leq D_{i}  \ \forall i \\
    & s_{0i} \geq t_0 - (1 - D_{i})  \ \forall i \\
    & t_1 \in (0,1], t_0 \in (0,1], s_{1i} \geq 0, s_{0i} \geq 0, U^b \geq 0, U^r \geq 0, \\
    & D_i \in \{0,1\}, \beta \in \mathcal{B}
\end{matrix}
\right.
\]

The following proposition guarantees that we can utilize the MILP $(P')$ to find a candidate algorithm for calibration utilities. Therefore, if the problem size is moderate, we can find a candidate algorithm using modern optimization libraries.

\begin{proposition}
\label{prop:milp_calibration}
    The optimal values of $(P)$ and $(P')$ coincide. Moreover, the optimal solution to $(P)$ can be computed by using the optimal solution to $(P')$.
\end{proposition}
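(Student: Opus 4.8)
The plan is to prove the proposition by exhibiting a value-preserving correspondence between the feasible points of $(P)$ on which the objective is finite and the feasible points of $(P')$, obtained through the change of variables \eqref{eq:new_vars_MILP}; equality of the optimal values and recoverability of a $(P)$-optimum from a $(P')$-optimum are then immediate. At the outset I would record the convention (already flagged in the text preceding the proposition) that only solutions of $(P)$ with $g^\top d \ge 1$ and $(1-g)^\top d \ge 1$ are relevant, since every other point produces a $0/0$ in the defining equalities for $U^b$ and $U^r$ and hence has infinite objective, so is never optimal.

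First I would check the direction from $(P)$ to $(P')$: given $(\beta, d, U^b, U^r)$ feasible for $(P)$, define $t_1, t_0, s_1, s_0$ by \eqref{eq:new_vars_MILP} and verify that the extended tuple is feasible for $(P')$ with the identical objective $|U^b - U^r|$. The only non-routine points are that $t_1, t_0 \in (0,1]$, which holds because $g^\top d$ and $(1-g)^\top d$ are \emph{positive integers}, and that $s_{1i} = t_1 D_i$ and $s_{0i} = t_0 D_i$ obey the linking inequalities of $(P')$, which follows by treating $D_i = 0$ and $D_i = 1$ separately and using $t_1, t_0 \le 1$. The remaining $(P')$ constraints are then recovered from the algebraic identities $g^\top s_1 = t_1(g^\top d) = 1$, $z_1^\top s_1 = t_1(z_1^\top d) = U^b$ (and their $r$-analogues), while $\1^\top d = \kappa m$, the linear-classifier inequalities, the lower bounds $U^b \ge \hat U^b_0 + \iota$, $U^r \ge \hat U^r_0 + \iota$, and the integrality/range constraints on $d$ and $\beta$ are common to both programs; this gives $\mathrm{val}(P') \le \mathrm{val}(P)$. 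Next I would check the converse: starting from a feasible point of $(P')$, for each index $i$ the inequalities $0 \le s_{1i} \le t_1$, $s_{1i} \le D_i$, $s_{1i} \ge t_1 - (1-D_i)$, together with $D_i \in \{0,1\}$ and $t_1 \in (0,1]$, force $s_{1i} = t_1 D_i$ (and likewise $s_{0i} = t_0 D_i$); then $g^\top s_1 = 1$ yields $t_1 = 1/(g^\top d)$ with $g^\top d$ a positive integer, and $U^b = z_1^\top s_1 = z_1^\top d / (g^\top d)$ is exactly the first equality of $(P)$, and analogously for $U^r$. Hence $(\beta, d, U^b, U^r)$ is feasible for $(P)$ with finite objective $|U^b - U^r|$, so $\mathrm{val}(P) \le \mathrm{val}(P')$; applied to an optimal solution of $(P')$, simply discarding the auxiliary coordinates $(s_1, s_0, t_1, t_0)$ leaves an optimal solution of $(P)$, from which the candidate classifier $x \mapsto \mathbbm{1}((\beta^\star)^\top x \ge 0)$ is read off.

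I expect the single load-bearing ingredient to be the lower bound $g^\top d \ge 1$, $(1-g)^\top d \ge 1$ on the denominators: it legitimizes the reciprocal substitution $t_1 = 1/(g^\top d)$, it makes the constant $1$ the correct upper bound (big-$M$) on $t_1, t_0$ in the linking inequalities, and on the $(P')$ side it is supplied automatically by $g^\top s_1 = 1$, $(1-g)^\top s_0 = 1$ together with integrality of $d$. Everything else is the standard Charnes--Cooper linearization of a linear-fractional objective composed with the textbook McCormick linearization of a product of a $[0,1]$-bounded continuous variable with a binary variable, and I would only remark in passing that the absolute value $|U^b - U^r|$ is replaced, as in \eqref{eq:find_fair}, by an auxiliary scalar $t$ with $t \ge U^b - U^r$ and $t \ge -(U^b - U^r)$ under minimization, so that $(P')$ is genuinely a MILP; this reformulation is innocuous and does not affect the correspondence.
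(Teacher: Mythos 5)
Your proof is correct and follows essentially the same route as the paper's: a value-preserving two-way correspondence via the change of variables \eqref{eq:new_vars_MILP}, with the $(P')\to(P)$ direction hinging on the observation that the McCormick-type linking inequalities together with $D_i\in\{0,1\}$ and $t_1,t_0\in(0,1]$ force $s_{1i}=t_1 D_i$ and $s_{0i}=t_0 D_i$. The only cosmetic difference is that you also flag the Charnes--Cooper/McCormick provenance and make explicit why $g^\top d$ and $(1-g)^\top d$ end up being positive integers, which the paper leaves implicit.
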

\begin{proof}
    Denote the optimal values of $(P)$ and $(P')$ by OPT$(P)$ and OPT$(P')$, respectively.
    For each solution feasible in $(P)$, there is a corresponding solution feasible in $(P')$.
    Specifically, given $(\beta, d, U^r, U^b)$ feasible in $(P)$, $(\beta, d, U^r, U^b, s_1, s_0, t_1, t_0)$, where $(s_1, s_0, t_1, t_0)$ are defined in \eqref{eq:new_vars_MILP}, is feasible in $(P')$ and gives the same value of the objective funtion.
    Thus, OPT$(P')$ $\leq$ OPT$(P)$.
    
    To show OPT$(P')$ $\geq$ OPT$(P)$, we show that for each feasible solution in $(P')$, there exists a corresponding solution feasible in $(P)$ with the same value of the objective function.
    It suffices to show that
    \[
    \frac{s_{1i}}{t_1} = \frac{s_{0i}}{t_0} = D_i,
    \]
    for each $i$
    if $(\beta, d, U^r, U^b, s_1, s_0, t_1, t_0)$ is feasible in $(P')$, which guarantees that \eqref{eq:new_vars_MILP} indeed holds and $(\beta, d, U^r, U^b)$ is feasible in $(P)$.
    Below, we will show that
    the conditions
    \begin{align}
        s_{1i} &\leq t_1, \label{eq:milp_cond_1}\\
        s_{1i} &\leq D_i, \label{eq:milp_cond_2} \\
        s_{1i} &\geq t_1 - (1 - D_i), \label{eq:milp_cond_3} \\
        s_{1i} &\geq 0, \text{ and } \label{eq:milp_cond_4}\\
        t_1 &\in (0,1] \label{eq:milp_cond_5}, 
    \end{align}
    imply $s_{1i}/t_1 = D_i$ ($s_{0i}/t_0 = D_i$ can be shown in the same manner).
    Suppose that $D_i=1$. \eqref{eq:milp_cond_3} implies $s_{1i} \geq t_i$. Then, \eqref{eq:milp_cond_1} implies $s_{1i}=t_1$, and thus we have $s_{1i}/t_1 = 1 = D_i$.
    Next, suppose that $D_i=0$. \eqref{eq:milp_cond_2} and \eqref{eq:milp_cond_4} imply $s_{1i}=0$. By \eqref{eq:milp_cond_5}, we have $s_{1i}/t_1 = 0 = D_i$.
 \end{proof}

The problem corresponding to \eqref{eq:find_pareto} can be formulated in a similar manner.

\section{More detail about the legal framework}
\subsection{Title VII of the Civil Rights Act of 1964}
The exact legal process for evaluating a disparate impact case depends on the context and in particular the law prohibiting the behavior. Different laws provide different rights, remedies, and standards of evidence. However, there is a general framework that is commonly used as a blueprint. The framework was developed in the context of Title VII of the Civil Rights Act of 1964, which concerns employment discrimination.\footnote{Historically, the framework was introduced by the US Supreme Court in the case \cite{griggs1971}. The Civil Rights Act of 1964 was later amended by the Civil Rights Act of 1991 to codify the framework.} In particular, the law prohibits employers from using ``a facially neutral employment practice that has an unjustified adverse impact on members of a protected class. A facially neutral employment practice is one that does not appear to be discriminatory on its face; rather it is one that is discriminatory in its application or effect.’’

The process itself is outlined in 42 U.S. Code § 2000e–2(k) which reads
\begin{itemize}
\item[(k)] Burden of proof in disparate impact cases
\begin{itemize}
\item[(1)] (A) An unlawful employment practice based on disparate impact is established under this subchapter only if---
\begin{itemize}
\item[(i)] a complaining party demonstrates that a respondent uses a particular employment practice that causes a disparate impact on the basis of race, color, religion, sex, or national origin and the respondent fails to demonstrate that the challenged practice is job related for the position in question and consistent with business necessity; or

\item[(ii)] the complaining party makes the demonstration described in subparagraph (C) with respect to an alternative employment practice and the respondent refuses to adopt such alternative employment practice.
\end{itemize}

(B) \begin{itemize}
\item[(i)] With respect to demonstrating that a particular employment practice causes a disparate impact as described in subparagraph (A)(i), the complaining party shall demonstrate that each particular challenged employment practice causes a disparate impact, except that if the complaining party can demonstrate to the court that the elements of a respondent's decision making process are not capable of separation for analysis, the decision making process may be analyzed as one employment practice.

\item[(ii)] If the respondent demonstrates that a specific employment practice does not cause the disparate impact, the respondent shall not be required to demonstrate that such practice is required by business necessity.
\end{itemize}

(C) The demonstration referred to by subparagraph (A)(ii) shall be in accordance with the law as it existed on June 4, 1989, with respect to the concept of ``alternative employment practice".

\item[(2)] A demonstration that an employment practice is required by business necessity may not be used as a defense against a claim of intentional discrimination under this subchapter.

\item[(3)] Notwithstanding any other provision of this subchapter, a rule barring the employment of an individual who currently and knowingly uses or possesses a controlled substance, as defined in schedules I and II of section 102(6) of the Controlled Substances Act (21 U.S.C. 802(6)), other than the use or possession of a drug taken under the supervision of a licensed health care professional, or any other use or possession authorized by the Controlled Substances Act [21 U.S.C. 801 et seq.] or any other provision of Federal law, shall be considered an unlawful employment practice under this subchapter only if such rule is adopted or applied with an intent to discriminate because of race, color, religion, sex, or national origin.
\end{itemize}
\end{itemize}

This process was originally designed to be used by parties seeking judicial enforcement, and so is adversarial. For example, the complaining party may be an employee who is asking a judge to prohibit a potentially discriminatory employment practice and the respondent is the employer implementing the practice. In this case, the judge would evaluate the three parts of the process (the prima facie case, the business necessity defense, and the existence of suitable alternatives), and then determine whether the practice is prohibited under Title VII. 

\subsection{Title VI of the Civil Rights Act of 1964}

Many policies of interest to economists are implemented by organizations that receive federal funding. For example, many hospitals receive funding from Medicare or Hill-Burton. Because these organizations receive federal funding, their policies are prohibited from having a disparate impact by Title VI of the Civil Rights Act of 1964. This prohibition is described in 28 C.F.R. § 42.104(b)(2) which reads

\begin{itemize}
\item[(2)] A recipient, in determining the type of disposition, services, financial aid, benefits, or facilities which will be provided under any such program, or the class of individuals to whom, or the situations in which, such will be provided under any such program, or the class of individuals to be afforded an opportunity to participate in any such program, may not, directly or through contractual or other arrangements, utilize criteria or methods of administration which have the effect of subjecting individuals to discrimination because of their race, color, or national origin, or have the effect of defeating or substantially impairing accomplishment of the objectives of the program as respects individuals of a particular race, color, or national origin.
\end{itemize}

A key difference between that of a Title VII disparate impact case, is that in the setting of Title VI there is no private right to action \citep[see, specifically,][]{alexander2001}. That is, individuals cannot file civil actions relying on the Title VI disparate impact standard. Instead, it is the responsibility of the funding agency to enforce the law, which can take place both by investigating individual complaints and conducting compliance reviews. However, to fulfill this responsibility, the Department of Justice still recommends using the process outlined for Title VII described above. In particular, the Department of Justice Civil Rights Division’s Title VI Manual (available at \url{https://www.justice.gov/crt/fcs/T6manual}) writes:

``In administrative investigations, this court-developed burden shifting framework serves as a useful paradigm for organizing the evidence. Agency investigations, however, often follow a non-adversarial model in which the agency collects all relevant evidence then determines whether the evidence establishes discrimination. Under this model, agencies often do not shift the burdens between complainant and recipient when making findings. For agencies using this method, the following sections serve as a resource for conducting an investigation and developing an administrative enforcement action where appropriate.’’

In addition, while the Title VI Manual identifies the funding agency as ultimately  responsible for evaluating less discriminatory alternatives, it also writes that 
``Although agencies bear the burden of evaluating less discriminatory alternatives, agencies sometimes impose additional requirements on recipients to consider alternatives before taking action. These requirements can affect the legal framework by requiring recipients to develop the evidentiary record related to alternatives as a matter of course, before and regardless of whether an administrative complaint is even filed. Such requirements recognize that the recipient is in the best position to complete this task, having the best understanding of its goals, and far more ready access to the information necessary to identify alternatives and conduct a meaningful analysis.''

\section{Supplementary Material to Section~\ref{sec:Application}}

\subsection{Robustness checks related to selection rule choice}
\label{app:OtherTables}

We report here the point estimates of accuracy and fairness for both the candidate and status quo algorithms, as well as the $p$-values that emerge from our statistical test of the null hypothesis that the status quo algorithm is not strictly FA-dominated, for LASSO (Table~\ref{tab:p_val_lasso}) and linear regression (Table~\ref{tab:p_val}). The results are nearly identical to those reported in Table~\ref{tab:p_val_rf} for random forests.

\begin{table}[hbt]

    \footnotesize
    \begin{tabular}{
        @{}
        l
        *{3}{c}
        |
        *{3}{c}
        |
        *{3}{c}
        |
        c
        @{}
    }
    \toprule
    & \multicolumn{3}{c}{\bfseries Accuracy (Black)} & \multicolumn{3}{c}{\bfseries Accuracy (White)} & \multicolumn{3}{c}{\bfseries Unfairness} &  \\
    \cmidrule(lr){2-4} \cmidrule(lr){5-7} \cmidrule(lr){8-10} \cmidrule(l){11-11}
         & {$a_1$} & {$a_0$} & {$p_b$} & {$a_1$} & {$a_0$} & {$p_w$} & {$a_1$} & {$a_0$} & {$p_f$} & {$p$} \\
    \midrule
        Iteration 1 & 7.18 & 6.33 & 0.0005 & 7.27 & 5.14 & 0.0000 & 0.10 & 1.19 & 0.0000 & 0.0005 \\
        Iteration 2 & 7.39 & 6.32 & 0.0001 & 7.30 & 5.11 & 0.0000 & 0.09 & 1.20 & 0.0000 & 0.0001 \\
        Iteration 3 & 7.52 & 6.67 & 0.0002 & 7.11 & 5.15 & 0.0000 & 0.41 & 1.52 & 0.0000 & 0.0002 \\
        Iteration 4 & 7.20 & 6.35 & 0.0001 & 7.25 & 5.06 & 0.0000 & 0.05 & 1.28 & 0.0000 & 0.0001 \\
        Iteration 5 & 7.58 & 6.88 & 0.0040 & 7.27 & 5.27 & 0.0000 & 0.30 & 1.61 & 0.0000 & 0.0040 \\
        Iteration 6 & 7.97 & 6.52 & 0.0000 & 7.34 & 5.02 & 0.0000 & 0.63 & 1.51 & 0.0022 & 0.0022 \\
        Iteration 7 & 7.59 & 6.74 & 0.0009 & 7.34 & 5.19 & 0.0000 & 0.25 & 1.55 & 0.0000 & 0.0009 \\
    \bottomrule
    \end{tabular}
    
    \caption{\footnotesize{The candidate algorithm $a_1$ in the table is based on LASSO. Reported $p$-values are computed via bootstrap with 10,000 iterations. The median $p$-value $p_{\mathrm{med}}$ is 0.0005.}}
    \label{tab:p_val_lasso}

\end{table}

\begin{table}[hbt]

    \footnotesize
    \begin{tabular}{
        @{}
        l
        *{3}{c}
        |
        *{3}{c}
        |
        *{3}{c}
        |
        c
        @{}
    }
    \toprule
    & \multicolumn{3}{c}{\bfseries Accuracy (Black)} & \multicolumn{3}{c}{\bfseries Accuracy (White)} & \multicolumn{3}{c}{\bfseries Unfairness} &  \\
    \cmidrule(lr){2-4} \cmidrule(lr){5-7} \cmidrule(lr){8-10} \cmidrule(l){11-11}
         & {$a_1$} & {$a_0$} & {$p_b$} & {$a_1$} & {$a_0$} & {$p_w$} & {$a_1$} & {$a_0$} & {$p_f$} & {$p$} \\
    \midrule
        Iteration 1 & 7.31 & 6.33 & 0.0000 & 7.27 & 5.14 & 0.0000 & 0.04 & 1.19 & 0.0000 & 0.0000 \\
        Iteration 2 & 7.39 & 6.32 & 0.0001 & 7.23 & 5.11 & 0.0000 & 0.15 & 1.20 & 0.0000 & 0.0001 \\
        Iteration 3 & 7.49 & 6.67 & 0.0003 & 7.06 & 5.15 & 0.0000 & 0.43 & 1.52 & 0.0000 & 0.0003 \\
        Iteration 4 & 7.35 & 6.35 & 0.0000 & 7.27 & 5.06 & 0.0000 & 0.08 & 1.28 & 0.0000 & 0.0000 \\
        Iteration 5 & 7.58 & 6.88 & 0.0036 & 7.26 & 5.27 & 0.0000 & 0.32 & 1.61 & 0.0000 & 0.0036 \\
        Iteration 6 & 7.96 & 6.52 & 0.0000 & 7.26 & 5.02 & 0.0000 & 0.70 & 1.51 & 0.0049 & 0.0049 \\
        Iteration 7 & 7.65 & 6.74 & 0.0004 & 7.37 & 5.19 & 0.0000 & 0.28 & 1.55 & 0.0001 & 0.0004 \\
    \bottomrule
    \end{tabular}
    
    \caption{\footnotesize{The candidate algorithm $a_1$ in the table is based on linear regression. Reported $p$-values are computed via bootstrap with 10,000 iterations. The median $p$-value $p_{\mathrm{med}}$ is 0.0003.}}
    \label{tab:p_val}

\end{table}

\subsection{Robustness checks related to program effect}
\label{app:program_effect}
In our baseline analysis, we use the total number of active chronic illnesses in the year following enrollment in the management program as the outcome variable $Y$. As a robustness check, here we repeat this analysis using the total number of active chronic illnesses in the \emph{same} year as enrollment in the management program. We report point estimates of accuracy and fairness for both the candidate and status quo algorithms, along with corresponding $p$-values. The results are consistent with those observed in our baseline analysis: we reject the null hypothesis that it is not possible to reduce the algorithm's disparate impact without compromising on the accuracy of its predictions. For additional robustness checks related to the size of the program effect, see the supplementary materials of \cite{Obermeyer2019-te}.

\begin{figure}[h]
    \centering
    \includegraphics[height=9cm]{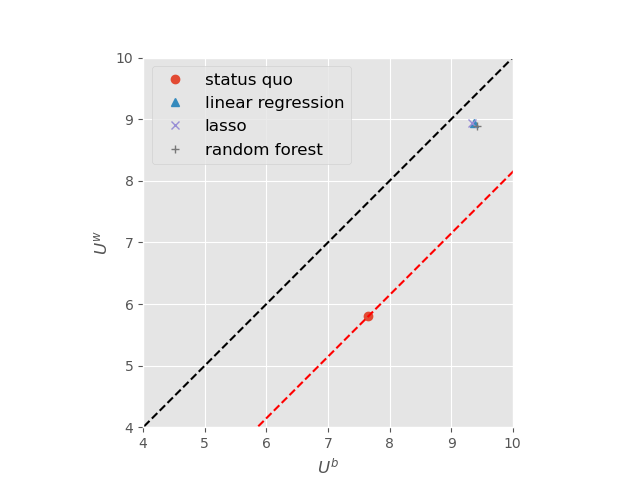}
   \caption{     \footnotesize{We report the average value of $U^g$ (i.e., number of active chronic diseases \emph{in the same year as the enrollment decision},  conditional on automatic enrollment) for each group $g$ across the $K=7$ iterations of our procedure. This figure is qualitatively similar to Figure \ref{fig:ML_improvement}.}}     \label{fig:ML_improvement_robustness}
\end{figure}

\begin{table}[hbt]

    \footnotesize % Consider setting the font size as desired
    \begin{tabular}{
        @{} % Removes leading space
        l % Left aligned for text in the first column
        *{3}{c} % Three columns with numeric content, aligned at the decimal point
        | % Vertical line to separate groups
        *{3}{c} % Repeat for each group of three columns
        |
        *{3}{c}
        |
        c % Centered column for p
        @{} % Removes trailing space
    }
    \toprule
    & \multicolumn{3}{c}{\bfseries Accuracy (Black)} & \multicolumn{3}{c}{\bfseries Accuracy (White)} & \multicolumn{3}{c}{\bfseries Unfairness} &  \\
    \cmidrule(lr){2-4} \cmidrule(lr){5-7} \cmidrule(lr){8-10} \cmidrule(l){11-11}
         & {$a_1$} & {$a_0$} & {$p_b$} & {$a_1$} & {$a_0$} & {$p_w$} & {$a_1$} & {$a_0$} & {$p_f$} & {$p$} \\
    \midrule
        Iteration 1 & 9.35 & 7.43 & 0.0000 & 9.08 & 5.80 & 0.0000 & 0.27 & 1.62 & 0.0000 & 0.0000 \\
        Iteration 2 & 9.14 & 7.54 & 0.0000 & 8.91 & 5.85 & 0.0000 & 0.22 & 1.70 & 0.0000 & 0.0000 \\
        Iteration 3 & 9.27 & 7.82 & 0.0000 & 8.93 & 5.92 & 0.0000 & 0.34 & 1.90 & 0.0000 & 0.0000 \\
        Iteration 4 & 9.26 & 7.43 & 0.0000 & 8.83 & 5.67 & 0.0000 & 0.43 & 1.76 & 0.0000 & 0.0000 \\
        Iteration 5 & 9.40 & 8.14 & 0.0000 & 8.88 & 5.92 & 0.0000 & 0.53 & 2.22 & 0.0000 & 0.0000 \\
        Iteration 6 & 9.61 & 7.33 & 0.0000 & 9.01 & 5.66 & 0.0000 & 0.61 & 1.67 & 0.0003 & 0.0003 \\
        Iteration 7 & 9.59 & 7.91 & 0.0000 & 8.92 & 5.80 & 0.0000 & 0.67 & 2.11 & 0.0000 & 0.0000 \\
    \bottomrule
    \end{tabular}
    
    \caption{\footnotesize{The candidate algorithm $a_1$ in the table is based on linear regression. Reported $p$-values are computed via bootstrap with 10,000 iterations. The median $p$-value is smaller than 0.0001.}}
    \label{tab:p_val_linreg_robustness}

\end{table}

\begin{table}[hbt]

    \footnotesize % Consider setting the font size as desired
    \begin{tabular}{
        @{} % Removes leading space
        l % Left aligned for text in the first column
        *{3}{c} % Three columns with numeric content, aligned at the decimal point
        | % Vertical line to separate groups
        *{3}{c} % Repeat for each group of three columns
        |
        *{3}{c}
        |
        c % Centered column for p
        @{} % Removes trailing space
    }
    \toprule
    & \multicolumn{3}{c}{\bfseries Accuracy (Black)} & \multicolumn{3}{c}{\bfseries Accuracy (White)} & \multicolumn{3}{c}{\bfseries Unfairness} &  \\
    \cmidrule(lr){2-4} \cmidrule(lr){5-7} \cmidrule(lr){8-10} \cmidrule(l){11-11}
         & {$a_1$} & {$a_0$} & {$p_b$} & {$a_1$} & {$a_0$} & {$p_w$} & {$a_1$} & {$a_0$} & {$p_f$} & {$p$} \\
    \midrule
        Iteration 1 & 9.31 & 7.43 & 0.0000 & 9.04 & 5.80 & 0.0000 & 0.27 & 1.62 & 0.0000 & 0.0000 \\
        Iteration 2 & 9.20 & 7.54 & 0.0000 & 8.92 & 5.85 & 0.0000 & 0.28 & 1.70 & 0.0000 & 0.0000 \\
        Iteration 3 & 9.25 & 7.82 & 0.0000 & 8.95 & 5.92 & 0.0000 & 0.30 & 1.90 & 0.0000 & 0.0000 \\
        Iteration 4 & 9.28 & 7.43 & 0.0000 & 8.84 & 5.67 & 0.0000 & 0.44 & 1.76 & 0.0000 & 0.0000 \\
        Iteration 5 & 9.43 & 8.14 & 0.0000 & 8.89 & 5.92 & 0.0000 & 0.54 & 2.22 & 0.0000 & 0.0000 \\
        Iteration 6 & 9.46 & 7.33 & 0.0000 & 8.96 & 5.66 & 0.0000 & 0.50 & 1.67 & 0.0001 & 0.0001 \\
        Iteration 7 & 9.44 & 7.91 & 0.0000 & 8.99 & 5.80 & 0.0000 & 0.45 & 2.11 & 0.0000 & 0.0000 \\
    \bottomrule
    \end{tabular}
    
    \caption{\footnotesize{The candidate algorithm $a_1$ in the table is based on LASSO. Reported $p$-values are computed via bootstrap with size 10,000. The median $p$-value is smaller than 0.0001.}}
    \label{tab:p_val_lasso_robustness}

\end{table}

\begin{table}[hbt]

    \footnotesize % Consider setting the font size as desired
    \begin{tabular}{
        @{} % Removes leading space
        l % Left aligned for text in the first column
        *{3}{c} % Three columns with numeric content, aligned at the decimal point
        | % Vertical line to separate groups
        *{3}{c} % Repeat for each group of three columns
        |
        *{3}{c}
        |
        c % Centered column for p
        @{} % Removes trailing space
    }
    \toprule
    & \multicolumn{3}{c}{\bfseries Accuracy (Black)} & \multicolumn{3}{c}{\bfseries Accuracy (White)} & \multicolumn{3}{c}{\bfseries Unfairness} &  \\
    \cmidrule(lr){2-4} \cmidrule(lr){5-7} \cmidrule(lr){8-10} \cmidrule(l){11-11}
         & {$a_1$} & {$a_0$} & {$p_b$} & {$a_1$} & {$a_0$} & {$p_w$} & {$a_1$} & {$a_0$} & {$p_f$} & {$p$} \\
    \midrule
        Iteration 1 & 9.38 & 7.43 & 0.0000 & 8.91 & 5.80 & 0.0000 & 0.47 & 1.62 & 0.0004 & 0.0004 \\
        Iteration 2 & 9.31 & 7.54 & 0.0000 & 8.86 & 5.85 & 0.0000 & 0.45 & 1.70 & 0.0000 & 0.0000 \\
        Iteration 3 & 9.15 & 7.82 & 0.0000 & 8.90 & 5.92 & 0.0000 & 0.25 & 1.90 & 0.0000 & 0.0000 \\
        Iteration 4 & 9.31 & 7.43 & 0.0000 & 8.83 & 5.67 & 0.0000 & 0.48 & 1.76 & 0.0000 & 0.0000 \\
        Iteration 5 & 9.72 & 8.14 & 0.0000 & 8.94 & 5.92 & 0.0000 & 0.77 & 2.22 & 0.0000 & 0.0000 \\
        Iteration 6 & 9.51 & 7.33 & 0.0000 & 8.94 & 5.66 & 0.0000 & 0.57 & 1.67 & 0.0004 & 0.0004 \\
        Iteration 7 & 9.54 & 7.91 & 0.0000 & 8.89 & 5.80 & 0.0000 & 0.65 & 2.11 & 0.0000 & 0.0000 \\
    \bottomrule
    \end{tabular}
    
    \caption{\footnotesize{The candidate algorithm $a_1$ in the table is based on random forest. Reported $p$-values are computed via bootstrap with size 10,000. The median $p$-value is smaller than 0.0001.}}
    \label{tab:p_val_rf_robustness}

\end{table}

\subsection{Alternative Specification: Cost-Based Accuracy Measure}
\label{app:accuracy_cost}

Our baseline analysis uses identical utility functions to evaluate both accuracy and fairness. Here we demonstrate the flexibility of our framework by employing different utility functions for these two criteria. Specifically, while maintaining our original fairness measure based on expected health conditions, we modify the accuracy measure to reflect expected healthcare costs.

We implement this alternative specification using the same selection rule as in our baseline analysis, with the score assignment rule $\hat{f}$ trained to predict health conditions. It is worth noting that this creates an interesting tension: while our accuracy criterion now evaluates algorithms based on their ability to predict costs, our selection rule remains oriented toward predicting health conditions.

Figure~\ref{fig:2D_plot_cost_health_trainhealth} presents results from this modified analysis. In contrast to our baseline findings, we cannot reject the null hypothesis that the status quo algorithm is FA-dominated under these criteria. This result likely stems from two key factors. First, our selection rule is optimized for predicting health conditions rather than costs. Second, as documented by \cite{Obermeyer2019-te}, the status quo algorithm was specifically designed to predict healthcare costs. This suggests that when evaluated on its intended objective (cost prediction), the status quo algorithm performs relatively well, though potentially at the expense of other objectives, such as equitable prediction of health needs.

These findings should be interpreted with appropriate caution. The inability to reject the null may reflect limitations of our specific selection rule rather than a fundamental impossibility of improvement. Alternative selection rules, particularly those optimized for cost prediction, might yield different conclusions. This highlights the broader methodological point that selection rule choice can be consequential for testing FA-improvability.

\begin{figure}
    \centering
    \includegraphics[width=0.7\linewidth]{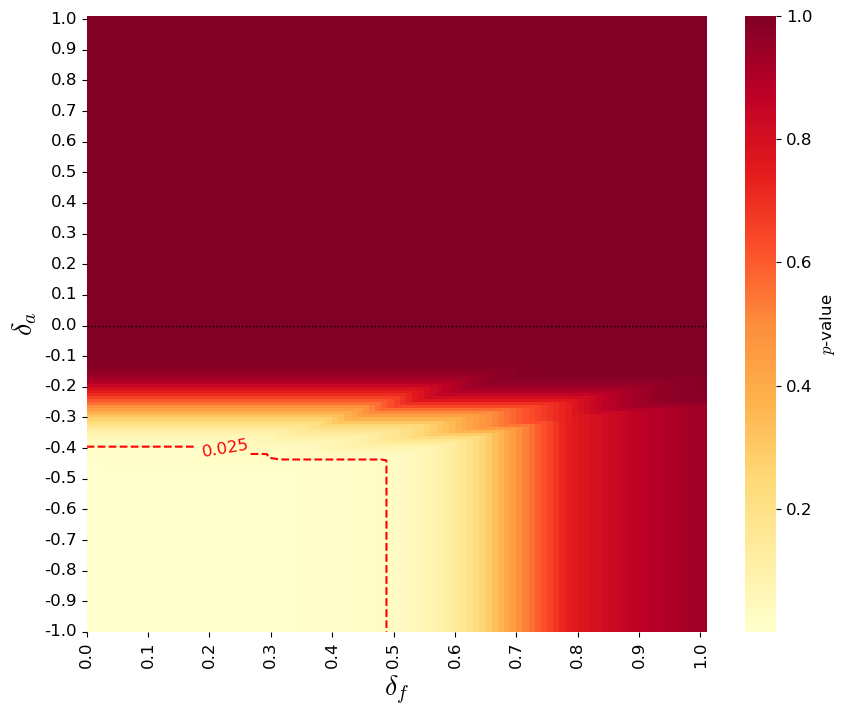}
    \caption{\footnotesize
    We present $p$-values for testing $(\delta_a, \delta_a, \delta_f)$-improvability for $(\delta_a, \delta_f) \in [-1,1] \times [0,1]$. The selection rule is based on random forests trained to predict expected health needs. As in the main text, we set $K = 7$ in our procedure. The fairness utility function measures the expected health needs of those selected into the program, while the accuracy utility functino measures the expected health costs of those selected. Larger values of $\delta_a$ and $\delta_f$ indicate a stricter test regarding the dimensions of fairness and accuracy, respectively. We find that the $p$-value exceeds $0.025$ at $(\delta_f, \delta_a) = (0,0)$, so we cannot reject the null that the status quo algorithm is not FA-improvable.}
    
    \label{fig:2D_plot_cost_health_trainhealth}
\end{figure}

\clearpage

\bibliographystyle{aer}
\bibliography{references}

\end{document}